         \let\geq=\geqslant
\newsavebox{\astrutbox}
\sbox{\astrutbox}{\rule[-5pt]{0pt}{20pt}}
\newcommand\Lxio{\mathcal{L}_{\xi_0}}
\newtheorem{definition}{Definition}
\newtheorem{remark}{Remark}
\newtheorem{theorem}{Theorem}
\newtheorem{lemma}{Lemma}
\title[General formulas for adiabatic invariants]{General formulas for adiabatic invariants in nearly-periodic Hamiltonian systems}
\author[J. W. Burby]%
{J.\ns W.\ns B\ls U\ls R\ls B\ls Y$^1$ and\ns J.\ns S\ls Q\ls U\ls I\ls R\ls E$^2$}
\affiliation{$^1$Los Alamos National Laboratory, Los Alamos, New Mexico 87545, USA\\
$^2$Physics Department, University of Otago, Dunedin 9016, New Zealand}
\date{?; revised ?; accepted ?. - To be entered by editorial office}
\begin{document}

\maketitle

\begin{abstract}
While it is well-known that every nearly-periodic Hamiltonian system possesses an adiabatic invariant, extant methods for computing terms in the adiabatic invariant series are inefficient. The most popular method involves the heavy intermediate calculation of a non-unique near-identity coordinate transformation, even though the adiabatic invariant itself is a uniquely-defined scalar. A less well-known method, developed by S. Omohundro,  avoids calculating intermediate sequences of coordinate transformations but is also inefficient as it involves its own sequence of complex intermediate calculations.
In order to improve the efficiency of future calculations of adiabatic invariants, we derive generally-applicable, readily computable formulas for the first several terms in the adiabatic invariant series. To demonstrate the utility of these formulas, we apply them to charged particle dynamics in a strong magnetic field and magnetic field-line dynamics when the field lines are nearly closed.
\end{abstract}

\begin{PACS}
...
\end{PACS}

\section{Introduction}

Adiabatic invariance historically played an essential role in the development of plasma physics, especially in the theory of charged particle motion in strong magnetic fields. See \cite{Cary_2009} for an in-depth review of the latter topic. While an adiabatic invariant is not a true conserved quantity, it is approximately conserved over large intervals of time, and is therefore just as good as a true invariant for many practical purposes. In this Article we will derive a new general formula for the adiabatic invariant associated with a \emph{nearly-periodic} Hamiltonian system. Such systems, along with their adiabatic invariants, were previously studied systematically in \cite{Kruskal_1962}.

Today the most popular method for computing adiabatic invariants involves near-identity coordinate transformations. First ``nice" coordinates are found in which the expression for the adiabatic invariant becomes simple. Then the inverse coordinate transformation is applied to find an expression for the adiabatic invariant in a simpler, more desirable coordinate system.  This approach is exemplified by Littlejohn's work on Hamiltonian formulations of guiding center dynamics in \cite{Littlejohn_1981}, \cite{Littlejohn_1982}, \cite{Littlejohn_1983}, and \cite{Littlejohn_1984}. Speaking more generally, at present there are (involved) \emph{procedures} for computing adiabatic invariants, but general-use \emph{formulas} for adiabatic invariants are unavailable.

The formula that we will obtain does not involve coordinate transformations. Instead it builds upon the coordinate-free ideas developed in \cite{Omohundro_1986} concerning the so-called \emph{roto-rate vector}. The roto-rate vector was first introduced in \cite{Kruskal_1962} as a vector field $\bm{R}$ that generates an approximate $U(1)$ symmetry for nearly-periodic systems. Kruskal recognized the physical and conceptual significance of the roto-rate vector, but did not know how to compute $\bm{R}$ without first introducing an infinite sequence of near-identity coordinate transformations. Over twenty years later, \cite{Omohundro_1986} showed that, in principle, $\bm{R}$ can be computed in any coordinate system without introducing near-identity coordinate transformations, and even gave an algorithm for carrying out the calculation order-by-order in perturbation theory. However, Omohundro's results stop short of providing general formulas for $\bm{R}$, presumably as a result of the cumbersome nature of his algorithm.

Our approach to deriving a general formula for a nearly-periodic Hamiltonian system's adiabatic invariant starts by improving Omohundro's algorithm for computing the roto-rate. The key to the improvement is recognizing that the messiest element of Omohundro's algorithm, namely enforcing that the integral curves of the roto-rate vector are $2\pi$-periodic, may be reimagined as a straightforward application of the famous Baker-Campbell-Hausdorff formula for the logarithm of composed exponentials. Using this improved algorithm we will push past Omohundro's results by deriving general-use, coordinate-independent formulas for the roto-rate. We will then feed these formulas into Noether's theorem for presymplectic Hamiltonian systems (see, e.g. \cite{Munteanu2014GeometricMF}) in order to identify coordinate-independent formulas for the adiabatic invariant.

Our principal motivation for deriving this new formula is a desire for computing adiabatic invariants in infinite-dimensional Hamiltonian systems. While coordinate transform methods (e.g. perturbative changes of dependent variables) can be applied to such systems, the complexity of the required calculations easily gets out of hand. Coordinate-independent formulas for a system's adiabatic invariant would bypass much of this tedium, and therefore comprise a more efficient route to the desired result.

That said, we will not present any infinite-dimensional example applications in this Article. Instead we will first use our new formula to reproduce the first two terms in the adiabatic invariant series for non-relativistic strongly magnetized charged particles. Then we will use our formula to calculate a coordinate-free expression for the field-line adiabatic invariant associated with a magnetic field whose lines of force are nearly closed. This adiabatic invariant defines approximate flux surfaces for this special class of magnetic fields, which includes near-axisymmetric-vacuum fields, and more generally any field that is close to an integrable field with constant rational rotational transform. It is worth remarking from the outset that this approximate flux function \emph{is not} provided by standard KAM theory, which crucially relies on unperturbed fields with non-vanishing shear.

As we derive the general formula we will make liberal use of the standard machinery for performing calculus on manifolds, which includes Lie derivatives, flows, pullbacks, differential forms, and Stoke's theorem. A complete and rigorous description of this machinery, along with a vast amount of useful information concerning the coordinate-independent approach to Hamiltonian systems, is given in \cite{Abraham_2008}. The recent tutorial \cite{MacKay_tutorial_2020} on differential forms for plasma physicists is also an invaluable resource. throughout the article we will adopt the notation $\fint Q(\theta)\,d\theta = (2\pi)^{-1}\int_0^{2\pi}Q(\theta)\,d\theta$ for averages over an angular variable $\theta\in U(1)$.

The systems that exhibit the adiabatic invariants we would like to compute have two essential features: (a) they are \emph{nearly-periodic}, and (b) they possess a Hamiltonian structure. Property (a) ensures the existence of the roto-rate vector, which may be thought of as an approximate $U(1)$-symmetry of the equations of motion. Property (b) enables the application of Noether's theorem to find an approximate conservation law, i.e. an adiabatic invariant, associated with this approximate symmetry. In order to explain and expand upon these points we will first discuss nearly-periodic systems that are not necessarily Hamiltonian. In particular we will derive a coordinate-free formula for the roto-rate vector associated with such a system. This discussion will form the content of Section \ref{section_roto_rate}. Then we will specialize to nearly-periodic systems that happen to possess (presymplectic) Hamiltonian structure. This specialization will ultimately lead to the formulas for the adiabatic invariant series in Section \ref{section_adiabatic}. As a way of illustrating the application of our formula we will use it in Section \ref{charged_particle_example} to compute the charged-particle adiabatic invariant, and again in Section \ref{field_line_invariant_sec} to derive a field-line adiabatic invariant for magnetic fields with field lines that are \emph{nearly closed}.

Readers who are interested in expressions for adiabatic invariants, but who are not interested in the derivation of such expressions may skip directly to Theorem \ref{main_result}. The relevant formulas are Eqs. \eqref{mu0_formula}-\eqref{mu3_formula}. Appendix \ref{app:how_to} provides the details of how to work with these formulas using index notation.

\section{Nearly-periodic systems and the roto-rate vector\label{section_roto_rate}}
A \emph{nearly-periodic system} is a two-timescale dynamical system whose short timescale dynamics is characterized by strictly periodic motion. Examples include masses conjoined by a stiff spring hung on the free end of a pendulum, and a charged particle in a strong magnetic field. For the sake of clarity the following definition of nearly-periodic systems will be useful.

\begin{definition}[nearly-periodic system]
A nearly-periodic system is a (possibly-infinite-dimensional) ordinary differential equation of the form $\dot{z} = \epsilon^{-1}V_\epsilon(z)$ with the following properties.
\begin{itemize}
\item The vector field $V_\epsilon$ depends smoothly on $\epsilon$ in a neighborhood of $0\in\mathbb{R}$.
\item The limiting vector field $V_0 = \omega_0\,\xi_0$. Here $\xi_0$ is a vector field with integral curves that are strictly periodic with period $2\pi$, and the \emph{frequency function} $\omega_0$ is a smooth, positive function that is constant along $\xi_0$'s integral curves.
\end{itemize}
\end{definition}

\begin{remark}
While the frequency function is not allowed to pass through zero, the vector field $\xi_0$ may do so. 
Therefore the limiting short timescale dynamics described by $V_0$ may have fixed points. In contrast, \cite{Kruskal_1962} requires that $V_0$ is nowhere vanishing. We have chosen to relax Kruskal's stronger assumption because his theory really only requires a non-vanishing frequency function. Moreover zeros of $V_0$ do occur in practice, and indicate the presence of a so-called slow manifold. (C.f. \citep{MacKay_2004}.)
\end{remark}

Away from the zeros of $\xi_0$, nearly-periodic systems exhibit a timescale separation that increases as $\epsilon$ tends to $0$. This suggests that averaging over the fast periodic motion described by $V_0$ ought to be permissible for small $\epsilon$. In more geometric terms, it is reasonable to expect that the equations of motion $\dot{z} = \epsilon^{-1}V_\epsilon(z)$ defining a nearly-periodic system possess an approximate $U(1)$-symmetry whose infinitesimal generator is given by $\xi_0$ to leading order in $\epsilon$. 

If the equations of motion possessed a \emph{true} $U(1)$-symmetry then there would be a vector field $\xi_\epsilon$ on $z$-space, which we will call $Z$, with the following properties.
\begin{enumerate}
\item The integrals curves of $\xi_\epsilon$, i.e. the solutions of the ODE $\dot{z} = \xi_\epsilon(z)$, must each be periodic with period $2\pi$.
\item The flows of $\xi_\epsilon$ and $V_\epsilon$ must commute. Equivalently, $[\xi_\epsilon,V_\epsilon] = 0$, where $[\cdot,\cdot]$ denotes the vector field commutator. 
\end{enumerate}
Such a $\xi_\epsilon$ is referred to as the infinitesimal generator of a $U(1)$-symmetry.

Given a nearly-periodic system the existence of such a $\xi_\epsilon$ is typically too much to hope for. On the other hand it is always possible to find a formal power series,
\[
\xi_\epsilon = \xi_0 + \epsilon\, \xi_1 + \epsilon^2\, \xi_2 + \dots,
\] 
whose coefficients $\xi_k$ are vector fields on $Z$, and that satisfies the properties (a) and (b) to all-orders in $\epsilon$. Such a formal power series is known as a \emph{roto-rate vector}. Existence of a roto-rate vector is one way to precisely define the notion of approximate $U(1)$-symmetry.

\begin{definition}[roto-rate vector]\label{roto_def}
Given a nearly-periodic system $\dot{z} = \epsilon^{-1}\,V_\epsilon(z)$, a \emph{roto-rate vector} is a formal power series $\xi_\epsilon = \xi_0 + \epsilon\, \xi_1 + \epsilon^2 \,\xi_2 + \dots$ with vector field coefficients such that $ \xi_0 = V_0/\omega_0$ and
\begin{itemize}
\item $[\xi_\epsilon,V_\epsilon] = 0$ 
\item $\mathrm{ln}\left(\exp(-2\pi\,\xi_0)\circ\exp(2\pi\,\xi_\epsilon)\right) = 0$,
\end{itemize}
where the previous two equalities are understood in the sense of formal power series.
\end{definition}

\begin{remark}
The integral curves of a vector field $\xi_\epsilon$ will be $2\pi$-periodic if and only if the exponential $\exp(2\pi \xi_\epsilon)$ is equal to the identity map on $z$-space. If $\xi_0$ happens to already have this property then it must be the case that $\text{id}_Z = \exp(2\pi\,\xi_0)\circ \exp(-2\pi\xi_0)\circ\exp(2\pi\xi_\epsilon) = \exp(-2\pi\xi_0)\circ\exp(2\pi\xi_\epsilon). $ By the Baker-Campbell-Hausdorff formula there is a formal power series vector field $Z_\epsilon$ such that 
\[\exp(Z_\epsilon) =  \exp(-2\pi\xi_0)\circ\exp(2\pi\xi_\epsilon),\] i.e. $Z_\epsilon = \mathrm{ln}\left(\exp(-2\pi\xi_0)\circ\exp(2\pi\xi_\epsilon)\right)$. Because $\xi_0$ is $\epsilon$-close to $\xi_\epsilon$ $Z_\epsilon$ must be $\epsilon$-small. The only formal power series $Z_\epsilon = Z_0 + \epsilon Z_1 + \dots$ that is $\epsilon$-small and that formally exponentiates to the identity is $Z_\epsilon = 0$. This explains the second property in the definition.
\end{remark}

Roto-rate vectors are remarkable due to the following.

\begin{theorem}[Existence and uniqueness of the roto-rate vector]\label{roto_rate_existence_thm}
Given a nearly-periodic system $\dot{z} = \epsilon^{-1} V_\epsilon(z)$ with $V_0 = \omega_0\,\xi_0$ there is a unique roto-rate vector $\xi_\epsilon$.
\end{theorem}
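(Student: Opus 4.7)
The plan is to build $\xi_\epsilon$ term by term via strong induction on $k\ge 0$, showing that at each stage the two conditions of Definition~\ref{roto_def} together determine $\xi_k$ uniquely from $\xi_0,\ldots,\xi_{k-1}$. The base case is forced: setting $\xi_0 := V_0/\omega_0$, the commutator identity $[\xi_0,\omega_0\xi_0]=\omega_0[\xi_0,\xi_0]+(L_{\xi_0}\omega_0)\xi_0=0$ uses only that $\omega_0$ is constant along $\xi_0$, and the BCH condition is automatic because $\exp(2\pi\xi_0)=\mathrm{id}_Z$. At every subsequent step I would decompose vector fields into a $\xi_0$-average and an oscillating remainder using the averaging operator $\langle X\rangle := \fint (\Phi^{\xi_0}_\theta)^* X\,d\theta$, which is well defined precisely because $\xi_0$ has $2\pi$-periodic flow.

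At the inductive step I would extract the order-$\epsilon^k$ components of both defining equations. The commutator condition takes the form $\omega_0\,L_{\xi_0}\xi_k + (L_{\xi_k}\omega_0)\,\xi_0 = -F_k$ with $F_k := \sum_{i=0}^{k-1}[\xi_i,V_{k-i}]$ known; its oscillating part uniquely determines $\xi_k^{\mathrm{osc}}$, since $L_{\xi_0}$ is invertible on zero-mean vector fields (it multiplies the $n$-th Fourier mode of the $\xi_0$-action by $in$). For the periodicity condition I would apply BCH to $\exp(-2\pi\xi_0)\circ\exp(2\pi\xi_\epsilon)$ and exploit the key identity $\exp(-2\pi\,\mathrm{ad}_{\xi_0})=\mathrm{Id}$ on vector fields, a consequence of $\exp(2\pi\xi_0)=\mathrm{id}_Z$; this collapses the BCH operator $(1-e^{-2\pi\,\mathrm{ad}_{\xi_0}})/(2\pi\,\mathrm{ad}_{\xi_0})$ acting on $\xi_k$ to the averaging operator, so the order-$k$ equation reduces to $\langle\xi_k\rangle = G_k/(2\pi)$ for a known $G_k$ depending only on $\xi_0,\ldots,\xi_{k-1}$. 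This pins down the only remaining freedom and yields uniqueness.

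The main obstacle is existence, i.e.\ compatibility between the two reductions. The $\xi_0$-average of the commutator equation reads $(L_{\langle\xi_k\rangle}\omega_0)\,\xi_0 = -\langle F_k\rangle$, so solvability demands (i) that $\langle F_k\rangle$ point along $\xi_0$, and (ii) that the $\langle\xi_k\rangle$ forced by BCH automatically satisfies the resulting scalar identity on $L_{\langle\xi_k\rangle}\omega_0$. Claim (i) I would establish from the structural identity $\langle L_{\xi_0}X\rangle = 0$ (which drops out of the averaging integral together with $\exp(2\pi\xi_0)=\mathrm{id}_Z$), combined with the Jacobi identity and the inductive relations $\sum_{p+q=m}[\xi_p,V_q]=0$ for $m\le k-1$. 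Claim (ii) I would argue by noting that both conditions already hold modulo $\epsilon^k$, so any inconsistency at order $k$ would propagate downward and contradict the induction hypothesis; if a cleaner structural argument proves elusive, an explicit check in local coordinates $(\theta,y)$ adapted to $\xi_0$, in which $\xi_0=\partial_\theta$ and $\omega_0$ depends only on $y$, completes the argument.
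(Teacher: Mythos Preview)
Your approach is genuinely different from the paper's. The paper does \emph{not} construct $\xi_\epsilon$ term by term for Theorem~\ref{roto_rate_existence_thm}. Instead it proves existence by invoking Kruskal's normalizing transformation: a formal near-identity diffeomorphism $T_\epsilon$ is produced (by the standard averaging procedure) so that $\overline{V}_\epsilon=(T_\epsilon)_*V_\epsilon$ commutes with $\xi_0$; then $\xi_\epsilon:=T_\epsilon^*\xi_0$ is automatically a roto-rate vector. Uniqueness is argued separately by pushing any competitor $\xi_\epsilon'$ forward to the normalized picture, showing its oscillating part vanishes order by order from $[\overline{\xi}_\epsilon',\overline{V}_\epsilon]=0$, and then killing the residual $S^1$-invariant discrepancy $\epsilon w_\epsilon$ using periodicity via $\exp(2\pi\xi_0)\circ\exp(2\pi\epsilon w_\epsilon)=\mathrm{id}$. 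The payoff of this route is that existence is obtained in one stroke, with no solvability conditions to verify at each order.

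Your direct order-by-order scheme is essentially what the paper carries out later in Theorem~\ref{roto_formula_thm} to extract explicit formulas for $\xi_1,\xi_2,\xi_3$; but there the paper already \emph{knows} a roto-rate exists, so the mean-commutator identity (your Claim~(ii)) is guaranteed a priori and is merely checked by brute-force computation at each order as a sanity check. In your setting you must prove it, and this is where your proposal has a real gap. The ``propagate downward and contradict the induction hypothesis'' argument is a non-sequitur: an inconsistency first appearing at order $k$ simply means no formal roto-rate exists; nothing at orders $\le k-1$ is violated. Your fallback to adapted coordinates $(\theta,y)$ with $\xi_0=\partial_\theta$ is in effect an appeal to Kruskal's construction without executing it, and in any case such coordinates exist only away from the zeros of $\xi_0$, which the paper explicitly allows. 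Claim~(i) is also not established: that $\langle F_k\rangle$ is parallel to $\xi_0$ does not follow from $\langle L_{\xi_0}X\rangle=0$ and Jacobi alone in any way you have indicated.

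In short: the uniqueness half of your inductive scheme is fine and matches the spirit of the paper's argument, but existence hinges on a compatibility statement you have not proved. Either supply a general structural proof that the BCH-determined $\langle\xi_k\rangle$ automatically satisfies $(L_{\langle\xi_k\rangle}\omega_0)\,\xi_0+\langle F_k\rangle=0$, or follow the paper and obtain existence from the normalizing transformation, after which your inductive analysis cleanly yields uniqueness (and formulas).
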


\begin{proof}
This result follows from minor modifications of the arguments in \cite{Kruskal_1962}, which does not allow $\xi_0$ to have fixed points. Therefore we will only outline the main steps in the proof. 

The first step is show that there is a (non-unique) formally-defined near-identity diffeomorphism $T_\epsilon:Z\rightarrow Z$ such that $\overline{V}_\epsilon = (T_\epsilon)_*V_\epsilon$ takes the form $\overline{V}_\epsilon = \overline{\omega}_\epsilon\,\xi_0 + \epsilon \,\delta \overline{V}_\epsilon$, where $\mathcal{L}_{\xi_0}\overline{\omega}_\epsilon = 0$ and $[\xi_0,\delta\overline{V}_\epsilon] = 0$. Note that (formally) pulling back this expression for $\overline{V}_\epsilon$ along $T_\epsilon$ implies $V_\epsilon = \omega_\epsilon\,\xi_\epsilon + \epsilon\,\delta V_\epsilon$, where $\omega_\epsilon = T^*_\epsilon \overline{\omega}_\epsilon$, $\xi_\epsilon = T_\epsilon^*\xi_0$, and $\delta V_\epsilon = T_\epsilon^*\overline{V}_\epsilon$. This establishes the existence of at least one roto-rate vector because $\xi_\epsilon$ apparently has $2\pi$-periodic integral curves, satisfies $\xi_0 = \xi_0$, and 
\[
[\xi_\epsilon, V_\epsilon] = \mathcal{L}_{\xi_\epsilon}(\omega_\epsilon\,\xi_\epsilon) + \epsilon\,\mathcal{L}_{\xi_\epsilon}\delta V_\epsilon = 0.
\]
A procedure for finding the diffeomorphism $T_\epsilon$ is the most commonly quoted result from  \cite{Kruskal_1962}. The reason the procedure still works when $\xi_0$ has fixed points is that solvability of the differential equations defining $T_\epsilon$ only requires periodicity of the $\xi_0$-flow and $\omega_{0}$ to be nowhere vanishing.

The second step is to show that if $\xi_\epsilon^\prime$ is any other roto-rate vector field then $\xi_\epsilon^\prime = \xi_\epsilon$. While it is less well-known, this argument is also contained in \cite{Kruskal_1962}. It proceeds along the following lines. Let $\overline{\xi}_\epsilon^\prime = T_{\epsilon*}\xi_\epsilon^\prime$. Introduce the decomposition $\overline{\xi}_\epsilon^\prime = \langle \overline{\xi}_\epsilon^\prime\rangle + (\overline{\xi}_\epsilon^\prime)^{\text{osc}}$, where $\langle \overline{\xi}_\epsilon^\prime\rangle = (2\pi)^{-1}\int_0^{2\pi} \exp(\theta\,\xi_0)^*\overline{\xi}_\epsilon\,d\theta$. Because $[\overline{\xi}_\epsilon^\prime,\overline{V}_\epsilon]=0$ it must also be the case that $[(\overline{\xi}_\epsilon^\prime)^{\text{osc}},\overline{V}_\epsilon]=0$, which in turn is equivalent to the sequence of conditions
\begin{align}
&[(\overline{\xi}_0^\prime)^{\text{osc}},\omega_0\,\xi_0] = 0\label{collapse_one}\\
&[(\overline{\xi}_1^\prime)^{\text{osc}},\omega_0\,\xi_0] +[(\overline{\xi}_0^\prime)^{\text{osc}}, \overline{V}_1] = 0\label{collapse_two}\\
&\dots\nonumber
\end{align}
The first condition \eqref{collapse_one} is satisfied if and only if $(\overline{\xi}_0^\prime)^{\text{osc}} = 0$. Substituting this in the second condition \eqref{collapse_two} therefore implies $[(\overline{\xi}_1^\prime)^{\text{osc}},\omega_0\,\xi_0] = 0$, which requires $(\overline{\xi}_1^\prime)^{\text{osc}} = 0$. This pattern continues to all orders in $\epsilon$ and shows that $(\overline{\xi}_\epsilon^\prime)^{\text{osc}}=0$. Now the argument may be completed as follows. Because $\overline{\xi}_\epsilon^\prime = \langle\overline{\xi}_\epsilon^\prime\rangle$ is $S^1$-invariant the difference $\overline{\xi}_\epsilon^\prime-\xi_0$ must also be $S^1$-invariant. Moreover because $\overline{\xi}_\epsilon^\prime$ and $\xi_0$ agree when $\epsilon=0$ there must be an $S^1$-invariant $O(1)$ vector field $w_\epsilon$ such that $\overline{\xi}_\epsilon^\prime - \xi_0 = \epsilon w_\epsilon$. Therefore $\exp(2\pi\,\overline{\xi}_\epsilon^\prime) = \exp(2\pi\, \xi_0 +2\pi\,\epsilon\,w_\epsilon ) = \exp(2\pi\,\xi_0)\circ\exp(2\pi\,\epsilon\,w_\epsilon) = \exp(2\pi\,\epsilon\,w_\epsilon) = \text{id}_Z$ in order for the integral curves of $\overline{\xi}_\epsilon^\prime$ to each be $2\pi$-periodic. (Note that we have made use of the commutativity $[\xi_0,w_\epsilon] = 0$.) This identity may only be satisfied if $w_\epsilon =0$.
\end{proof}

The preceding Theorem establishes the useful fact that by expanding the pair of conditions from Definition \ref{roto_def} in power series it should be possible to find the coefficients of the expansion $\xi_\epsilon = \xi_0 + \epsilon \,\xi_1  +\dots$ order-by-order. We will now follow this line of reasoning to derive explicit formulas for $\xi_0,\xi_1,\xi_2$, and $\xi_3$ in terms of Fourier harmonics of $V_\epsilon$ relative to $\xi_0$.

As a preparatory step we will establish the following variant of the BCH formula that is well-suited to perturbation theory in $\epsilon$.

\begin{lemma}[perturbative BCH formula]\label{BCH}
Let $A$ and $B$ be vector fields on the manifold $Z$ and $\epsilon$ a small real parameter. The logarithm $Z_\epsilon = \mathrm{ln}(\exp(-A)\circ\exp(A+\epsilon B))$ exists as a formal power series in $\epsilon$, $Z_\epsilon = Z_0 + \epsilon Z_1 + \epsilon^2 Z_2 + \dots$. The formulas
\begin{align}
Z_0 &= 0\label{Z0_formula}\\
Z_1 & =\int_0^1 B_{\tau_1}\,d\tau_1\\
Z_2 & = \frac{1}{2}\int_0^1\int_0^{\tau_1}[B_{\tau_2},B_{\tau_1}]\,d\tau_2\,d\tau_1\label{Z2_formula}\\
Z_3 & = \frac{1}{6}\int_0^1\int_0^{\tau_1}\int_0^{\tau_2}\bigg( [B_{\tau_3},[B_{\tau_2},B_{\tau_1}]] + [[B_{\tau_3},B_{\tau_2}],B_{\tau_1}]\bigg)\,d\tau_3\,d\tau_2\,d\tau_1,\label{Z3_formula}
\end{align}
with $B_\tau = \exp(\tau A)^* B$, give the first few coefficients $Z_k$. More generally
\begin{align}
Z_\epsilon = \epsilon \int_0^1\psi\left(\exp(\lambda \mathcal{L}_{A+\epsilon B})\exp(-\lambda\mathcal{L}_A)\right) \exp(\lambda A)^*B\,d\lambda\label{all_orders_bch}
\end{align}
with $\psi(z) = z\mathrm{ln}z/(z-1)$, gives $Z_\epsilon$ to all orders in $\epsilon$.
\end{lemma}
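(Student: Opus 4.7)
The approach is to parametrize the composition by a real variable $\lambda\in[0,1]$ and turn the logarithm into a differential equation that can be read off both in closed operator form and order-by-order in $\epsilon$. Introduce
\[
\Phi_\lambda := \exp(-\lambda A)\circ\exp(\lambda(A+\epsilon B)),
\]
so $\Phi_0=\mathrm{id}_Z$ and $\Phi_1=\exp(-A)\circ\exp(A+\epsilon B)$. A direct differentiation, using the invariance $\exp(\lambda A)^*A=A$ (which is just $[A,A]=0$) to cancel the two $A$-contributions, yields
\[
\frac{d\Phi_\lambda}{d\lambda}=\epsilon\,B_\lambda\circ\Phi_\lambda,\qquad B_\lambda=\exp(\lambda A)^*B,
\]
or equivalently, on pullback operators, $\tfrac{d}{d\lambda}\Phi_\lambda^*=\Phi_\lambda^*\circ\epsilon\,\mathcal{L}_{B_\lambda}$.

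Next, posit the formal series $\Phi_\lambda=\exp(Z_\epsilon(\lambda))$ with $Z_\epsilon(0)=0$, so that $\Phi_\lambda^*=\exp(\mathcal{L}_{Z_\epsilon(\lambda)})$. The standard derivative-of-exponential identity, combined with injectivity of the Lie derivative map (so $\mathcal{L}_X=0$ forces $X=0$), inverts the equation above to
\[
\frac{dZ_\epsilon(\lambda)}{d\lambda}=g\bigl(\mathrm{ad}_{Z_\epsilon(\lambda)}\bigr)(\epsilon\,B_\lambda),\qquad g(y)=\frac{y}{1-e^{-y}}.
\]
Since $g(y)=\psi(e^y)$ with $\psi(z)=z\ln z/(z-1)$, and since $\exp(\mathrm{ad}_{Z_\epsilon(\lambda)})$, acting on vector fields, coincides with the operator $\exp(\lambda\mathcal{L}_{A+\epsilon B})\exp(-\lambda\mathcal{L}_A)=\Phi_\lambda^*$, integrating from $\lambda=0$ to $\lambda=1$ immediately reproduces \eqref{all_orders_bch}.

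To extract the low-order formulas, expand the governing ODE using $g(y)=1+y/2+y^2/12+O(y^4)$. Writing $Z_\epsilon(\lambda)=\sum_{k\ge 0}\epsilon^k\,Z_k(\lambda)$ (so the lemma's $Z_k$ is $Z_k(1)$), one gets
\[
\dot Z_1=B_\lambda,\qquad \dot Z_2=\tfrac12[Z_1,B_\lambda],\qquad \dot Z_3=\tfrac12[Z_2,B_\lambda]+\tfrac{1}{12}[Z_1,[Z_1,B_\lambda]],
\]
with $Z_k(0)=0$ for each $k$. Iterated integration gives \eqref{Z0_formula}--\eqref{Z2_formula} directly.

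The principal obstacle is matching \eqref{Z3_formula}. Substituting $Z_1$ and $Z_2$ into $\dot Z_3$ and integrating produces three triple-nested integrals; the $[Z_1,[Z_1,B_\lambda]]$ contribution integrates over a square rather than a simplex in its two inner variables. Splitting that square into the orderings $\tau_3\le\tau_2$ and $\tau_2\le\tau_3$, relabeling, and collecting terms puts everything on the standard simplex $0\le\tau_3\le\tau_2\le\tau_1\le 1$ as a combination of the brackets $[[B_{\tau_3},B_{\tau_2}],B_{\tau_1}]$, $[B_{\tau_3},[B_{\tau_2},B_{\tau_1}]]$, and $[B_{\tau_2},[B_{\tau_3},B_{\tau_1}]]$ with coefficients $1/4$, $1/12$, and $1/12$. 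A single application of the Jacobi identity rewrites the last bracket as a difference of the first two, whereupon all coefficients consolidate to $1/6$, yielding \eqref{Z3_formula}. Apart from this Jacobi repackaging, the calculation is straightforward bookkeeping.
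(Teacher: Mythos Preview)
Your proof is correct and follows essentially the same route as the paper: parametrize by $\lambda$, differentiate $\exp(-\lambda A)\circ\exp(\lambda(A+\epsilon B))$ to get $\epsilon B_\lambda$, invert the derivative-of-exponential identity to obtain $\dot Z_{\epsilon}=g(\mathrm{ad}_{Z_\epsilon})(\epsilon B_\lambda)$, recognize $g(y)=\psi(e^y)$ and $\exp(\mathrm{ad}_{Z_\epsilon})=\exp(\lambda\mathcal{L}_{A+\epsilon B})\exp(-\lambda\mathcal{L}_A)$ to recover the all-orders formula, and then use Fubini plus a single Jacobi identity to put $Z_3$ on the simplex. The only cosmetic difference is in the order-by-order extraction: the paper first builds the Dyson expansion of $a_{\epsilon,\lambda}=\exp(\lambda\mathcal{L}_{A+\epsilon B})\exp(-\lambda\mathcal{L}_A)$ and then applies the Taylor series of $\psi(1+x)$, whereas you expand $g(\mathrm{ad}_{Z_\epsilon})$ directly and solve the resulting recursive ODEs for $Z_k(\lambda)$; your path is slightly more streamlined but the two are equivalent.
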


\begin{proof}
The proof proceeds by first solving a seemingly more-difficult problem, namely finding an asymptotic series representation for $Z_{\epsilon,\lambda} = \text{ln}(\exp(-\lambda A)\circ\exp(\lambda [A+\epsilon B]))$. To that end, first consider the $\lambda$-derivative of $\exp(Z_{\epsilon,\lambda}) = \exp(-\lambda A)\circ\exp(\lambda [A+\epsilon B])$,
\begin{align}
\partial_\lambda\exp(Z_{\epsilon,\lambda}) &= - A\circ \exp(Z_{\epsilon,\lambda}) + T\exp(-\lambda A)\circ [A+\epsilon B]\circ\exp(\lambda [A+\epsilon B])\nonumber\\
& = (-A + \exp(\lambda A)^*[A+\epsilon B])\circ \exp(Z_{\epsilon,\lambda}).\nonumber
\end{align}
In other words
\begin{align}
\partial_\lambda\exp(Z_{\epsilon,\lambda}) \circ \exp(-Z_{\epsilon,\lambda}) = \epsilon \exp(\lambda A)^* B.\label{proto_all_orders_bch}
\end{align} 
We will eventually obtain \eqref{all_orders_bch} by integrating \eqref{proto_all_orders_bch} in $\lambda$, but first we need an expression for $\partial_\lambda\exp(Z_{\epsilon,\lambda}) \circ \exp(-Z_{\epsilon,\lambda})$  in terms of $\partial_\lambda Z_{\epsilon,\lambda}$. One way to find such an expression is the following. Let $C_\lambda$ be any $\lambda$-dependent vector field and set  $\psi_{s,\lambda} = \exp(sC_\lambda)$. By the equality of mixed partials the vector fields $V_{s,\lambda} = \partial_\lambda \psi_{s,\lambda}\circ \psi_{s,\lambda}^{-1} $ and $\xi_{s,\lambda} = \partial_s \psi_{s,\lambda}\circ \psi_{s,\lambda}^{-1}  = C_\lambda$ must be related by the condition
\begin{align}
\partial_s V_{s,\lambda} + \mathcal{L}_{C_\lambda} V_{s,\lambda} = \partial_\lambda C_\lambda.
\end{align}
Thinking of the last condition as a differential equation for $V_{s,\lambda}$, it can be solved using the method of variation of parameters. The solution for $V_{s,\lambda}$ is given by
\begin{align}
V_{s,\lambda}& = \exp(-s C_\lambda)^*\int_0^s \exp(\overline{s} C_\lambda)^*\partial_\lambda C_\lambda\,d\overline{s}.\label{gen_exp_id}
\end{align}
Because $V_{1,\lambda} = \partial_\lambda \psi_{1,\lambda}\circ \psi_{1,\lambda}^{-1} = \partial_\lambda\exp(C_\lambda)\circ \exp(-C_\lambda)$ Eq.\,\eqref{gen_exp_id} implies the general formula
\begin{align}
\partial_\lambda\exp(C_\lambda)\circ \exp(-C_\lambda) &= \exp(- C_\lambda)^*\int_0^1 \exp(\overline{s} C_\lambda)^*\partial_\lambda C_\lambda\,d\overline{s}\nonumber\\
& = \phi(-\mathcal{L}_{C_\lambda}) \partial_\lambda C_\lambda,\label{useful_derivaitive_formula_exp}
\end{align}
where $\phi(z) = [\exp(z) - 1]/z$. Applying this formula to \eqref{proto_all_orders_bch} then gives
\begin{align}
&\phi(-\mathcal{L}_{Z_{\epsilon,\lambda}}) \partial_\lambda Z_{\epsilon,\lambda} =  \epsilon \exp(\lambda A)^* B\nonumber\\
\Rightarrow & \partial_\lambda Z_{\epsilon,\lambda}=\epsilon\,\frac{1}{\phi(-\mathcal{L}_{Z_{\epsilon,\lambda}})} \exp(\lambda A)^* B\nonumber\\
\Rightarrow & Z_{\epsilon,\lambda} = \int_0^{\lambda}\epsilon\,\frac{1}{\phi(-\mathcal{L}_{Z_{\epsilon,\overline{\lambda}}})} \exp(\overline{\lambda} A)^* B\,d\overline{\lambda}.\label{apparently_not_useful}
\end{align}
While \eqref{apparently_not_useful} may not seem helpful because $Z_{\epsilon,\lambda}$ appears under the integral sign, in fact it implies \eqref{all_orders_bch} for the following reason. Because 
\[
\exp(\mathcal{L}_{Z_{\epsilon,\lambda}}) = (\exp(-\lambda A)\circ \exp(\lambda[A+\epsilon B]))^* = \exp(\lambda \mathcal{L}_{A+\epsilon B})\exp(-\lambda \mathcal{L}_{A}),
\]
the Lie derivative $\mathcal{L}_{Z_{\epsilon,\lambda}} $ may be written
\[
\mathcal{L}_{Z_{\epsilon,\lambda}} = \text{ln}\left(\exp(\lambda \mathcal{L}_{A+\epsilon B})\exp(-\lambda \mathcal{L}_{A})\right).
\]
If $a \equiv \exp(\lambda \mathcal{L}_{A+\epsilon B})\exp(-\lambda \mathcal{L}_{A})$ it therefore follows that
\begin{align}
\frac{1}{\phi(-\mathcal{L}_{Z_{\epsilon,\lambda}})} &= \frac{1}{\phi(-\text{ln}a)}\nonumber\\
& = -\frac{\text{ln}a}{\exp(-\text{ln}a) - 1}\nonumber\\
& = \psi(a).\label{single_variable_trick}
\end{align}
Substituting \eqref{single_variable_trick} in \eqref{apparently_not_useful} gives \eqref{all_orders_bch}, as desired.

In order to obtain the formulas \eqref{Z0_formula}-\eqref{Z3_formula} it is sufficient to expand the formal expression \eqref{all_orders_bch} as a power series in $\epsilon$. This rather tedious calculation proceeds as follows. First it is useful to find the power series expansion of the operator $a_{\epsilon,\lambda} = \exp(\lambda \mathcal{L}_{A+\epsilon B})\exp(-\lambda \mathcal{L}_{A})$. Let $f:Z\rightarrow\mathbb{R}$ be any scalar on $Z$ and introduce $f_\lambda =  \exp(\lambda \mathcal{L}_{A+\epsilon B}) f$. The scalar $f_\lambda$ obeys the differential equation $\partial_\lambda f_\lambda = \mathcal{L}_{A+\epsilon B} f_\lambda$. Introducing the variation-of-parameters ansatz $f_\lambda = \exp(\lambda \mathcal{L}_{A})\overline{f}_\lambda$, the scalar $\overline{f}_\lambda$ therefore satisfies $\partial_\lambda \overline{f}_\lambda = \epsilon \mathcal{L}_{\exp(-\lambda A)^* B}\overline{f}_\lambda$, or in integral form 
\begin{align}
\overline{f}_\lambda &= f + \epsilon\int_0^\lambda \mathcal{L}_{B_{-s_1}}\overline{f}_{s_1}\,ds_1\nonumber\\
& = f+ \epsilon \int_0^\lambda \mathcal{L}_{B_{-s_1}}f\,ds_1 + \epsilon^2\int_0^\lambda\int_0^{s_1}\mathcal{L}_{B_{-s_1}}\mathcal{L}_{B_{-s_2}}f\,ds_2\,ds_1 + O(\epsilon^3),
\end{align}
where we have introduced the shorthand notation $B_{s} = \exp(s A)^* B$.
This shows that $a_{\epsilon,\lambda}$ has the asymptotic expansion 
\[
a_{\epsilon,\lambda} = \exp(\lambda \mathcal{L}_A)\left(1 + \epsilon \overline{a}_{1,\lambda} + \epsilon^2 \overline{a}_{2,\lambda} + \dots\right)\exp(-\lambda \mathcal{L}_A),\]
where
\begin{align}
\overline{a}_{1,\lambda} &=  \int_0^\lambda \mathcal{L}_{B_{-s_1}}\,ds_1\\
\overline{a}_{2,\lambda} & = \int_0^\lambda\int_0^{s_1}\mathcal{L}_{B_{-s_1}}\mathcal{L}_{B_{-s_2}}\,ds_2\,ds_1.
\end{align}
Combining this observation with the series representation of $\psi(1 + x) = 1 + \frac{1}{2}x - \frac{1}{6} x^2 + \frac{1}{12}x^3 + \dots $ therefore implies
\begin{align}
Z_0 &= 0\\
Z_1 & = \int_0^1 \exp(\lambda A)^*B\,d\lambda\nonumber\\
&= \int_0^1 \exp(\tau_1 A)^*B\,d\tau_1\\
Z_2 & = \frac{1}{2}\int_0^1 \exp(\lambda A)^* \overline{a}_{1,\lambda}B\,d\lambda\nonumber\\
& = \frac{1}{2}\int_0^1\int_0^\lambda \exp(\lambda A)^*[B_{-s_1},B]\,ds_1\,d\lambda\nonumber\\
& = \frac{1}{2}\int_0^1\int_0^{\tau_1}[B_{\tau_2},B_{\tau_1}]\,d\tau_2\,d\tau_1\\
Z_3 & = -\frac{1}{6}\int_0^1  \exp(\lambda A)^* \overline{a}_{1,\lambda}^2 B\,d\lambda\nonumber\\
& \quad+ \frac{1}{2}\int_0^1 \exp(\lambda A)^* \overline{a}_{2,\lambda}B\,d\lambda.\label{proto_Z3}
\end{align}
These expressions for $Z_0,Z_1,Z_2$ clearly reproduce \eqref{Z0_formula}-\eqref{Z2_formula}. To see that \eqref{proto_Z3} reproduces \eqref{Z3_formula} notice first that
\begin{align}
Z_3 & = -\frac{1}{6}\int_0^1  \exp(\lambda A)^*\int_0^\lambda\int_0^\lambda[ B_{-s_1},[B_{-s_2},B]]\,ds_2\,ds_1\,d\lambda\nonumber\\
& \quad+ \frac{1}{2}\int_0^1 \exp(\lambda A)^*  \int_0^\lambda\int_0^{s_1}[B_{-s_1}[B_{-s_2},B]]\,ds_2\,ds_1\,d\lambda.
\end{align}
Next observe that if $g(s_1,s_2) = [B_{-s_1},[B_{-s_2},B]]$ then by Fubini's theorem
\begin{align}
\int_0^\lambda\int_0^\lambda g(s_1,s_2)\,ds_2\,ds_1 = \int_0^\lambda\int_0^{s_1} g(s_1,s_2)\,ds_2\,ds_1 +\int_0^\lambda\int_0^{s_1} g(s_2,s_1)\,ds_2\,ds_1.
\end{align}
It follows that 
\begin{align*}
Z_3 & = -\frac{1}{6}\int_0^1  \exp(\lambda A)^*\int_0^\lambda\int_0^{s_1}([ B_{-s_1},[B_{-s_2},B]] +[ B_{-s_2},[B_{-s_1},B]]  )\,ds_2\,ds_1\,d\lambda\nonumber\\
& \quad+ \frac{1}{2}\int_0^1 \exp(\lambda A)^*  \int_0^\lambda\int_0^{s_1}[B_{-s_1}[B_{-s_2},B]]\,ds_2\,ds_1\,d\lambda\\
 &= \int_0^1\int_0^\lambda\int_0^{s_1}\exp(\lambda A)^*  \left(\frac{1}{3}[ B_{-s_1},[B_{-s_2},B]] - \frac{1}{6}[ B_{-s_2},[B_{-s_1},B]]\right)\,ds_2\,ds_1\,d\lambda\\
 & = \int_0^1\int_0^\lambda\int_0^{s_1}\exp(\lambda A)^*  \left(\frac{1}{6}[ B_{-s_1},[B_{-s_2},B]] + \frac{1}{6}[[B_{-s_1}, B_{-s_2}],B]]\right)\,ds_2\,ds_1\,d\lambda\nonumber\\
 & = \frac{1}{6}\int_0^1\int_0^{\tau_1}\int_0^{\tau_2}\bigg( [B_{\tau_3},[B_{\tau_2},B_{\tau_1}]] + [[B_{\tau_3},B_{\tau_2}],B_{\tau_1}]\bigg)\,d\tau_3\,d\tau_2\,d\tau_1
\end{align*}
where we have applied the Jacobi identity $[B_{-s_2},[B_{-s_1},B]] = [[B_{-s_2},B_{-s_1}],B] + [B_{-s_1},[B_2,B]]$ on the second-to-last line, and we changed integration variables to $\tau_1 = \lambda$, $\tau_3 = \lambda - s_1$, and $\tau_2 = \lambda - s_2$ on the last line.
\end{proof}

With the modified BCH formula from Lemma \ref{BCH} in hand it is now straightforward to derive formulas for the coefficients of $\xi_\epsilon = \xi_0 +\epsilon\,\xi_1 + \epsilon^2\,\xi_2 + \dots$ as follows. 

\begin{definition}[Mean and oscillating subspaces]
Given a nearly-periodic system with roto-rate vector $\xi_\epsilon$, the space of \emph{limiting mean vector fields} $\langle\mathfrak{X}(Z)\rangle$ or just \emph{mean vector fields} for short is the subspace of vector fields $A$ on $Z$ that are equal to their $U(1)$-average along $\xi_0$. In symbols $A\in \langle\mathfrak{X}(Z)\rangle$ means $A =\langle A\rangle (2\pi)^{-1}\int_0^{2\pi} \exp(\theta \,\xi_0)^*A\,d\theta$. The space of \emph{limiting oscillating vector fields} $\mathfrak{X}(Z)^{\text{osc}}$, or just \emph{oscillating vector fields} for short, is the subspace of vector fields on $Z$ that average to zero along $\xi_0.$ That is, $A\in \mathfrak{X}(Z)^{\text{osc}}$ if $\langle A\rangle = 0$. 
\end{definition}

\begin{remark}
Standard results on Fourier series imply that the mean and fluctuating subspaces are complimentary subspaces of $\mathfrak{X}(Z)$, the space of vector fields on $Z$. A projection onto $\langle\mathfrak{X}(Z)\rangle$ is $\overline{\pi}: A\mapsto \langle A\rangle$ and a projection onto $\mathfrak{X}(Z)^{\text{osc}}$ is $\widetilde{\pi} = 1 - \overline{\pi}$. If $A$ is any vector field on $Z$ then the notations $A = \langle A\rangle + A^{\mathrm{osc}}$ and $A = \langle A\rangle+\widetilde{A}$ will be used interchangeably to denote the decomposition of $A$ into its mean, $\langle A\rangle = \overline{\pi}A$, and fluctuating parts, $A^{\mathrm{osc}} = \widetilde{A} = \widetilde{\pi}A$.
\end{remark}

\begin{theorem}[formula for the roto-rate vector]\label{roto_formula_thm}
The first four coefficients of the roto-rate vector $\xi_\epsilon$ associated with a nearly-periodic system $\dot{z} = \epsilon^{-1}V_\epsilon(z)$ are given in terms of the power series expansion of $V_\epsilon = \omega_0 \xi_0 + \epsilon V_1 + \epsilon^2 V_2 + \dots$ as follows. 
\begin{align}
\xi_0 & = V_0/\omega_0\\
\xi_1 &= \Lxio I_0\widetilde{V}_1\label{xi1_formula}\\
\xi_2 & = \Lxio I_0\widetilde{V}_2 + \Lxio I_0[ I_0\widetilde{V}_1, \langle V_1\rangle] + \frac{1}{2}\Lxio I_0[I_0\widetilde{V}_1,\widetilde{V}_1]^{\text{osc}} + \frac{1}{2}[\Lxio I_0\widetilde{V}_1, I_0\widetilde{V}_1]\label{xi2_formula}\\
\xi_3 &= \Lxio\bigg(I_0\widetilde{V}_3 +  I_0[ I_0\widetilde{V}_1,\langle V_2\rangle]^{\text{osc}} +I_0[ I_0\widetilde{V}_2,\langle V_1\rangle]^{\text{osc}}\nonumber\\
&\quad \hphantom{\Lxio}+\frac{1}{2} I_0[I_0\widetilde{V}_2,\widetilde{V}_1]^{\text{osc}}+ \frac{1}{2}I_0[I_0\widetilde{V}_1,\widetilde{V}_2]^{\text{osc}}+\frac{1}{3}I_0[I_0\widetilde{V}_1,[I_0\widetilde{V}_1,\widetilde{V}_1]]^{\text{osc}}\nonumber\\
&\quad\hphantom{\Lxio}+I_0[  I_0[ I_0\widetilde{V}_1, \langle V_1\rangle] ,\langle V_1\rangle]
+ \frac{1}{2}I_0[    I_0[I_0\widetilde{V}_1,\widetilde{V}_1]^{\text{osc}} ,\langle V_1\rangle]\nonumber\\
 &\quad\hphantom{\Lxio}+\frac{1}{2}I_0[I_0\widetilde{V}_1,[I_0\widetilde{V}_1,\langle V_1\rangle]]^{\text{osc}}\bigg)\nonumber\\
 & +  \frac{1}{2} [\Lxio I_0 \widetilde{V}_1,I_0\widetilde{V}_2]+  \frac{1}{2} [\Lxio I_0 \widetilde{V}_2,I_0\widetilde{V}_1]\nonumber\\
 &+[I_0[\Lxio I_0\widetilde{V}_1,{V}_1]^{\text{osc}},I_0\widetilde{V}_1]+I_0[\langle [\Lxio I_0\widetilde{V}_1,\widetilde{V}_1]\rangle, I_0\widetilde{V}_1]\nonumber\\
 &+\frac{1}{3}[I_0\widetilde{V}_1,[\Lxio I_0\widetilde{V}_1,I_0\widetilde{V}_1]]
.\label{xi3_formula}
\end{align}
Here $I_0$ is the inverse of $\mathcal{L}_{V_0}$ restricted to the fluctuating subspace regarded as a linear map $\mathfrak{X}^{\mathrm{osc}}(Z)\rightarrow \mathfrak{X}^{\mathrm{osc}}(Z)$.
\end{theorem}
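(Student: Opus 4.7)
The plan is to expand both conditions of Definition \ref{roto_def} as formal power series in $\epsilon$ and solve recursively for each coefficient $\xi_n$. Expanding $[\xi_\epsilon,V_\epsilon]=0$ at order $\epsilon^n$ yields
\begin{align*}
\mathcal{L}_{V_0}\xi_n \;=\; \sum_{k=0}^{n-1}[\xi_k,V_{n-k}] \;=:\; R_n,
\end{align*}
where $R_n$ depends only on $V_1,\dots,V_n$ and the previously-determined $\xi_0,\dots,\xi_{n-1}$. Because $[V_0,\xi_0]=0$, the Lie derivative $\mathcal{L}_{V_0}$ commutes with $\Lxio$ and therefore preserves the mean/oscillating splitting; its inverse $I_0$ on the oscillating subspace likewise commutes with $\Lxio$. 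Projecting the order-$n$ equation onto the oscillating subspace then determines $\widetilde{\xi}_n = I_0 R_n^{\mathrm{osc}}$, while the mean part $\langle\xi_n\rangle$ remains free at this stage.

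The mean part is fixed by the periodicity condition. Writing $\xi_\epsilon = \xi_0 + \epsilon\,w_\epsilon$ with $w_\epsilon = \xi_1 + \epsilon\xi_2 + \cdots$ and invoking Lemma \ref{BCH} with $A = 2\pi\xi_0$ and $\epsilon B = 2\pi\epsilon\,w_\epsilon$, the requirement $\mathrm{ln}(\exp(-2\pi\xi_0)\circ\exp(2\pi\xi_\epsilon)) = 0$ becomes a formal identity in $\epsilon$ whose coefficients must all vanish. Under the $\tau$-integrations in \eqref{Z0_formula}--\eqref{Z3_formula} the pullbacks $\exp(2\pi\tau\xi_0)^*(\cdot)$ collapse into $\xi_0$-averages $\langle\cdot\rangle$, so the order-$\epsilon^n$ identity isolates $\langle\xi_n\rangle$ as a known expression in $\xi_1,\dots,\xi_{n-1}$. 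Combined with the commutator step this inductively pins down every $\xi_n$, and uniqueness from Theorem \ref{roto_rate_existence_thm} guarantees that no other consistent choice exists.

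Executing the recipe for $n=1$ is immediate: $R_1 = \Lxio V_1 = \Lxio\widetilde{V}_1$ (since $\Lxio\langle V_1\rangle = 0$), so $\widetilde{\xi}_1 = I_0\Lxio\widetilde{V}_1 = \Lxio I_0\widetilde{V}_1$, and the leading BCH identity $Z_1 = 2\pi\langle\xi_1\rangle = 0$ forces $\langle\xi_1\rangle = 0$, reproducing \eqref{xi1_formula}. For $n = 2, 3$ one inserts the known $\xi_k$ into $R_n$, reads off $\widetilde{\xi}_n = I_0 R_n^{\mathrm{osc}}$, adjoins the $\langle\xi_n\rangle$ produced by the BCH identity at that order, and then rearranges using $[\Lxio,I_0]=0$, the splittings $[X,Y] = \langle[X,Y]\rangle + [X,Y]^{\mathrm{osc}}$, and the Jacobi identity to collect the pieces into the forms displayed in \eqref{xi2_formula} and \eqref{xi3_formula}.

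The principal obstacle is the bookkeeping at $n=3$: the right-hand side $R_3$ contains triple-nested commutators built out of $\xi_1,\xi_2$ and the three harmonics $V_1,V_2,V_3$, and matching the final expression against the structure displayed in \eqref{xi3_formula} requires repeated application of the Jacobi identity together with careful tracking of which factors are already oscillating (so that $I_0$ can be applied directly) and which must first be passed through the projector $\widetilde{\pi}$. Beyond this algebra, no conceptual ingredients are needed past Lemma \ref{BCH}, the commutativity $[\Lxio,I_0]=0$, and the Jacobi identity.
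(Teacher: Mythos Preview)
Your proposal is correct and follows essentially the same route as the paper: expand the commutator condition $[\xi_\epsilon,V_\epsilon]=0$ order by order to fix $\widetilde{\xi}_n = I_0 R_n^{\mathrm{osc}}$, then invoke Lemma~\ref{BCH} with $A=2\pi\xi_0$, $B=2\pi(\xi_1+\epsilon\xi_2+\cdots)$ so that the periodicity condition fixes $\langle\xi_n\rangle$ in terms of lower-order data, and finally massage the resulting expressions via the Jacobi identity and $[\Lxio,I_0]=0$ into the displayed forms. One minor caveat: the phrase ``the pullbacks collapse into $\xi_0$-averages'' is accurate only for the leading term $2\pi\langle\xi_n\rangle$ in each $Z_n$; the iterated integrals in \eqref{Z2_formula}--\eqref{Z3_formula} do not reduce to plain averages but to nested $\theta$-integrals of commutators (e.g.\ $\tfrac{1}{2}\int_0^{2\pi}\int_0^{\theta_1}[\xi_1^{\theta_2},\xi_1^{\theta_1}]\,d\theta_2\,d\theta_1$ for $Z_2$), and evaluating these is part of the nontrivial bookkeeping you flag. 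The paper also explicitly verifies at each order that the \emph{mean} part of the commutator condition is automatically satisfied; you cover this implicitly by appealing to the existence guaranteed by Theorem~\ref{roto_rate_existence_thm}, which is legitimate.
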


\begin{proof}
The proof proceeds by directly analyzing the conditions in Definition \ref{roto_def} order-by-order in $\epsilon$. First Lemma \ref{BCH} will be applied with $A = 2\pi \xi_0$ and $B = 2\pi (\xi_1 + \epsilon \xi_2 + \epsilon^2 \xi_3 + \dots)$ in order to identify the coefficients of the formal power series \[Z_{\epsilon} = \mathrm{ln}\left(\exp(-2\pi\,\xi_0)\circ\exp(2\pi\,\xi_\epsilon)\right).\] Then the power series coefficients of $[\xi_\epsilon,V_\epsilon]$ and $Z_\epsilon$ will each be set equal to zero.

After changing integration variables from $\tau_k$ to $\theta_k = 2\pi \tau_k$ and accounting for the fact that $B = 2\pi (\xi_1 + \epsilon \xi_2 + \epsilon^2 \xi_3 + \dots)$ is itself a formal power series, the first several coefficients of $Z_\epsilon$ given by Lemma \ref{BCH} are
\begin{align}
Z_0 &= 0\label{Z0_formula_roto_thm}\\
Z_1 & =2\pi \langle \xi_1\rangle \label{Z1_formula_roto_thm}\\
Z_2 & = 2\pi \langle \xi_2\rangle+\frac{1}{2}\int_0^{2\pi}\int_0^{\theta_1}[\xi_1^{\theta_2},\xi_1^{\theta_1}]\,d\theta_2\,d\theta_1\label{Z2_formula_roto_thm}\\
Z_3 & =  2\pi \langle \xi_3\rangle+\frac{1}{2}\int_0^{2\pi}\int_0^{\theta_1}[\xi_1^{\theta_2},\xi_2^{\theta_1}]\,d\theta_2\,d\theta_1+ \frac{1}{2}\int_0^{2\pi}\int_0^{\theta_1}[\xi_2^{\theta_2},\xi_1^{\theta_1}]\,d\theta_2\,d\theta_1\nonumber\\
&+\frac{1}{6}\int_0^{2\pi}\int_0^{\theta_1}\int_0^{\theta_2}\bigg( [\xi_1^{\theta_3},[\xi_1^{\theta_2},\xi_1^{\theta_1}]] + [[\xi_1^{\theta_3},\xi_1^{\theta_2}],\xi_1^{\theta_1}]\bigg)\,d\theta_3\,d\theta_2\,d\theta_1.\label{Z3_formula_roto_thm}
\end{align}
where $\xi_k^{\theta_j} = \exp(\theta_j\,\xi_0)^*\xi_k$. Each of these coefficients must vanish, but we will not examine the consequences of this vanishing now. Instead we will examine the vanishing of the $Z_k$ and the coefficients of $[\xi_\epsilon,V_\epsilon]$ incrementally and simultaneously in the following paragraphs.

The $O(1)$ coefficients of the series $Z_\epsilon$ and $[\xi_\epsilon,V_\epsilon]$ are given by Eq.\,\eqref{Z0_formula_roto_thm} and $[\xi_0,V_0]$, respectively. The former is obviously zero, while the latter vanishes because $\mathcal{L}_{\xi_0}\omega_0 = 0$. Thus no constraints are placed on the $\xi_k$ at this order. Note that $\xi_0 = V_0/\omega_0$ \emph{by definition} of the roto-rate vector.

The $O(\epsilon)$ coefficients of $Z_\epsilon$ and $[\xi_\epsilon,V_\epsilon]$ are given by Eq.\,\eqref{Z1_formula_roto_thm} and $[\xi_0,V_1] + [\xi_1,V_0]$, respectively. Vanishing of these coefficients is equivalent to the joint satisfaction of the three conditions
\begin{align}
0& = \langle \xi_1 \rangle\label{Z1_vanishing_roto_thm}\\
0& = [\xi_0,V_1] + [\xi_1,V_0]^{\text{osc}}\label{commutator1_osc_vanishing_roto_thm}\\
0& = \langle [\xi_1,V_0] \rangle\label{commutator1_mean_vanishing_roto_thm}.
\end{align}
We claim that the conditions \eqref{Z1_vanishing_roto_thm} and \eqref{commutator1_osc_vanishing_roto_thm} uniquely determine $\xi_1$, and that when $\xi_1$ is so determined the condition \eqref{commutator1_mean_vanishing_roto_thm} is satisfied automatically. As for the first part of our claim, notice that condition \eqref{commutator1_osc_vanishing_roto_thm} is equivalent to the linear equation $\mathcal{L}_{V_0}\widetilde{\xi}_1 = \Lxio \widetilde{V}_1$, which has the unique solution $\widetilde{\xi}_1 = \Lxio I_0\widetilde{V}_1$. Because condition \eqref{Z1_vanishing_roto_thm} says that $\xi_1$ has zero average, the last observation implies that in fact $\xi_1 = \Lxio I_0\widetilde{V}_1$, which is precisely the desired formula \eqref{xi1_formula}. As for the second part of our claim, it is enough to observe that, because $\xi_1 = \widetilde{\xi}_1$, $\langle [\xi_1,V_0] \rangle = \langle [\widetilde{\xi_1},V_0] \rangle = [\langle \widetilde{\xi}_1\rangle , V_0] = 0$.

The $O(\epsilon^2)$ coefficients of $Z_\epsilon$ and $[\xi_\epsilon, V_\epsilon]$ are given by Eq.\,\eqref{Z1_formula_roto_thm} and $[\xi_0,V_2] + [\xi_1,V_1] + [\xi_2,V_0]$, respectively. Vanishing of these coefficients is equivalent to
\begin{align}
0& =  \langle \xi_2\rangle+\frac{1}{2}\fint\int_0^{\theta_1}[\xi_1^{\theta_2},\xi_1^{\theta_1}]\,d\theta_2\,d\theta_1 \label{Z2_vanishing_roto_thm}\\
0& = [\xi_0,V_2] + [\xi_1,V_1]^{\text{osc}} + [\xi_2,V_0]^{\text{osc}}\label{commutator2_osc_vanishing_roto_thm}\\
0& = \langle [\xi_1,V_1]\rangle + \langle [\xi_2,V_0] \rangle.\label{commutator2_mean_vanishing_roto_thm}
\end{align}
As was the case with the $O(\epsilon)$ coefficients, we claim that conditions \eqref{Z2_vanishing_roto_thm} and \eqref{commutator2_osc_vanishing_roto_thm} uniquely determine $\xi_2$, and that, with $\xi_2$ so determined, condition \eqref{commutator2_mean_vanishing_roto_thm} is satisfied automatically. First observe that condition \eqref{Z2_vanishing_roto_thm} completely determines $\langle \xi_2\rangle$. Indeed, using Eq.\,\eqref{xi1_formula} inside of the integral and recognizing $\Lxio I_0 \widetilde{V}_1^{\theta_2} = \partial_{\theta_2} I_0 \widetilde{V}_1^{\theta_2}$ leads to
\begin{align}
\langle \xi_2\rangle& =  -\frac{1}{2}\fint\int_0^{\theta_1}[\Lxio I_0 \widetilde{V}_1^{\theta_2},\Lxio I_0 \widetilde{V}_1^{\theta_1}]\,d\theta_2\,d\theta_1\nonumber\\
& = -\frac{1}{2}\fint\int_0^{\theta_1}[\partial_{\theta_2} I_0 \widetilde{V}_1^{\theta_2},\Lxio I_0 \widetilde{V}_1^{\theta_1}]\,d\theta_2\,d\theta_1\nonumber\\
& =  -\frac{1}{2}\fint[ I_0 \widetilde{V}_1^{\theta_1},\Lxio I_0 \widetilde{V}_1^{\theta_1}]\,d\theta_1+ \frac{1}{2}\fint[ I_0 \widetilde{V}_1^{},\Lxio I_0 \widetilde{V}_1^{\theta_1}]\,d\theta_2\,d\theta_1\nonumber\\
& = \frac{1}{2} \langle [ \Lxio I_0 \widetilde{V}_1^{},I_0 \widetilde{V}_1^{}] \rangle.\label{xi2_mean_proof}
\end{align}
Next observe that condition \eqref{commutator2_osc_vanishing_roto_thm} is equivalent to the linear equation
\begin{align}
\mathcal{L}_{V_0}\widetilde{\xi}_2 &=  [\xi_0,V_2] + [\xi_1,V_1]^{\text{osc}}\nonumber\\
& = \Lxio \widetilde{V}_2  + \Lxio [ I_0\widetilde{V}_1, \langle V_1\rangle] + [\Lxio I_0 \widetilde{V}_1,\widetilde{V}_1]^{\text{osc}},\label{a_roto_thm}
\end{align}
which has the unique solution
\begin{align}
\widetilde{\xi}_2 & = \Lxio\bigg(I_0\widetilde{V}_2 + I_0[ I_0\widetilde{V}_1, \langle V_1\rangle]\bigg) + I_0 [\Lxio I_0 \widetilde{V}_1,\widetilde{V}_1]^{\text{osc}}\nonumber\\
& = \Lxio\bigg(I_0\widetilde{V}_2 + I_0[ I_0\widetilde{V}_1, \langle V_1\rangle] + \frac{1}{2}I_0[I_0\widetilde{V}_1,\widetilde{V}_1]^{\text{osc}}\bigg) + \frac{1}{2}[\Lxio I_0\widetilde{V}_1, I_0\widetilde{V}_1]^{\text{osc}}.\label{b_roto_thm}
\end{align}
On the second line of Eq.\,\eqref{b_roto_thm} we have used the identity
\begin{align}
I_0[\Lxio I_0\widetilde{V}_1,\widetilde{V}_1]^{\text{osc}} = \frac{1}{2}[\Lxio I_0\widetilde{V}_1,I_0\widetilde{V}_1]^{\text{osc}}  + \frac{1}{2}\Lxio I_0[ I_0\widetilde{V}_1, \widetilde{V}_1]^{\text{osc}},\label{two_term_id_roto_thm}
\end{align}
which follows from the nontrivial recursive relationship
\begin{align}
I_0[\Lxio I_0\widetilde{V}_1,\widetilde{V}_1]^{\text{osc}} &=I_0[\Lxio I_0\widetilde{V}_1,\mathcal{L}_{V_0}I_0\widetilde{V}_1]^{\text{osc}}\nonumber\\
& = [\Lxio I_0\widetilde{V}_1,I_0\widetilde{V}_1]^{\text{osc}} - I_0[\Lxio \widetilde{V}_1,I_0\widetilde{V}_1]^{\text{osc}}\nonumber\\
& = [\Lxio I_0\widetilde{V}_1,I_0\widetilde{V}_1]^{\text{osc}} + \Lxio I_0[ I_0\widetilde{V}_1, \widetilde{V}_1]^{\text{osc}} - I_0[\Lxio I_0\widetilde{V}_1, \widetilde{V}_1]^{\text{osc}}.
\end{align}
Adding Eqs.\,\eqref{xi2_mean_proof} and \eqref{b_roto_thm} demonstrates the first part of our claim, in addition to giving the desired formula \eqref{xi2_formula} for $\xi_2$. As for the second part of our claim, the expression \eqref{xi2_mean_proof} for $\langle\xi_2\rangle$ implies
\begin{align}
\langle [\xi_1,V_1]\rangle + \langle [\xi_2,V_0] \rangle & = \langle [ \Lxio I_0\widetilde{V}_1,\widetilde{V}_1]\rangle+ \langle [\langle\xi_2\rangle,V_0] \rangle\nonumber\\
& = \langle [ \Lxio I_0\widetilde{V}_1,\widetilde{V}_1]\rangle + \frac{1}{2}\langle [[\Lxio I_0\widetilde{V}_1, I_0\widetilde{V}_1],V_0] \rangle\nonumber\\
&  = \langle [ \Lxio I_0\widetilde{V}_1,\widetilde{V}_1]\rangle -\frac{1}{2}\langle [\Lxio \widetilde{V}_1, I_0\widetilde{V}_1] \rangle - \frac{1}{2}\langle [\Lxio I_0\widetilde{V}_1, \widetilde{V}_1] \rangle \nonumber\\
& = 0,
\end{align}
as claimed.

The pattern established at the previous orders in $\epsilon$ continues with the $O(\epsilon^3)$ coefficients of $Z_\epsilon$ and $[\xi_\epsilon, V_\epsilon]$. Vanishing of the third-order coefficients is equivalent to the trio of conditions
\begin{align}
\langle \xi_3\rangle  & = -\frac{1}{2}\fint \int_0^{\theta_1} [\xi_1^{\theta_2},\xi_2^{\theta_1}]\,d\theta_2\,d\theta_1-\frac{1}{2}\fint \int_0^{\theta_1} [\xi_2^{\theta_2},\xi_1^{\theta_1}]\,d\theta_2\,d\theta_1\nonumber\\
& - \frac{1}{6}\fint \int_0^{\theta_1}\int_0^{\theta_2} \bigg([\xi_1^{\theta_3},[\xi_1^{\theta_2},\xi_1^{\theta_1}]]+[[\xi_1^{\theta_3},\xi_1^{\theta_2}],\xi_1^{\theta_1}]\bigg)\,d\theta_3\,d\theta_2\,d\theta_1\label{Z3_vanishing_roto_thm}\\
0 & = [\xi_0,\widetilde{V}_3] + [\xi_1,V_2]^{\text{osc}} + [\xi_2,V_1]^{\text{osc}} + [\xi_3,V_0]\label{commutator3_osc_vanishing_roto_thm}\\
0& =  \langle [\xi_1,V_2]\rangle + \langle [\xi_2,V_1]\rangle + [\langle \xi_3\rangle ,V_0].\label{commutator3_mean_vanishing_roto_thm}
\end{align}
To see that Eq.\,\eqref{Z3_vanishing_roto_thm} determines $\langle \xi_3\rangle$ first use Fubini's theorem and the Lie derivative formula to simplify the double integrals as
\begin{align}
\langle \xi_3\rangle  & = \langle [\widetilde{\xi}_2,I_0\widetilde{V}_1]\rangle + [I_0\widetilde{V}_1,\langle \xi_2\rangle]\nonumber\\
& - \frac{1}{6}\fint \int_0^{\theta_1}\int_0^{\theta_2} \bigg([\xi_1^{\theta_3},[\xi_1^{\theta_2},\xi_1^{\theta_1}]]+[[\xi_1^{\theta_3},\xi_1^{\theta_2}],\xi_1^{\theta_1}]\bigg)\,d\theta_3\,d\theta_2\,d\theta_1.
\end{align}
Next use the same techniques to perform the $\theta_3$ and $\theta_1$ integrations in the triple integral according to
\begin{align}
\langle \xi_3\rangle  & = \langle [\widetilde{\xi}_2,I_0\widetilde{V}_1]\rangle + [I_0\widetilde{V}_1,\langle \xi_2\rangle]\nonumber\\
& = \langle [\widetilde{\xi}_2,I_0\widetilde{V}_1]\rangle + [I_0\widetilde{V}_1,\langle \xi_2\rangle]- \frac{1}{3}\langle[[\Lxio I_0\widetilde{V}_1,I_0\widetilde{V}_1],I_0\widetilde{V}_1] \rangle   \nonumber\\
& - \frac{1}{3}\fint \int_0^{\theta_1} \bigg([I_0\widetilde{V}_1^{\theta_2},[\xi_1^{\theta_2},I_0\widetilde{V}_1^{}]]+[[I_0\widetilde{V}_1^{\theta_2},\xi_1^{\theta_2}],I_0\widetilde{V}_1^{}]\bigg)\,d\theta_3\,d\theta_2.
\end{align}
Finally apply the identity 
\begin{align}
[I_0\widetilde{V}_1^{\theta_2},[\xi_1^{\theta_2},I_0\widetilde{V}_1^{}]] = \frac{1}{2}\partial_{\theta_2} [I_0\widetilde{V}_1^{\theta_2},[I_0\widetilde{V}_1^{\theta_2},I_0\widetilde{V}_1]] -\frac{1}{2}[[\Lxio I_0\widetilde{V}_1^{\theta_2},I_0\widetilde{V}_1^{\theta_2}],I_0\widetilde{V}_1]
\end{align}
to obtain
\begin{align}
\langle \xi_3\rangle & = \langle [\widetilde{\xi}_2,I_0\widetilde{V}_1]\rangle + [I_0\widetilde{V}_1,\langle \xi_2\rangle]- \frac{1}{3}\langle[[\Lxio I_0\widetilde{V}_1,I_0\widetilde{V}_1],I_0\widetilde{V}_1] \rangle    \nonumber\\
& + \frac{1}{2} [\langle[\Lxio I_0\widetilde{V}_1,I_0\widetilde{V}_1]\rangle,I_0\widetilde{V}_1]\nonumber\\
& = \langle [\widetilde{\xi}_2,I_0\widetilde{V}_1]\rangle - \frac{1}{3}\langle[[\Lxio I_0\widetilde{V}_1,I_0\widetilde{V}_1],I_0\widetilde{V}_1] \rangle\nonumber\\
& = \langle [\Lxio I_0\widetilde{V}_2,I_0\widetilde{V}_1]\rangle + \langle [\Lxio I_0[I_0\widetilde{V}_1,\langle V_1\rangle],I_0\widetilde{V}_1] \rangle  \nonumber\\
& + \frac{1}{2}\langle [\Lxio I_0[I_0\widetilde{V}_1,\widetilde{V}_1]^{\text{osc}},I_0\widetilde{V}_1] \rangle + \frac{1}{6} \langle[[\Lxio I_0\widetilde{V}_1,I_0\widetilde{V}_1],I_0\widetilde{V}_1]\rangle.\label{xi3_mean}
\end{align}
For the oscillating part of $\xi_3$ use Eq.\,\eqref{commutator3_osc_vanishing_roto_thm} to obtain the general formula
\begin{align}
\widetilde{\xi}_3 &= I_0[\xi_0,\widetilde{V}_3] + I_0[\xi_1,V_2]^{\text{osc}} + I_0[\xi_2,V_1]^{\text{osc}}.\label{xi3_general_formula}
\end{align}
Using Eq.\,\eqref{xi1_formula} for $\xi_1$ and Eq.\,\eqref{xi2_formula} for $\xi_2$ this formula for $\widetilde{\xi}_3$ may be added to Eq.\,\eqref{xi3_mean} and then manipulated so as to yield \eqref{xi3_formula}. The details of this tedious calculation may be found in Appendix \ref{appA}. The proof will now be complete as soon as we show that if $\xi_1,\xi_2$, and $\xi_3$ are given by Eqs.\,\eqref{xi1_formula}-\eqref{xi3_formula}, respectively, then condition \eqref{commutator3_mean_vanishing_roto_thm} is satisfied automatically. This may be seen by the following direct calculation with $I = \langle [\xi_1,V_2]\rangle + \langle [\xi_2,V_1]\rangle + [\langle \xi_3\rangle ,V_0]$,
\begin{align}
I & = \langle[\mathcal{L}_{\xi_0}I_0\widetilde{V}_1,\widetilde{V}_2] \rangle\nonumber\\
& + \langle [ \Lxio I_0\widetilde{V}_2 + \Lxio I_0[ I_0\widetilde{V}_1, \langle V_1\rangle] + \frac{1}{2}\Lxio I_0[I_0\widetilde{V}_1,\widetilde{V}_1]^{\text{osc}} + \frac{1}{2}[\Lxio I_0\widetilde{V}_1, I_0\widetilde{V}_1],V_1] \rangle \nonumber\\
& -\mathcal{L}_{V_0}\bigg(\langle [\Lxio I_0\widetilde{V}_2,I_0\widetilde{V}_1]\rangle + \langle [\Lxio I_0[I_0\widetilde{V}_1,\langle V_1\rangle],I_0\widetilde{V}_1] \rangle\bigg)\nonumber\\
& -\mathcal{L}_{V_0}\bigg(\frac{1}{2}\langle [\Lxio I_0[I_0\widetilde{V}_1,\widetilde{V}_1]^{\text{osc}},I_0\widetilde{V}_1] \rangle + \frac{1}{6} \langle[[\Lxio I_0\widetilde{V}_1,I_0\widetilde{V}_1],I_0\widetilde{V}_1]\rangle\bigg)\nonumber\\
& = \langle [\mathcal{L}_{\xi_0}I_0\widetilde{V}_1,\widetilde{V}_2]\rangle + \frac{1}{2}\langle[[\mathcal{L}_{\xi_0}I_0\widetilde{V}_1,I_0\widetilde{V}_1],\langle V_1\rangle] \rangle + \frac{1}{2}\langle [[\mathcal{L}_{\xi_0}I_0\widetilde{V}_1,I_0\widetilde{V}_1],\widetilde{V}_1]\rangle\nonumber\\
& - \frac{1}{6}\langle [[\mathcal{L}_{\xi_0}\widetilde{V}_1,I_0\widetilde{V}_1],I_0\widetilde{V}_1]\rangle-\frac{1}{6}\langle [[\mathcal{L}_{\xi_0}I_0\widetilde{V}_1,\widetilde{V}_1],I_0\widetilde{V}_1] \rangle-\frac{1}{6}\langle [[\mathcal{L}_{\xi_0}I_0\widetilde{V}_1,I_0\widetilde{V}_1],\widetilde{V}_1] \rangle\nonumber\\
&+\left\langle \left[\bigg(\widetilde{V}_2 + [I_0\widetilde{V}_1,\langle V_1\rangle] + \frac{1}{2}[I_0\widetilde{V}_1,\widetilde{V}_1]^{\text{osc}}\bigg),\mathcal{L}_{\xi_0}I_0\widetilde{V}_1\right]\right\rangle\nonumber\\
& = \frac{1}{3}\langle [[\mathcal{L}_{\xi_0}I_0\widetilde{V}_1,I_0\widetilde{V}_1],\widetilde{V}_1]\rangle- \frac{1}{6}\langle [[\mathcal{L}_{\xi_0}\widetilde{V}_1,I_0\widetilde{V}_1],I_0\widetilde{V}_1]\rangle-\frac{1}{6}\langle [[\mathcal{L}_{\xi_0}I_0\widetilde{V}_1,\widetilde{V}_1],I_0\widetilde{V}_1] \rangle\nonumber\\
&+\frac{1}{2}\left\langle \left[[I_0\widetilde{V}_1,\widetilde{V}_1]^{\text{osc}},\mathcal{L}_{\xi_0}I_0\widetilde{V}_1\right]\right\rangle\nonumber\\
&=\frac{1}{3}\langle [[\mathcal{L}_{\xi_0}I_0\widetilde{V}_1,I_0\widetilde{V}_1],\widetilde{V}_1]\rangle - \frac{1}{3}\langle [[\widetilde{V}_1,I_0\widetilde{V}_1],\mathcal{L}_{\xi_0}I_0\widetilde{V}_1]\rangle+\frac{1}{3}\langle [[\widetilde{V}_1,\mathcal{L}_{\xi_0}I_0\widetilde{V}_1],I_0\widetilde{V}_1]\rangle\nonumber\\
& = 0.
\end{align}

\end{proof}

\section{Noether's theorem and adiabatic invariants\label{section_adiabatic}}
In the previous Section we explained that all nearly-periodic systems admit a roto-rate vector. In this sense every nearly-periodic system has an \emph{approximate} $U(1)$ symmetry. In this subsection we will show that if a nearly-periodic system happens to have a Hamiltonian structure as well then there is an approximate conserved quantity $\mu_\epsilon = \mu_0 + \epsilon\,\mu_1 + \epsilon^2\,\mu_2 + \dots$ associated with its approximate $U(1)$ symmetry.  In effect we will prove an asymptotic version of Noether's theorem that applies to Hamiltonian nearly-periodic systems. We will work in the setting of presymplectic Hamiltonian systems with $\epsilon$-dependent exact presymplectic structures. In the setting of $\epsilon$-independent exact symplectic Hamiltonian systems \cite{Kruskal_1962} gave an abstract proof of an analogous result, in the sense that formulas were not provided for the approximate conserved quantity. Here we will improve Kruskal's results by providing (the first several terms of) the missing formulas, and by allowing for a much broader class of nearly-periodic Hamiltonian systems. In particular we will provide formulas for $\mu_0,\mu_1,\mu_2,$ and $\mu_3$. For a discussion of some of the subtleties associated with adiabatic invariants for nearly-periodic Poisson systems, see \cite{Omohundro_1986}.

Before discussing the asymptotic version of Noether's theorem it is useful to discuss the usual Noether's theorem in a coordinate-independent manner. We will focus our attention on Noether's theorem for Hamiltonian systems on presymplectic manifolds that admit a $U(1)$ symmetry. 

To that end, suppose $X$ is a vector field on a manifold $Z$ and assume that there is a $1$-form $\vartheta$ and a smooth function $H$ such that $\iota_X\mathbf{d}\vartheta = -\mathbf{d}H$. The dynamical system defined by $X$ is then known as a (presymplectic) Hamiltonian system, the $2$-form $\omega= -\mathbf{d}\vartheta$ is called the presymplectic form, and the scalar $H$ is called the Hamiltonian. Noether's theorem applies to such systems. In particular if $\Phi_\theta:Z\rightarrow Z$ is a $U(1)$-action ($\theta\in U(1) = \mathbb{R}/2\pi$) on $Z$ that leaves the Hamiltonian invariant, $\Phi_\theta^*H = H$, and that leaves the presymplectic form invariant, $\Phi_\theta^*\omega = \omega$, then the scalar $\mu = \iota_\xi\langle\vartheta\rangle$ is a constant of motion for $X$. Here $\xi = \partial_\theta\Phi_\theta\mid_{\theta=0}$ is the infinitesimal generator for the $U(1)$-action and $\langle\vartheta\rangle = (2\pi)^{-1}\int_0^{2\pi}\Phi_\theta^*\vartheta\,d\theta.$ The scalar $\mu$ is the Noether-invariant associated with the $U(1)$-action $\Phi_\theta$. The proof that $\mu$ is a conserved quantity for $X$ follows from the following simple calculation: $\mathcal{L}_{X}\mu = \iota_X\mathbf{d}\iota_\xi\langle\vartheta\rangle = \iota_X\mathcal{L}_{\xi}\langle\vartheta\rangle - \iota_X\iota_\xi \mathbf{d}\langle\vartheta\rangle = \iota_X\mathcal{L}_{\xi}\langle\vartheta\rangle - \mathcal{L}_\xi H = 0$.

Now suppose that $\dot{z} = \epsilon^{-1}V_\epsilon(z)$ defines a nearly-periodic system that happens to be Hamiltonian. Concretely this means the following.
\begin{definition}[Nearly-periodic Hamiltonian system\label{ham_def}]
A nearly periodic system $\dot{z} = \epsilon^{-1}V_\epsilon(z)$ is a \emph{nearly-periodic Hamiltonian system} if there is some $1$-form $\vartheta_\epsilon$ and some function $H_\epsilon$ such that $\iota_{V_\epsilon}\mathbf{d}\vartheta_\epsilon = -\mathbf{d}H_\epsilon$. $H_\epsilon$ and $\vartheta_\epsilon$ are required to depend smoothly on $\epsilon$ in a neighborhood of $\epsilon = 0$.
\end{definition} 
\noindent By mimicking the key parts of the $U(1)$ Noether theorem from the previous paragraph we will now prove that there exists a formal power series $\mu_\epsilon = \mu_0 + \epsilon\,\mu_1 + \dots$ that is constant along integral curves of $V_\epsilon$ to all-orders in $\epsilon$. In other words $\mathcal{L}_{V_\epsilon}\mu_\epsilon = 0$ in the sense of formal power series. The proof will be consistent with this article's goal of avoiding the well-known coordinate transform-based methods.

Before giving the proof it is useful to first give a variant of a technical Lemma originally proved in \cite{Kruskal_1962}. (Kruskal refers to his ``Theorem of Phase Independence" in Section C.1.)
\begin{lemma}[Bootstrapping of $U(1)$ averages\label{bootstrap}]
Fix a nearly-periodic system $\dot{z} = \epsilon^{-1}V_\epsilon(x)$. Suppose that $\tau_\epsilon$ is any differential form on $Z$ that depends smoothly on $\epsilon$. If $\tau_\epsilon$ is constant along the flow of $V_\epsilon$, i.e. $\mathcal{L}_{V_\epsilon}\tau_\epsilon=0$, and is \emph{almost} $U(1)$-invariant in the sense that $\mathcal{L}_{\xi_0}\tau_0=0$, then in fact $\tau_\epsilon$ satisfies $\mathcal{L}_{\xi_\epsilon}\tau_\epsilon = 0$ to all orders in $\epsilon$.
\end{lemma}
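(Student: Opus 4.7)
The strategy is to replace $\tau_\epsilon$ by its $\xi_\epsilon$-average and analyze the difference. Because $\xi_\epsilon$ has (formally) $2\pi$-periodic flow by Definition \ref{roto_def}, the formal average $\bar{\tau}_\epsilon = (2\pi)^{-1}\int_0^{2\pi}\exp(\theta\xi_\epsilon)^*\tau_\epsilon\,d\theta$ is a well-defined formal power series in $\epsilon$, and by construction $\mathcal{L}_{\xi_\epsilon}\bar{\tau}_\epsilon = 0$. Because $[\xi_\epsilon, V_\epsilon] = 0$, pullback along $\exp(\theta\xi_\epsilon)$ commutes with $\mathcal{L}_{V_\epsilon}$, so $\mathcal{L}_{V_\epsilon}\bar{\tau}_\epsilon = 0$ as well. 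I would then set $\delta_\epsilon = \tau_\epsilon - \bar{\tau}_\epsilon$ and reduce the lemma to showing $\delta_\epsilon = 0$ to all orders. The quantity $\delta_\epsilon$ inherits $\mathcal{L}_{V_\epsilon}\delta_\epsilon = 0$ and $\langle\delta_\epsilon\rangle_{\xi_\epsilon} = 0$, and the hypothesis $\mathcal{L}_{\xi_0}\tau_0 = 0$ forces $\delta_0 = 0$.

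A key sublemma is needed to drive the induction: any globally-defined smooth form $\alpha$ on $Z$ satisfying $\mathcal{L}_{V_0}\alpha = 0$ must satisfy $\mathcal{L}_{\xi_0}\alpha = 0$. Away from the zeros of $\xi_0$, one straightens out $\xi_0 = \partial_\theta$ in local coordinates $(\theta,y)$ with $\omega_0 = \omega_0(y)$, splits $\alpha = \alpha_1 + d\theta\wedge\alpha_2$ into components with $\iota_{\partial_\theta}\alpha_1 = \iota_{\partial_\theta}\alpha_2 = 0$, and computes $\mathcal{L}_{V_0}\alpha = \omega_0\partial_\theta\alpha_1 + d\omega_0\wedge\alpha_2 + d\theta\wedge\omega_0\partial_\theta\alpha_2$. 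Vanishing of the $d\theta$-component forces $\partial_\theta\alpha_2 = 0$, so $\alpha_2 = \alpha_2(y)$, and vanishing of the remainder makes $\partial_\theta\alpha_1$ equal to an explicit $\theta$-independent expression. Global $2\pi$-periodicity in $\theta$ then kills any linear-in-$\theta$ piece of $\alpha_1$, forcing $\partial_\theta\alpha_1 = 0$, and continuity extends the conclusion across the (closed) fixed-point set of $\xi_0$.

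With the sublemma in hand, I would induct on the $\epsilon$-order of $\delta_\epsilon$. Given $\delta_k = 0$ for $k < n$, extracting the order-$n$ coefficient of $\mathcal{L}_{V_\epsilon}\delta_\epsilon = 0$ gives $\mathcal{L}_{V_0}\delta_n = 0$, while the order-$n$ coefficient of $\langle\delta_\epsilon\rangle_{\xi_\epsilon} = 0$ collapses to $\langle\delta_n\rangle_{\xi_0} = 0$, because every correction term from the expansion of $\exp(\theta\xi_\epsilon)^*$ that involves $\xi_i$ with $i\geq 1$ multiplies some $\delta_k$ with $k < n$. The sublemma converts $\mathcal{L}_{V_0}\delta_n = 0$ into $\mathcal{L}_{\xi_0}\delta_n = 0$, so $\delta_n$ is $\xi_0$-invariant and therefore equals its own $\xi_0$-average, which vanishes. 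Hence $\delta_n = 0$, closing the induction.

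The main obstacle is the sublemma. Without it one cannot pass from $V_0$-invariance to $\xi_0$-invariance, since $\mathcal{L}_{V_0} = \omega_0\mathcal{L}_{\xi_0} + d\omega_0\wedge\iota_{\xi_0}$, and the extra $d\omega_0\wedge\iota_{\xi_0}$ term admits non-$\xi_0$-invariant solutions \emph{locally}; it is the global $2\pi$-periodicity of $\xi_0$-orbits that eliminates the offending secular-in-$\theta$ contribution. Handling the fixed-point set of $\xi_0$ requires only a brief continuity argument and poses no further difficulty.
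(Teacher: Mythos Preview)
Your argument is correct, and it differs from the paper's in two places worth noting. First, the paper does not subtract the $\xi_\epsilon$-average; instead it invokes the formal near-identity diffeomorphism $T_\epsilon$ with $T_{\epsilon*}\xi_\epsilon=\xi_0$ (from the proof of Theorem~\ref{roto_rate_existence_thm}), pushes $\tau_\epsilon$ forward to $\tau_\epsilon^*$, and then inducts on the vanishing of $\mathcal{L}_{\xi_0}\tau_\epsilon^*$. Your $\delta_\epsilon$ plays the role of their oscillatory piece after transformation, so the inductions run in parallel; your route has the pleasant feature of never introducing $T_\epsilon$, which is closer to the paper's declared goal of avoiding auxiliary coordinate changes. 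Second, for the key sublemma the paper gives a short coordinate-free argument: writing $\mathcal{L}_{V_0}\widetilde{\alpha}=\omega_0\mathcal{L}_{\xi_0}\widetilde{\alpha}+\mathbf{d}\omega_0\wedge\iota_{\xi_0}\widetilde{\alpha}=0$, contracting with $\xi_0$ gives $\omega_0\mathcal{L}_{\xi_0}\iota_{\xi_0}\widetilde{\alpha}=0$, so $\iota_{\xi_0}\widetilde{\alpha}$ is $\xi_0$-invariant, hence (by zero average) zero, whence $\mathcal{L}_{\xi_0}\widetilde{\alpha}=0$. Your flow-box computation reaches the same conclusion but at the cost of a local-coordinates argument and a separate continuity step across the fixed set of $\xi_0$; the paper's version handles fixed points automatically since it never leaves the global setting.
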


\begin{proof}
As mentioned earlier in the proof of Theorem \ref{roto_rate_existence_thm}, \cite{Kruskal_1962} shows that there is a formal near-identity diffeomorphism $T_\epsilon:Z\rightarrow Z$ such that $T_{\epsilon *}\xi_\epsilon = \xi_0$. (In fact there are many such $T_\epsilon$.) Set $V_\epsilon^* = T_{\epsilon*}V_\epsilon$ and $\tau_\epsilon^* = T_{\epsilon *}\tau_\epsilon$. 

Since $\tau_\epsilon$ is constant along the $V_\epsilon$-flow it is also true that $\tau_\epsilon^*$ is constant along the $V_\epsilon^*$-flow, i.e. $\mathcal{L}_{V_\epsilon^*}\tau_\epsilon^*=0$. In light of the fact that $[\xi_0,V_\epsilon^*] = T_{\epsilon*}[\xi_\epsilon,V_\epsilon] = 0$ this implies $\mathcal{L}_{V_\epsilon^*}\mathcal{L}_{\xi_0}\tau_\epsilon^*=0$. The $O(1)$ coefficient of this formal power series identity is $\mathcal{L}_{V_0}\mathcal{L}_{\xi_0}\tau_0 = 0$, which is trivially satisfied because $\tau_\epsilon$ is nearly $U(1)$-invariant by hypothesis. On the other hand, the $O(\epsilon)$ coefficient is $\mathcal{L}_{V_0}\mathcal{L}_{\xi_0}\tau_1^* = 0$, which says that $\mathcal{L}_{\xi_0}\tau_1^*$ is constant along the $V_0$-flow. We claim that this can only be true if $\mathcal{L}_{\xi_0}\tau_1^*=0$. To see this set $\widetilde{\alpha} =\mathcal{L}_{\xi_0}\tau_1^* $. The Lie derivative of $\widetilde{\alpha}$ along $V_0$ is given by
\begin{align}
\mathcal{L}_{V_0}\widetilde{\alpha} &= \omega_0\iota_{\xi_0}\mathbf{d}\widetilde{\alpha} + \mathbf{d}(\omega_0\iota_{\xi_0}\widetilde{\alpha})\nonumber\\
& = \omega_0\mathcal{L}_{\xi_0}\widetilde{\alpha} + \mathbf{d}\omega_0\wedge \iota_{\xi_0}\widetilde{\alpha} = 0.\label{order_one_bootstrap}
\end{align}
Contracting this formula with $\xi_0$ therefore implies $\omega_0 \mathcal{L}_{\xi_0}\iota_{\xi_0}\widetilde{\alpha} =0$. Because the $U(1)$-average of $\widetilde{\alpha}$ is zero and $\omega_0$ is nowhere vanishing this requires $\iota_{\xi_0}\widetilde{\alpha}=0$. But by Eq.\,\eqref{order_one_bootstrap} this implies $\omega_0\mathcal{L}_{\xi_0}\widetilde{\alpha} = 0$, which can only be satisfied if $\widetilde{\alpha}=0$, as desired. This shows, in particular, that $\mathcal{L}_{\xi_0}\tau_\epsilon^* = O(\epsilon^2)$.

To complete the proof we will now show that if, for some integer $n\geq 2$, $\mathcal{L}_{\xi_0}\tau_\epsilon^* = O(\epsilon^n)$ then in fact $\mathcal{L}_{\xi_0}\tau_\epsilon^* = O(\epsilon^{n+1})$. If this is true then, by induction, $\mathcal{L}_{\xi_0}\tau_\epsilon^* =0$ as a formal power series, which would imply the desired result since $ T_\epsilon^*(\mathcal{L}_{\xi_0}\tau_\epsilon^*) = \mathcal{L}_{\xi_\epsilon}\tau_\epsilon$. Because $\mathcal{L}_{\xi_0}\tau_\epsilon^* = O(\epsilon^n)$ the differential forms $\mathcal{L}_{\xi_0}\tau_k^*$ for $k\in\{0,1,\dots, n-1\}$ must each vanish. Therefore $\mathcal{L}_{\xi_0}\tau_\epsilon^*=\epsilon^{n}\,\mathcal{L}_{\xi_0}\tau_n^* + O(\epsilon^{n+1})$. But since $\mathcal{L}_{V_\epsilon^*}\mathcal{L}_{\xi_0}\tau_\epsilon^* = 0$ to all orders in $\epsilon$ this means $\mathcal{L}_{V_0}\mathcal{L}_{\xi_0}\tau_n^* = 0$. Repeating the argument from the previous paragraph with $\widetilde{\alpha} = \mathcal{L}_{\xi_0}\tau_n^*$ then shows that in fact $\mathcal{L}_{\xi_0}\tau_n^*=0$. Therefore $\mathcal{L}_{\xi_0}\tau_\epsilon^*=\epsilon^{n}\,\mathcal{L}_{\xi_0}\tau_n^* + O(\epsilon^{n+1}) = O(\epsilon^{n+1})$, as claimed.

\end{proof}

Next we will show that the limiting roto-rate vector $\xi_0$ associated with a nearly-periodic Hamiltonian system is itself Hamiltonian.

\begin{lemma}[Hamiltonian structure of the Limiting roto-rate\label{energy_frequency}]
If $\dot{z} = \epsilon^{-1}V_\epsilon(z)$ is a nearly-periodic Hamiltonian system with frequency function $\omega_0$, limiting roto-rate $\xi_0$, presymplectic form $-\mathbf{d}\vartheta_\epsilon$, and Hamiltonian $H_\epsilon$ then there exists a function $\mu_0:Z\rightarrow \mathbb{R}$ such that $\omega_0^{-1}\mathbf{d}H_0 = \mathbf{d}\mu_0$. In particular, the limiting roto-rate $\xi_0$ satisfies $\iota_{\xi_0}\mathbf{d}\vartheta_0 = -\mathbf{d}\mu_0$, and is therefore Hamiltonian with presymplectic form $-\mathbf{d}\vartheta_0$ and Hamiltonian $\mu_0$.
\end{lemma}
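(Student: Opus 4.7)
The natural candidate for $\mu_0$ is the classical Noether invariant built from the averaged one-form, namely $\mu_0 = \iota_{\xi_0}\langle\vartheta_0\rangle$. To motivate this choice, observe that to leading order the vector field generating the dynamics is $V_0 = \omega_0\,\xi_0$, so contracting the Hamiltonian relation $\iota_{V_0}\mathbf{d}\vartheta_0 = -\mathbf{d}H_0$ by $\omega_0^{-1}$ already identifies the one-form $\iota_{\xi_0}\mathbf{d}\vartheta_0 = -\omega_0^{-1}\mathbf{d}H_0$. The lemma therefore amounts to showing that this one-form is exact, with potential $-\mu_0$.

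The main computation proceeds as follows. Since $\langle\vartheta_0\rangle$ is by construction invariant under the flow of $\xi_0$, we have $\mathcal{L}_{\xi_0}\langle\vartheta_0\rangle = 0$, so Cartan's magic formula yields
\begin{equation}
\mathbf{d}\mu_0 \;=\; \mathbf{d}\iota_{\xi_0}\langle\vartheta_0\rangle \;=\; -\iota_{\xi_0}\mathbf{d}\langle\vartheta_0\rangle \;=\; -\iota_{\xi_0}\langle\mathbf{d}\vartheta_0\rangle,
\end{equation}
using that $\mathbf{d}$ commutes with pullback and hence with averaging. Next I would exchange the contraction with the $\xi_0$-average: because the flow $\exp(\theta\,\xi_0)$ preserves $\xi_0$ itself, $\iota_{\xi_0}\langle\alpha\rangle = \langle\iota_{\xi_0}\alpha\rangle$ for any form $\alpha$. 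Combined with the leading-order Hamiltonian relation $\omega_0\,\iota_{\xi_0}\mathbf{d}\vartheta_0 = -\mathbf{d}H_0$, this gives $\mathbf{d}\mu_0 = \langle\omega_0^{-1}\mathbf{d}H_0\rangle$.

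To close the argument I would observe that $\omega_0^{-1}\mathbf{d}H_0$ is itself $\xi_0$-invariant, so the average may be dropped. Indeed, $\mathcal{L}_{\xi_0}\omega_0 = 0$ by the definition of a nearly-periodic system, while $\mathcal{L}_{\xi_0}H_0 = 0$ follows by contracting $\iota_{V_0}\mathbf{d}\vartheta_0 = -\mathbf{d}H_0$ with $V_0$ (which gives $\mathcal{L}_{V_0}H_0 = 0$, and then $\mathcal{L}_{\xi_0}H_0 = 0$ because $\omega_0$ is nowhere vanishing). Therefore $\mathbf{d}\mu_0 = \omega_0^{-1}\mathbf{d}H_0$, proving the first claim. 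The second claim, $\iota_{\xi_0}\mathbf{d}\vartheta_0 = -\mathbf{d}\mu_0$, is then just the leading-order Hamiltonian relation rewritten.

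The only step requiring mild care is the commutation $\iota_{\xi_0}\langle\cdot\rangle = \langle\iota_{\xi_0}\cdot\rangle$, which rests on the tautology $\exp(\theta\,\xi_0)_*\xi_0 = \xi_0$; the fact that $\xi_0$ may have zeros does not affect this identity. Notably, no topological hypotheses on $Z$ are required, because $\mu_0$ is constructed \emph{explicitly} as $\iota_{\xi_0}\langle\vartheta_0\rangle$ rather than extracted from a closed one-form via the Poincar\'e lemma. This is the principal advantage of the Noether-style ansatz over attempting to integrate $\omega_0^{-1}\mathbf{d}H_0$ directly.
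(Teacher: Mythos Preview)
Your proposal is correct and follows essentially the same route as the paper: both take $\mu_0 = \iota_{\xi_0}\langle\vartheta_0\rangle$, invoke Cartan's formula together with $\mathcal{L}_{\xi_0}\langle\vartheta_0\rangle = 0$ to get $\mathbf{d}\mu_0 = -\iota_{\xi_0}\mathbf{d}\langle\vartheta_0\rangle$, and then use the $\xi_0$-invariance of $\omega_0$ and $H_0$ to identify this with $\omega_0^{-1}\mathbf{d}H_0$. The only cosmetic difference is that the paper averages the identity $\omega_0\iota_{\xi_0}\mathbf{d}\vartheta_0 = -\mathbf{d}H_0$ first and then computes $\mathbf{d}\mu_0$, whereas you compute $\mathbf{d}\mu_0$ first and then drop the average; the underlying commutation facts are identical.
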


\begin{proof}
Because $\iota_{V_\epsilon}\mathbf{d}\vartheta_\epsilon = -\mathbf{d}H_\epsilon$ and everything depends smoothly on $\epsilon$ it must also be true that $\omega_0\iota_{\xi_0}\mathbf{d}\vartheta_0 = -\mathbf{d}H_0$. Contracting both sides of this identity with $\xi_0$ implies $\mathcal{L}_{\xi_0}H_0 = 0$. Pulling back the identity along $\exp(\theta\,\xi_0)$ and then averaging over $\theta$ therefore implies $\omega_0\iota_{\xi_0}\mathbf{d}\langle \vartheta_0\rangle =-\mathbf{d}H_0 $. It follows that the function $\mu_0 = \iota_{\xi_0}\langle\vartheta_0 \rangle $ satisfies
\begin{align}
\mathbf{d}\mu_0 = \mathbf{d} \iota_{\xi_0}\langle\vartheta_0 \rangle = -\iota_{\xi_0}\mathbf{d}\langle \vartheta_0\rangle = \omega_0^{-1}\mathbf{d}H_0.
\end{align}
\end{proof}

Finally we will prove the existence of an adiabatic invariant for any nearly-periodic Hamiltonian system.

\begin{theorem}[existence of the adiabatic invariant\label{mu_exists_thm}]
Let $\dot{z} = \epsilon^{-1}V_\epsilon(z)$ be a Hamiltonian nearly-periodic system with presymplectic form $-\mathbf{d}\vartheta_\epsilon$ and Hamiltonian $H_\epsilon$. If $\xi_\epsilon$ denotes the associated roto-rate vector and $\overline{\vartheta}_\epsilon = (2\pi)^{-1}\int_0^{2\pi}\exp(\theta\,\xi_\epsilon)^*\vartheta_\epsilon\,d\theta$ denotes the formal $U(1)$ average of $\vartheta_\epsilon$ associated with $\xi_\epsilon$, then the formal power series
\begin{align}
\mu_\epsilon = \iota_{\xi_\epsilon}\overline{\vartheta}_\epsilon
\end{align}
satisfies $\mathcal{L}_{V_\epsilon}\mu_\epsilon = 0$. 

\end{theorem}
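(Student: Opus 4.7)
My plan is to mimic, at the level of formal power series in $\epsilon$, the standard coordinate-free Noether calculation that was sketched in the paragraph just before Definition \ref{ham_def}. The only differences are that the symmetry generator $\xi_\epsilon$ is a formal series rather than an honest vector field, and that the averaging is performed against the flow of $\xi_\epsilon$ rather than a genuine $U(1)$-action. Throughout I will freely use that $\exp(2\pi\,\xi_\epsilon) = \mathrm{id}_Z$ as a formal series (second property in Definition \ref{roto_def}) and that $[\xi_\epsilon,V_\epsilon] = 0$ (first property).

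The first step is to apply Cartan's magic formula to $\mu_\epsilon = \iota_{\xi_\epsilon}\overline{\vartheta}_\epsilon$:
\begin{align*}
\mathcal{L}_{V_\epsilon}\mu_\epsilon = \iota_{V_\epsilon}\mathbf{d}\iota_{\xi_\epsilon}\overline{\vartheta}_\epsilon = \iota_{V_\epsilon}\mathcal{L}_{\xi_\epsilon}\overline{\vartheta}_\epsilon - \iota_{V_\epsilon}\iota_{\xi_\epsilon}\mathbf{d}\overline{\vartheta}_\epsilon.
\end{align*}
I would then dispose of the first term by verifying that the average $\overline{\vartheta}_\epsilon$ is annihilated by $\mathcal{L}_{\xi_\epsilon}$. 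Differentiating under the $\theta$-integral gives $\mathcal{L}_{\xi_\epsilon}\overline{\vartheta}_\epsilon = (2\pi)^{-1}\int_0^{2\pi}\partial_\theta[\exp(\theta\,\xi_\epsilon)^*\vartheta_\epsilon]\,d\theta$, which collapses to the boundary term $(2\pi)^{-1}(\exp(2\pi\,\xi_\epsilon)^*\vartheta_\epsilon - \vartheta_\epsilon)$ and vanishes because the $\xi_\epsilon$-flow is $2\pi$-periodic as a formal series.

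The second step is to rewrite $-\iota_{V_\epsilon}\iota_{\xi_\epsilon}\mathbf{d}\overline{\vartheta}_\epsilon = \iota_{\xi_\epsilon}\iota_{V_\epsilon}\mathbf{d}\overline{\vartheta}_\epsilon$ using antisymmetry and then to push $\iota_{V_\epsilon}$ through the angular average. Because $[\xi_\epsilon,V_\epsilon]=0$ the vector field $V_\epsilon$ is invariant under the pullback $\exp(\theta\,\xi_\epsilon)^*$, so
\begin{align*}
\iota_{V_\epsilon}\mathbf{d}\overline{\vartheta}_\epsilon = \frac{1}{2\pi}\int_0^{2\pi}\exp(\theta\,\xi_\epsilon)^*\iota_{V_\epsilon}\mathbf{d}\vartheta_\epsilon\,d\theta = -\frac{1}{2\pi}\int_0^{2\pi}\mathbf{d}\exp(\theta\,\xi_\epsilon)^*H_\epsilon\,d\theta = -\mathbf{d}\overline{H}_\epsilon,
\end{align*}
where $\overline{H}_\epsilon$ denotes the $\xi_\epsilon$-average of $H_\epsilon$. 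Putting the pieces together yields $\mathcal{L}_{V_\epsilon}\mu_\epsilon = -\iota_{\xi_\epsilon}\mathbf{d}\overline{H}_\epsilon = -\mathcal{L}_{\xi_\epsilon}\overline{H}_\epsilon$, and this vanishes by exactly the same boundary-term argument used for $\overline{\vartheta}_\epsilon$, since $\overline{H}_\epsilon$ is manifestly $\xi_\epsilon$-invariant as a formal series.

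The calculation is essentially classical Noether for a $U(1)$-action; the only thing I would take care with is that every identity be read as one between coefficients of formal power series in $\epsilon$, so that interchanges of $\mathbf{d}$, $\iota$, and the $\theta$-integral are justified term by term and the ``boundary vanishes'' steps are legal consequences of $\exp(2\pi\,\xi_\epsilon) = \mathrm{id}_Z$ as a formal series. I do not expect to need Lemma \ref{bootstrap} or Lemma \ref{energy_frequency} directly here, though Lemma \ref{energy_frequency} provides a useful consistency check at the leading order, where $\mu_0 = \iota_{\xi_0}\langle\vartheta_0\rangle$ recovers exactly the Hamiltonian of the limiting roto-rate.
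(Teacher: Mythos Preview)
Your argument is correct and follows the same Cartan-formula Noether skeleton as the paper, but the way you handle the second term $\iota_{\xi_\epsilon}\iota_{V_\epsilon}\mathbf{d}\overline{\vartheta}_\epsilon$ is genuinely different. The paper invokes Lemma~\ref{bootstrap} (together with Lemma~\ref{energy_frequency}) to upgrade the obvious $\epsilon=0$ invariance of $H_\epsilon$ and $\mathbf{d}\vartheta_\epsilon$ to the all-orders identities $\mathcal{L}_{\xi_\epsilon}H_\epsilon=0$ and $\mathcal{L}_{\xi_\epsilon}\mathbf{d}\vartheta_\epsilon=0$, from which it deduces $\mathbf{d}\overline{\vartheta}_\epsilon=\mathbf{d}\vartheta_\epsilon$ and then $\iota_{\xi_\epsilon}\iota_{V_\epsilon}\mathbf{d}\vartheta_\epsilon=-\mathcal{L}_{\xi_\epsilon}H_\epsilon=0$. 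You instead use only $[\xi_\epsilon,V_\epsilon]=0$ to pull $\iota_{V_\epsilon}$ through the $\theta$-average, landing on $-\mathbf{d}\overline{H}_\epsilon$, and then kill $\mathcal{L}_{\xi_\epsilon}\overline{H}_\epsilon$ by the same boundary-term trick used for $\overline{\vartheta}_\epsilon$. Your route is more self-contained for this theorem; the paper's route, however, establishes the auxiliary facts $\mathbf{d}\overline{\vartheta}_\epsilon=\mathbf{d}\vartheta_\epsilon$ and $\mathcal{L}_{\xi_\epsilon}H_\epsilon=0$ along the way, and these are reused in the proof of Theorem~\ref{main_result} (e.g.\ to show $\iota_{\xi_\epsilon}\mathbf{d}\vartheta_\epsilon=-\mathbf{d}\mu_\epsilon$), so the apparent detour through Lemmas~\ref{bootstrap} and~\ref{energy_frequency} is not wasted.
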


\begin{proof}
First observe that the $0$-form $H_\epsilon$ and the $2$-form $\mathbf{d}\vartheta_\epsilon$ are each constant along the flow $V_\epsilon$. Indeed, $\mathcal{L}_{V_\epsilon}H_\epsilon = \iota_{V_\epsilon}\mathbf{d}H_\epsilon =-\iota_{V_\epsilon}\iota_{V_\epsilon}\mathbf{d}\vartheta_\epsilon = 0$, and $\mathcal{L}_{V_\epsilon}\mathbf{d}\vartheta_\epsilon = \mathbf{d}\iota_{V_\epsilon}\mathbf{d}\vartheta_\epsilon = -\mathbf{d}\mathbf{d}H_\epsilon = 0$. (This is actually a general fact about presymplectic Hamiltonian systems.) Each of these forms is also nearly $U(1)$-invariant in the sense that $\mathcal{L}_{\xi_0}H_0 = 0$ and $\mathcal{L}_{\xi_0}\mathbf{d}\vartheta_0 = 0$. This can be seen by appealing to Lemma \ref{energy_frequency} and computing as follows:
\begin{align}
\mathcal{L}_{\xi_0}H_0& = \iota_{\xi_0}\mathbf{d}H_0 = \iota_{\xi_0}\omega_0\mathbf{d}\mu_0 = - \omega_0\iota_{\xi_0}\iota_{\xi_0}\mathbf{d}\vartheta_\epsilon = 0\\
\mathcal{L}_{\xi_0}\mathbf{d}\vartheta_0& = \mathbf{d}\iota_{\xi_0}\mathbf{d}\vartheta_0 = -\mathbf{d}\mathbf{d}\mu_0 = 0.
\end{align}
Therefore Lemma \ref{bootstrap} implies
\begin{align}
\mathcal{L}_{\xi_\epsilon}H_\epsilon& = 0\label{bootstrap_hamiltonian}\\
\mathcal{L}_{\xi_\epsilon}\mathbf{d}\vartheta_\epsilon & = 0,\label{bootstrap_two_form}
\end{align}
as formal power series.

It is now possible to directly compute $\mathcal{L}_{V_\epsilon}\mu_\epsilon$ using the formula
\begin{align}
\mathcal{L}_{V_\epsilon}\iota_{\xi_\epsilon}\overline{\vartheta}_\epsilon = \iota_{V_\epsilon}\mathbf{d}\iota_{\xi_\epsilon}\overline{\vartheta}_\epsilon =\iota_{V_\epsilon}\mathcal{L}_{\xi_\epsilon}\overline{\vartheta}_\epsilon + \iota_{\xi_\epsilon}\iota_{V_\epsilon}\mathbf{d}\overline{\vartheta}_\epsilon.
\end{align}
The first term on the right-hand-side vanishes to all orders in $\epsilon$ because 
\begin{align}
\mathcal{L}_{\xi_\epsilon}\overline{\vartheta}_\epsilon &= \frac{1}{2\pi}\int_0^{2\pi} \exp(\theta\,\xi_\epsilon)^*\mathcal{L}_{\xi_\epsilon}\vartheta_\epsilon\,d\theta\nonumber\\
& = \frac{1}{2\pi}\int_0^{2\pi} \frac{d}{d\theta}  \exp(\theta\,\xi_\epsilon)^*\vartheta_\epsilon\,d\theta\nonumber\\
& =  \exp(2\pi\,\xi_\epsilon)^*\vartheta_\epsilon -  \exp(0\,\xi_\epsilon)^*\vartheta_\epsilon\nonumber\\
&=\vartheta_\epsilon - \vartheta_\epsilon = 0.\label{bar_vartheta_is_averaged}
\end{align}
The second term on the right-hand-side also vanishes to all orders because, by Eq.\,\eqref{bootstrap_two_form}, 
\begin{align}
\mathbf{d}\overline{\vartheta}_\epsilon =\fint \exp(\theta\,\xi_\epsilon)^*\mathbf{d}\vartheta_\epsilon\,d\theta = \mathbf{d}\vartheta_\epsilon,\label{two_form_is_averaged}
\end{align}
which implies 
\begin{align}
\iota_{\xi_\epsilon}\iota_{V_\epsilon}\mathbf{d}\overline{\vartheta}_\epsilon = \iota_{\xi_\epsilon}\iota_{V_\epsilon}\mathbf{d}{\vartheta}_\epsilon = -\iota_{\xi_\epsilon}\mathbf{d}H_\epsilon = -\mathcal{L}_{\xi_\epsilon}H_\epsilon = 0,
\end{align}
by Eq.\,\eqref{bootstrap_hamiltonian}.
\end{proof}

According to this Theorem the quantity $\mu_\epsilon = \iota_{\xi_\epsilon}\overline{\vartheta}_\epsilon$ is an adiabatic invariant associated with any given nearly-periodic Hamiltonian system. In fact $\mu_\epsilon$ is equivalent to the adiabatic invariant discussed in \cite{Kruskal_1962} when the presymplectic form $-\mathbf{d}\vartheta_\epsilon$ is globally equivalent to the canonical symplectic form $\mathbf{d}q^i\wedge\mathbf{d}p_i$. (Note that no degenerate presymplectic form is even locally equivalent to the canonical symplectic form.) Therefore the first four coefficients of the expansion $\mu_\epsilon = \mu_0 + \epsilon\,\mu_1 + \epsilon^2\,\mu_2 + \dots$, expressed in terms of $V_\epsilon$, comprise the main objective of this Article. In principle the computation of these coefficients may be achieved using Theorem \ref{roto_formula_thm}, which gives explicit formulas for $\xi_0,\xi_1,\xi_2,\xi_3$ in terms of $V_\epsilon$. Indeed, with knowledge of $\xi_\epsilon$ the one-form $\overline{\vartheta}_\epsilon$ may be computed directly by expanding $\exp(\theta\,\xi_\epsilon)$ in its formal power series in $\epsilon$. Once $\overline{\vartheta}_\epsilon$ has been computed $\mu_\epsilon$ can be obtained by merely forming the contraction $\iota_{\xi_\epsilon}\overline{\vartheta}_\epsilon$.

The following Theorem and its proof performs such a calculation and records the resulting formulas for $\mu_0,\mu_1,\mu_2,\mu_3$. However, the proof does \emph{not} proceed along the direct route of first computing $\overline{\vartheta}_\epsilon$. Instead it employs a method that reveals a striking feature of the series $\mu_\epsilon = \iota_{\xi_\epsilon}\overline{\vartheta}_\epsilon$: if $\vartheta_\epsilon$ is subject to the gauge transformation $\vartheta_\epsilon\mapsto \vartheta_\epsilon + \alpha_\epsilon = \vartheta^\prime_\epsilon$, with $\alpha_\epsilon$ closed, then $\mu_\epsilon$ changes by at most a constant. This property may be seen abstractly using the following simple calculation,
\begin{align}
\mathbf{d}\iota_{\xi_\epsilon} \overline{\vartheta}_\epsilon^\prime & = \mathbf{d}\fint\iota_{\xi_\epsilon} \exp(\theta\,\xi_\epsilon)^*(\vartheta_\epsilon + \alpha_\epsilon)\,d\theta \nonumber\\
& = \fint \exp(\theta\,\xi_\epsilon)^*\mathbf{d}\iota_{\xi_\epsilon}(\vartheta_\epsilon + \alpha_\epsilon)\,d\theta\nonumber\\
& =  \fint \exp(\theta\,\xi_\epsilon)^*\mathcal{L}_{\xi_\epsilon}(\vartheta_\epsilon + \alpha_\epsilon)\,d\theta - \fint \exp(\theta\,\xi_\epsilon)^*\iota_{\xi_\epsilon}\mathbf{d}(\vartheta_\epsilon + \alpha_\epsilon)\,d\theta \nonumber\\
& = \fint \frac{d}{d\theta}\exp(\theta\,\xi_\epsilon)^*(\vartheta_\epsilon + \alpha_\epsilon)\,d\theta - \fint \exp(\theta\,\xi_\epsilon)^*\iota_{\xi_\epsilon}\mathbf{d}\vartheta_\epsilon\,d\theta\nonumber\\
& =  - \fint \exp(\theta\,\xi_\epsilon)^*\iota_{\xi_\epsilon}\mathbf{d}\vartheta_\epsilon\,d\theta,\label{indirect_homotopy}
\end{align}
which shows that $\mathbf{d}\mu_\epsilon$ is unchanged when $\vartheta_\epsilon$ is subject to a gauge transformation. In other words $\mathbf{d}\mu_\epsilon$ depends on $\vartheta_\epsilon$ only through the presymplectic form $-\mathbf{d}\vartheta_\epsilon$. The statement and proof of the following Theorem make this important property manifestly clear, order-by-order in $\epsilon$.

\begin{theorem}[Formulas for the adiabatic invariant\label{main_result}]
Suppose $\dot{z} = \epsilon^{-1}V_\epsilon(z)$ is a nearly-periodic Hamiltonian system with presymplectic form $-\mathbf{d}\vartheta_\epsilon$, Hamiltonian $H_\epsilon$, and limiting roto-rate vector $\xi_0$. The system's roto-rate vector $\xi_\epsilon$ is Hamiltonian in the sense that 
\begin{align}
\iota_{\xi_\epsilon}\mathbf{d}\vartheta_\epsilon = -\mathbf{d}\mu_\epsilon,\label{hamiltonian_roto}
\end{align} 
where $\mu_\epsilon$ is the system's adiabatic invariant. Moreover, the first four coefficients of the series $\mu_\epsilon = \mu_0 + \epsilon\,\mu_1 + \epsilon^2\,\mu_2 + \dots$ are given by 
\begin{align}
\mu_0 & =\iota_{\xi_0}\langle \vartheta_0\rangle  \label{mu0_formula}\\
\mu_1& = \iota_{\xi_0}\langle \vartheta_1\rangle - \mathcal{L}_{I_0\widetilde{V}_1}\mu_0\label{mu1_formula}\\
\mu_2&  =  \iota_{\xi_0}\langle \vartheta_2\rangle +\frac{1}{2}\left\langle \mathbf{d}\vartheta_0(\mathcal{L}_{\xi_0}I_0\widetilde{V}_1,I_0\widetilde{V}_1)\right\rangle+ \fint \bigg(   \frac{1}{2}\mathcal{L}_{Z_{1,\theta}}(\mu_1+\mu_1^\theta) +\mathcal{L}_{Z_{2,\theta}}\mu_0\bigg)\,d\theta\label{mu2_formula}\\
\mu_3& = \iota_{\xi_0}\langle \vartheta_3\rangle +\frac{2}{3}\langle \mathbf{d}\vartheta_2(I_0\widetilde{V}_1,\xi_0)\rangle -\frac{2}{3}\langle\mathbf{d}\vartheta_2\rangle(I_0\widetilde{V}_1,\xi_0) - \frac{1}{3}\langle \mathbf{d}\vartheta_1(\mathcal{L}_{\xi_0}I_0\widetilde{V}_1,I_0\widetilde{V}_1)\rangle\nonumber\\
& + \frac{1}{3}\left\langle \iota_{\mathcal{L}_{\xi_0}I_0\widetilde{V}_1}\mathbf{d}\vartheta_1\right\rangle(I_0\widetilde{V}_1) - \frac{1}{6}\langle \mathbf{d}\vartheta_0([\mathcal{L}_{\xi_0}I_0\widetilde{V}_1,I_0\widetilde{V}_1], I_0\widetilde{V}_1)\rangle \nonumber\\
&+ \frac{1}{6}\mathbf{d}\vartheta_0(\langle [\mathcal{L}_{\xi_0}I_0\widetilde{V}_1,I_0\widetilde{V}_1]\rangle,I_0\widetilde{V}_1) + \frac{1}{3}\langle \mathbf{d}\vartheta_0(\mathcal{L}_{\xi_0}I_0\widetilde{V}_1,I_0\widetilde{V}_2)\rangle\nonumber\\
& + \frac{1}{3}\langle \mathbf{d}\vartheta_0(\mathcal{L}_{\xi_0}I_0\widetilde{V}_1, I_0[I_0\widetilde{V}_1,\langle V_1\rangle])\rangle+\frac{1}{6}\langle \mathbf{d}\vartheta_0(\mathcal{L}_{\xi_0}I_0\widetilde{V}_1,I_0[I_0\widetilde{V}_1,\widetilde{V}_1]^{\text{osc}})\rangle\nonumber\\
& + \frac{1}{6}\langle  \langle\mathbf{d}\vartheta_1 \rangle(\mathcal{L}_{\xi_0}I_0\widetilde{V}_1,I_0\widetilde{V}_1)\rangle - \frac{1}{6}\mathcal{L}_{I_0\widetilde{V}_1}\langle \mathbf{d}\vartheta_0(\mathcal{L}_{\xi_0}I_0\widetilde{V}_1,I_0\widetilde{V}_1) \rangle\nonumber\\
&+\fint\bigg(\frac{2}{3}\mathcal{L}_{Z_{2,\theta}}\mu_1 + \frac{1}{3}\mathcal{L}_{Z_{2,\theta}}\mu_1^\theta + \frac{1}{6}\mathcal{L}^2_{Z_{1,\theta}}\mu_1^\theta + \frac{1}{3}\mathcal{L}_{Z_{1,\theta}}\mu_2\bigg)\,d\theta\nonumber\\
& + \fint \mathcal{L}_{Z_{3,\theta} + \frac{1}{6}[Z_{1,\theta},Z_{2,\theta}]}\mu_0\,d\theta \label{mu3_formula}
\end{align}
where the vector fields $Z_{1,\theta},Z_{2,\theta},Z_{3,\theta}$ are given by
\begin{align}
Z_{1,\theta} & = I_0\{\widetilde{V}_1\}\label{Z1_thm}\\
Z_{2,\theta} & =   I_0\{\widetilde{V}_2\} +   I_0[ I_0\{\widetilde{V}_1\}, \langle V_1\rangle] + \frac{1}{2}  I_0\{[I_0\widetilde{V}_1^{},\widetilde{V}_1^{}]\}^{\text{osc}} - \frac{1}{2} [I_0\widetilde{V}_1, I_0\widetilde{V}_1^{\theta}]\label{Z2_thm}\\
Z_{3,\theta} & = - \frac{1}{2}\int_0^\theta I_0[\langle [\Lxio I_0\widetilde{V}_1,I_0\widetilde{V}_1]\rangle,\widetilde{V}_1^{\theta_1}]\,d\theta_1\nonumber\\
& +I_0\{\widetilde{V}_3\} +  I_0[ I_0\{\widetilde{V}_1\},\langle V_2\rangle]^{\text{osc}} +I_0[ I_0\{\widetilde{V}_2\},\langle V_1\rangle]^{\text{osc}}\nonumber\\
&+\frac{1}{2} I_0\{[I_0\widetilde{V}_2^{},\widetilde{V}_1^{}]\}^{\text{osc}}+ \frac{1}{2}I_0\{[I_0\widetilde{V}_1^{},\widetilde{V}_2^{}]\}^{\text{osc}}+\frac{1}{3}I_0\{[I_0\widetilde{V}_1^{},[I_0\widetilde{V}_1^{},\widetilde{V}_1^{}]]\}^{\text{osc}}\nonumber\\
&+I_0[  I_0[ I_0\{\widetilde{V}_1\}, \langle V_1\rangle] ,\langle V_1\rangle]
+ \frac{1}{2}I_0[    I_0\{[I_0\widetilde{V}_1^{},\widetilde{V}_1^{}]\}^{\text{osc}} ,\langle V_1\rangle]\nonumber\\
 &+\frac{1}{2}I_0\{[I_0\widetilde{V}_1^{},[I_0\widetilde{V}_1^{},\langle V_1\rangle]]\}^{\text{osc}}+ \frac{1}{12} [I_0(\widetilde{V}_1^{\theta}+\widetilde{V}_1),[I_0\widetilde{V}_1,I_0\widetilde{V}_1^{\theta}]]+\frac{1}{2}\{[I_0\widetilde{V}_1,I_0\widetilde{V}_2]\}\nonumber\\
 &+\frac{1}{2} \bigg[  I_0\{\widetilde{V}_2\} +   I_0[ I_0\{\widetilde{V}_1\}, \langle V_1\rangle] + \frac{1}{2}  I_0\{[I_0\widetilde{V}_1,\widetilde{V}_1]\}^{\text{osc}},I_0(\widetilde{V}_1 + \widetilde{V}_1^\theta)\bigg],\label{Z3_thm}
\end{align}
where $\{A\} = A^\theta - A$ for any vector field $A$.
\end{theorem}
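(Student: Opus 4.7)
The proof splits naturally into two parts. For the Hamiltonian property \eqref{hamiltonian_roto}, a short direct computation suffices: by Cartan's magic formula,
\begin{align*}
\mathbf{d}\mu_\epsilon = \mathbf{d}\iota_{\xi_\epsilon}\overline{\vartheta}_\epsilon = \mathcal{L}_{\xi_\epsilon}\overline{\vartheta}_\epsilon - \iota_{\xi_\epsilon}\mathbf{d}\overline{\vartheta}_\epsilon,
\end{align*}
and the first term vanishes by the boundary-cancellation identity \eqref{bar_vartheta_is_averaged} while the second reduces to $-\iota_{\xi_\epsilon}\mathbf{d}\vartheta_\epsilon$ via \eqref{two_form_is_averaged}.

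For the explicit formulas, the plan is to rewrite $\mu_\epsilon$ using the fact that $\xi_\epsilon$ commutes with its own flow,
\begin{align*}
\mu_\epsilon = \fint \exp(\theta\,\xi_\epsilon)^*\iota_{\xi_\epsilon}\vartheta_\epsilon\,d\theta,
\end{align*}
and then decompose $\exp(\theta\,\xi_\epsilon) = \exp(\theta\,\xi_0)\circ\exp(W_{\epsilon,\theta})$ with $W_{\epsilon,\theta} = \mathrm{ln}(\exp(-\theta\,\xi_0)\circ\exp(\theta\,\xi_\epsilon)) = \epsilon\,Z_{1,\theta} + \epsilon^2\,Z_{2,\theta} + \epsilon^3\,Z_{3,\theta} + O(\epsilon^4)$. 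To identify the $Z_{k,\theta}$ with the expressions \eqref{Z1_thm}-\eqref{Z3_thm}, apply Lemma \ref{BCH} with $A=\theta\,\xi_0$ and $B=\theta\,(\xi_1+\epsilon\,\xi_2+\epsilon^2\,\xi_3+\cdots)$, substitute the formulas \eqref{xi1_formula}-\eqref{xi3_formula} from Theorem \ref{roto_formula_thm}, and carry out the resulting $\theta$-integrals by repeatedly using the identity $\exp(s\,\xi_0)^*\mathcal{L}_{\xi_0}Y = \partial_s\,Y^s$ (which converts $\theta$-integrated $\Lxio$-terms into boundary differences $\{\cdot\}$), together with the fact that $I_0$ commutes with $\exp(s\,\xi_0)^*$ since $\mathcal{L}_{V_0}$ does.

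With $Z_{k,\theta}$ in hand, expand
\begin{align*}
\mu_\epsilon = \fint \exp(W_{\epsilon,\theta})^*\bigl(\iota_{\xi_\epsilon}\vartheta_\epsilon\bigr)^\theta\,d\theta = \fint \sum_{n\geq 0}\frac{1}{n!}\mathcal{L}_{W_{\epsilon,\theta}}^n\bigl(\iota_{\xi_\epsilon}\vartheta_\epsilon\bigr)^\theta\,d\theta
\end{align*}
order by order in $\epsilon$, feeding in the Taylor expansions of $\xi_\epsilon$ and $\vartheta_\epsilon$. At each order the term $\iota_{\xi_0}\langle\vartheta_k\rangle$ drops out from the zeroth-order BCH correction, the recursive pieces $\mathcal{L}_{Z_{k,\theta}}\mu_j$ appear because $\exp(W_{\epsilon,\theta})$ acts as an approximate change of coordinates on lower-order invariants, and the remaining terms are reorganized using the identity $\iota_{[X,Y]} = \mathcal{L}_X\iota_Y - \iota_Y\mathcal{L}_X$, Cartan's formula, the $\xi_0$-invariance of $\mu_0$ and $\mathbf{d}\mu_0$ coming from Lemma \ref{energy_frequency}, and the crucial substitution $\iota_{\xi_0}\mathbf{d}\vartheta_0 = -\mathbf{d}\mu_0$. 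This last step is what converts $\mathcal{L}_{\xi_0}\vartheta_0$ into a gradient and produces the manifestly gauge-invariant $\mathbf{d}\vartheta_k$-contractions appearing in \eqref{mu2_formula} and \eqref{mu3_formula}. The $O(\epsilon)$ calculation is straightforward and already reproduces \eqref{mu1_formula}; the $O(\epsilon^2)$ calculation proceeds analogously using the $\tfrac{1}{2}\mathcal{L}_{Z_{1,\theta}}^2$ cross-term.

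The main obstacle is the bookkeeping at order $\epsilon^3$: the $\tfrac{1}{6}\mathcal{L}_{W_{\epsilon,\theta}}^3$ piece produces triple brackets of $Z_{1,\theta}$, and additional cross terms $\mathcal{L}_{Z_{1,\theta}}\mathcal{L}_{Z_{2,\theta}}$, $\mathcal{L}_{Z_{1,\theta}}(\iota_{\xi_j}\vartheta_l)^\theta$, and $\mathcal{L}_{Z_{2,\theta}}(\iota_{\xi_j}\vartheta_l)^\theta$ all contribute. Reducing these to the compact form \eqref{mu3_formula} will require repeated integration-by-parts in $\theta$ (to trade $\partial_\theta$ for $\mathcal{L}_{\xi_0}$), multiple applications of the Jacobi identity analogous to the final manipulations in the proof of Lemma \ref{BCH}, and careful separation of mean and oscillating parts. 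A useful self-check throughout is that the final answer must depend on $\vartheta_\epsilon$ only through $\mathbf{d}\vartheta_\epsilon$ outside of the explicit $\iota_{\xi_0}\langle\vartheta_k\rangle$ terms, as guaranteed by the gauge calculation \eqref{indirect_homotopy}.
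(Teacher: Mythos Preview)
Your argument for the Hamiltonian property \eqref{hamiltonian_roto} is exactly the paper's. Your route to the explicit formulas, however, is genuinely different. You propose a direct Taylor expansion of $\mu_\epsilon = \fint \exp(W_{\epsilon,\theta})^*(\iota_{\xi_\epsilon}\vartheta_\epsilon)^\theta\,d\theta$, whereas the paper instead applies Stokes' theorem over the cylinder $S^1\times[0,\epsilon]$ embedded via $S(\theta,\lambda)=\exp(\theta\,\xi_\lambda)(z)$ to obtain the all-orders formula
\[
\mu_\epsilon = \iota_{\xi_0}\langle\vartheta_\epsilon\rangle + \fint\int_0^\epsilon [\exp(Z_{\lambda,\theta})^*\mathbf{d}\vartheta_\epsilon^\theta]\bigl(\phi(\mathcal{L}_{Z_{\lambda,\theta}})\partial_\lambda Z_{\lambda,\theta},\,\xi_\lambda\bigr)\,d\lambda\,d\theta,
\]
and only then expands in $\epsilon$. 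The Stokes step buys two things your direct expansion has to work for. First, gauge invariance is manifest from the outset: everything except the leading $\iota_{\xi_0}\langle\vartheta_\epsilon\rangle$ involves only $\mathbf{d}\vartheta_\epsilon$, so you never have to chase cancellations among bare $\vartheta_k$ terms. Second, because the integrand is already a two-form contraction with $\xi_\lambda$ in the second slot, the paper can invoke the order-by-order consequences of \eqref{hamiltonian_roto} and of $\mathcal{L}_{\xi_\epsilon}\mathbf{d}\vartheta_\epsilon=0$ (namely $\mathbf{d}\vartheta_k^\theta + \mathcal{L}_{Z_{1,\theta}}\mathbf{d}\vartheta_{k-1}^\theta+\cdots = \mathbf{d}\vartheta_k$ and $\sum_j\iota_{\xi_j}\mathbf{d}\vartheta_{k-j}=-\mathbf{d}\mu_k$) to convert pieces of the integrand directly into $\mathcal{L}_{Z_{k,\theta}}\mu_j$; in your approach these terms do not fall out so cleanly, since $(\iota_{\xi_\epsilon}\vartheta_\epsilon)^\theta$ is not $\mu_\epsilon^\theta$ and must be reshuffled into averaged and fluctuating parts by hand. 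Your identification of the $Z_{k,\theta}$ via Lemma~\ref{BCH} with $A=\theta\,\xi_0$, $B=\theta(\xi_1+\epsilon\,\xi_2+\cdots)$ is exactly what the paper does, and your approach is in principle sound; it is simply the route the paper explicitly declines to take in favor of the Stokes trick.
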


\begin{proof}
In order to establish Eq.\,\eqref{hamiltonian_roto} it is sufficient to compute the exterior derivative of $\mu_\epsilon = \iota_{\xi_\epsilon}\overline{\vartheta}_\epsilon$ directly:
\begin{align}
\mathbf{d}\mu_\epsilon  = \mathbf{d}\iota_{\xi_\epsilon}\overline{\vartheta}_\epsilon = \mathcal{L}_{\xi_\epsilon}\overline{\vartheta}_\epsilon - \iota_{\xi_{\epsilon}}\mathbf{d}\overline{\vartheta}_\epsilon = -\iota_{\xi_{\epsilon}}\mathbf{d}\vartheta_\epsilon,
\end{align}
where we have made use of Eqs.\,\eqref{bar_vartheta_is_averaged} and \eqref{two_form_is_averaged} from the proof of Lemma \ref{mu_exists_thm}.

In order to obtain formulas for the $\mu_k$ we will proceed in two steps. First we will use Stokes' theorem to identify an alternative all-orders expression for $\mu_\epsilon$ that obviates how $\mu_\epsilon$ changes when $\vartheta_\epsilon$ is subject to the gauge transformation $\vartheta_\epsilon \mapsto \vartheta_\epsilon + \alpha_\epsilon$ with $\alpha_\epsilon$ closed. Then we will use the perturbative BCH formula (c.f. Lemma \ref{BCH})  to expand the resulting expression as a power series in $\epsilon$.

Fix $z\in Z$ and define the mapping $S:S^1\times [0,\epsilon]\rightarrow Z:(\theta,\lambda)\mapsto \exp(\theta\,\xi_{\lambda})(z)$. Choose an orientation for $S^1\times[0,\epsilon]$ by declaring that the ordered basis $(\partial_\theta,\partial_{\lambda})$ is positively oriented. By Stokes' theorem
\begin{align}
\int_{S^1\times[0,\epsilon]}\mathbf{d}S^*\vartheta_\epsilon = \int_{S^1\times\{0\}}S^*\vartheta_\epsilon - \int_{S^1\times\{\epsilon\}}S^*\vartheta_\epsilon,\label{stokes_abstract}
\end{align}
where $S^1\times\{0\}$ and $S^1\times\{\epsilon\}$ are each oriented in the sense of increasing $\theta\in S^1$. Accounting for these orientation conventions, Eq.\,\eqref{stokes_abstract} may be re-written in terms of definite integrals as
\begin{align}
\int_0^{\epsilon}\int_0^{2\pi} [S^*\mathbf{d}\vartheta_\epsilon](\partial_\theta,\partial_{\lambda})\,d\theta\,d\lambda& = \int_0^{2\pi} [\vartheta_\epsilon(\xi_0)](\exp(\theta\,\xi_0)(z)) \,d\theta - \int_0^{2\pi} [\vartheta_\epsilon(\xi_\epsilon)](\exp(\theta\,\xi_\epsilon)(z))\,d\theta\nonumber\\
& = 2\pi \iota_{\xi_0}\langle \vartheta_\epsilon\rangle(z) - 2\pi\iota_{\xi_\epsilon}\overline{\vartheta}_\epsilon(z),
\end{align}
which shows that the adiabatic invariant $\mu_\epsilon = \iota_{\xi_\epsilon}\overline{\vartheta}_\epsilon$ may be expressed as
\begin{align}
\mu_\epsilon(z) = \iota_{\xi_0}\langle \vartheta_\epsilon\rangle(z)+ \fint \int_0^{\epsilon}[S^*\mathbf{d}\vartheta_\epsilon](\partial_{\lambda},\partial_\theta)\,d\theta\,d\lambda.\label{proto_mu_formula}
\end{align}
Note that the second term on the right-hand-side is in a somewhat unwieldy form. To rectify this issue first observe that the partial derivatives of $\exp(\theta\,\xi_{\lambda})$ may be expressed as
\begin{align}
\partial_\theta \exp(\theta\,\xi_{\lambda}) &= \xi_{\lambda}\circ \exp(\theta\,\xi_{\lambda})\\
\partial_{\lambda}\exp(\theta\,\xi_{\lambda}) & = \bigg(\exp(-\theta\,\xi_0)^*\phi(-\mathcal{L}_{Z_{\lambda,\theta}})\partial_{\lambda}Z_{\lambda,\theta}\bigg)\circ \exp(\theta\,\xi_{\lambda}),
\end{align}
where $Z_{\lambda,\theta} = \text{ln}(\exp(-\theta\,\xi_0)\circ \exp(\theta\,\xi_{\lambda}))$, $\phi(z) = (\exp(z) - 1)/z$, and we have made use of  Eq.\,\eqref{useful_derivaitive_formula_exp}. Therefore the scalar $[S^*\mathbf{d}\vartheta_\epsilon](\partial_{\lambda},\partial_\theta)$ may be written
\begin{align}
[S^*\mathbf{d}\vartheta_\epsilon](\partial_{\lambda},\partial_\theta) &=\exp(\theta\,\xi_{\lambda})^*\bigg(\mathbf{d}\vartheta_\epsilon\bigg(\exp(-\theta\,\xi_0)^*\phi(-\mathcal{L}_{Z_{\lambda,\theta}})\partial_{\lambda}Z_{\lambda,\theta},\xi_{\lambda}\bigg) \bigg)(z)\nonumber\\
 &=[\exp(Z_{\lambda,\theta})^*\mathbf{d}\vartheta_\epsilon^\theta]\bigg(\exp(Z_{\lambda,\theta})^*\phi(-\mathcal{L}_{Z_{\lambda,\theta}})\partial_{\lambda}Z_{\lambda,\theta},\xi_{\lambda}\bigg) (z)\nonumber\\
 &=[\exp(Z_{\lambda,\theta})^*\mathbf{d}\vartheta_\epsilon^\theta]\bigg(\phi(\mathcal{L}_{Z_{\lambda,\theta}})\partial_{\lambda}Z_{\lambda,\theta},\xi_{\lambda}\bigg) (z).
\end{align}
An all-orders formula for the adiabatic invariant $\mu_\epsilon$ is therefore
\begin{align}
\mu_\epsilon = \iota_{\xi_0}\langle \vartheta_\epsilon\rangle + \fint\int_0^\epsilon [\exp(Z_{\lambda,\theta})^*\mathbf{d}\vartheta_\epsilon^\theta]\bigg(\phi(\mathcal{L}_{Z_{\lambda,\theta}})\partial_{\lambda}Z_{\lambda,\theta},\xi_{\lambda}\bigg)\,d\lambda\,d\theta.\label{mu_all_orders}
\end{align}
This is the formula we will use to compute the coefficients of the series $\mu_\epsilon$. As promised, if $\vartheta_\epsilon$ is subject to the gauge transformation $\vartheta_\epsilon\mapsto \vartheta_\epsilon + \alpha_\epsilon$, with $\alpha_\epsilon$ closed, the formula \eqref{mu_all_orders} shows that $\mu_\epsilon$ transforms as $\mu_\epsilon\mapsto \mu_\epsilon + \iota_{\xi_0}\langle \alpha_\epsilon\rangle$. The change in $\mu_\epsilon$, $\Delta\mu_\epsilon = \iota_{\xi_0}\langle \alpha_\epsilon\rangle$, evaluated at $z\in Z$ may therefore be written as the closed loop integral
\begin{align}
\Delta\mu_\epsilon(z) = \frac{1}{2\pi}\oint_{\gamma_z} \alpha_\epsilon,
\end{align}
where the $z$-dependent curve $\gamma_z$ is given by $\gamma_z(\theta) = \exp(\theta\,\xi_0)(z)$. Because $\alpha_\epsilon$ is closed the integral $\oint_{\gamma_z}\alpha_\epsilon$ depends only on the homotopy class of $\gamma_z$. Since $\gamma_z$ depends continuously on $z$ this means $\Delta\mu_\epsilon$ is constant on path-connected components of $Z$. In particular, because the path-components are open subsets of $Z$,  $\mathbf{d}\Delta\mu_\epsilon = 0$, as inferred earlier from the abstract argument related to Eq.\,\eqref{indirect_homotopy}. In fact, this argument shows that the change in $\mu_\epsilon$ induced by a gauge transformation is equal to ($1/(2\pi)$ times) the cohomology class of $\alpha_\epsilon$ paired with the cycle defined by the $\xi_0$-orbit $\gamma_z$. In particular if either (a) the first deRham cohomology group of $Z$ is trivial, or (b) the $\xi_0$-orbits are each homologous to a point then $\mu_\epsilon$ is gauge-independent.

Expanding Eq.\,\eqref{mu_all_orders} in a power series is facilitated by first recording the more elementary power series expansions involving the coefficients of $Z_{\lambda,\theta} = \epsilon\,Z_{1,\theta} + \epsilon^2\,Z_{2,\theta} + \epsilon^3\,Z_{3,\theta} +\dots$ given by
\begin{align}
\exp(Z_{\lambda,\theta})^*\mathbf{d}\vartheta_\epsilon^\theta &= \mathbf{d}\vartheta_\epsilon^\theta + \lambda \mathcal{L}_{Z_{1,\theta}}\mathbf{d}\vartheta_\epsilon^\theta +\lambda^2 \bigg(\mathcal{L}_{Z_{2,\theta}} + \frac{1}{2}\mathcal{L}_{Z_{1,\theta}}^2\bigg)\mathbf{d}\vartheta_\epsilon^\theta+O(\lambda^3) \\
\phi(\mathcal{L}_{Z_{\lambda,\theta}})\partial_\lambda Z_{\lambda,\theta} & = Z_{1,\theta} + \lambda\,2Z_{2,\theta} + \lambda^2 \bigg(3 Z_{3,\theta} + \frac{1}{2}[Z_{1,\theta},Z_{2,\theta}]\bigg)+O(\lambda^3).
\end{align}
Substituting these expansion into Eq.\,\eqref{mu_all_orders} and performing the $\lambda$-integrals then gives
\begin{align}
\mu_\epsilon & = \iota_{\xi_0}\langle\vartheta_\epsilon \rangle  + \epsilon\fint \mathbf{d}\vartheta_\epsilon^\theta (Z_{1,\theta},\xi_0)\,d\theta \nonumber\\
&+ \epsilon^2\fint\bigg(\frac{1}{2}\left[\mathcal{L}_{Z_{1,\theta}}\mathbf{d}\vartheta_\epsilon^\theta\right] (Z_{1,\theta},\xi_0) + \frac{1}{2} \mathbf{d}\vartheta_\epsilon^\theta(Z_{1,\theta},\xi_1)+  \mathbf{d}\vartheta_\epsilon^\theta(Z_{2,\theta},\xi_0)\bigg)\,d\theta\nonumber\\
& + \epsilon^3\fint\bigg(\frac{2}{3}  \left[\mathcal{L}_{Z_{1,\theta}}\mathbf{d}\vartheta_\epsilon^\theta\right] (Z_{2,\theta},\xi_0) + \frac{2}{3} \mathbf{d}\vartheta_\epsilon^\theta (Z_{2,\theta},\xi_1) + \frac{1}{3} \left[\mathcal{L}_{Z_{1,\theta}}\mathbf{d}\vartheta_\epsilon^\theta\right](Z_{1,\theta},\xi_1)\nonumber\\
& \quad\quad \quad+ \frac{1}{3}\left[\mathcal{L}_{Z_{2,\theta}}\mathbf{d}\vartheta_\epsilon^\theta + \frac{1}{2}\mathcal{L}_{Z_{1,\theta}}^2\mathbf{d}\vartheta_\epsilon^\theta\right](Z_{1,\theta},\xi_0) + \mathbf{d}\vartheta_\epsilon^\theta\left(Z_{3,\theta} + \frac{1}{6}[Z_{1,\theta},Z_{2,\theta}],\xi_0\right)\nonumber\\
& \quad \quad \quad + \frac{1}{3}\mathbf{d}\vartheta_\epsilon^\theta(Z_{1,\theta},\xi_2)\bigg)\,d\theta + O(\epsilon^4).\label{mu_setup}
\end{align}
The task of finding formulas for the $\mu_k$ is therefore reduced to the problem of finding expressions for the $Z_{k,\theta}$ and then substituting them into Eq.\,\eqref{mu_setup}.

In order to compute terms in the series $Z_{\lambda,\theta}$ it is helpful to reuse the perturbative BCH formula provided by Lemma \ref{BCH}. Setting $ \epsilon = \lambda$, $A = \theta\,\xi_0$, and $B = \theta (\xi_1 + \lambda\,\xi_2 +\lambda^2 \xi_3 + \dots)$ the first several coefficients of $Z_{\lambda,\theta}$ given by Lemma \ref{BCH} may be expressed in integral form as
\begin{align}
Z_{0,\theta} & = 0\label{Z0_formula_one_form_thm}\\
Z_{1,\theta} &= \int_0^\theta \xi_1^{\theta_1}\,d\theta_1\label{Z1_formula_one_form_thm}\\
Z_{2,\theta} & =\int_0^\theta \xi_2^{\theta_1}\,d\theta_1+ \frac{1}{2}\int_0^\theta\int_0^{\theta_1}[\xi_1^{\theta_2},\xi_1^{\theta_1}]\,d\theta_2\,d\theta_1 \label{Z2_formula_one_form_thm}\\
Z_{3,\theta} & = \int_0^\theta \xi_3^{\theta_1}\,d\theta_1 + \frac{1}{2}\int_0^\theta\int_0^{\theta_1}[\xi_1^{\theta_2},\xi_2^{\theta_1}]\,d\theta_2\,d\theta_1+  \frac{1}{2}\int_0^\theta\int_0^{\theta_1}[\xi_2^{\theta_2},\xi_1^{\theta_1}]\,d\theta_2\,d\theta_1\nonumber\\
& + \frac{1}{6}\int_0^\theta\int_0^{\theta_1}\int_0^{\theta_2}\bigg([\xi_1^{\theta_3},[\xi_1^{\theta_2},\xi_1^{\theta_1}]] + [[\xi_1^{\theta_3},\xi_1^{\theta_2}],\xi_1^{\theta_1}]\bigg)\,d\theta_3\,d\theta_2\,d\theta_1.
\end{align}
Remarkably, most of the integrations indicated here may be carried out explicitly. First consider $Z_{1,\theta}$. By inserting Eq.\,\eqref{xi1_formula} for $\xi_1$ into Eq.\,\eqref{Z1_formula_one_form_thm} the vector field $Z_{1,\theta}$ may be expressed as
\begin{align}
Z_{1,\theta} &=  \int_0^\theta \Lxio I_0\widetilde{V}_1^{\theta_1}\,d\theta_1\nonumber\\
&= I_0\widetilde{V}_1^{\theta} - I_0\widetilde{V}_1^{},\label{Z1_explicit}
\end{align}
which no longer contains any integrals. Similarly, by inserting Eqs.\,\eqref{xi1_formula} and \eqref{xi2_formula} into Eq.\,\eqref{Z2_formula_one_form_thm} the vector field $Z_{2,\theta}$ becomes
\begin{align}
Z_{2,\theta} & = \int_0^\theta \bigg( \Lxio I_0\widetilde{V}_2^{\theta_1} + \Lxio I_0[ I_0\widetilde{V}_1^{\theta_1}, \langle V_1\rangle] + \frac{1}{2}\Lxio I_0[I_0\widetilde{V}_1^{\theta_1},\widetilde{V}_1^{\theta_1}]^{\text{osc}}  \bigg)\,d\theta_1\nonumber\\
& +\frac{1}{2} \int_0^\theta [\Lxio I_0\widetilde{V}_1^{\theta_1}, I_0\widetilde{V}_1^{\theta_1}]\,d\theta_1 + \frac{1}{2}\int_0^\theta\int_0^{\theta_1}[\Lxio I_0\widetilde{V}_1^{\theta_2},\Lxio I_0\widetilde{V}_1^{\theta_1}]\,d\theta_2\,d\theta_1\nonumber\\
& =   I_0\widetilde{V}_2^{\theta} +   I_0[ I_0\widetilde{V}_1^{\theta}, \langle V_1\rangle] + \frac{1}{2}  I_0[I_0\widetilde{V}_1^{\theta},\widetilde{V}_1^{\theta}]^{\text{osc}} \nonumber\\
& -I_0\widetilde{V}_2^{} -  I_0[ I_0\widetilde{V}_1^{}, \langle V_1\rangle] - \frac{1}{2}  I_0[I_0\widetilde{V}_1^{},\widetilde{V}_1^{}]^{\text{osc}}  +\frac{1}{2} \int_0^\theta [\Lxio I_0\widetilde{V}_1^{\theta_1}, I_0\widetilde{V}_1^{\theta_1}]\,d\theta_1\nonumber\\
&+\frac{1}{2}\int_0^\theta[I_0\widetilde{V}_1^{\theta_1},\Lxio I_0\widetilde{V}_1^{\theta_1}]\,d\theta_1 - \frac{1}{2}\int_0^\theta [I_0\widetilde{V}_1,\Lxio I_0\widetilde{V}_1^{\theta_1}]\,d\theta_1\nonumber\\
& =   I_0\widetilde{V}_2^{\theta} +   I_0[ I_0\widetilde{V}_1^{\theta}, \langle V_1\rangle] + \frac{1}{2}  I_0[I_0\widetilde{V}_1^{\theta},\widetilde{V}_1^{\theta}]^{\text{osc}} \nonumber\\
& -I_0\widetilde{V}_2^{} -   I_0[ I_0\widetilde{V}_1^{}, \langle V_1\rangle] - \frac{1}{2}  I_0[I_0\widetilde{V}_1^{},\widetilde{V}_1^{}]^{\text{osc}}   - \frac{1}{2} [I_0\widetilde{V}_1, I_0\widetilde{V}_1^{\theta}],\label{Z2_explicit}
\end{align}
where the integrals that could not be evaluated by recognizing a total derivative have cancelled, apparently fortuitously. Finally consider $Z_{3,\theta}$, which is the sum of a single integral, a pair of double integrals, and a triple integral. The triple integral may be partially simplified by making use of Eq.\,\eqref{xi1_formula} for $\xi_1$ and recognizing total derivatives according to
\begin{align}
&\frac{1}{6}\int_0^\theta\int_0^{\theta_1}\int_0^{\theta_2}\bigg([\xi_1^{\theta_3},[\xi_1^{\theta_2},\xi_1^{\theta_1}]] + [[\xi_1^{\theta_3},\xi_1^{\theta_2}],\xi_1^{\theta_1}]\bigg)\,d\theta_3\,d\theta_2\,d\theta_1 \nonumber\\ 
=&\frac{1}{3}\int_0^\theta[[\Lxio I_0\widetilde{V}_1^{\theta_2},I_0\widetilde{V}_1^{\theta_2}],I_0\widetilde{V}_1^{\theta_2}]\,d\theta_2\nonumber\\
+&\frac{1}{6}\int_0^\theta\bigg(\frac{1}{2}\partial_{\theta_2}[I_0\widetilde{V}_1^{\theta_2},[I_0\widetilde{V}_1^{\theta_2},I_0(\widetilde{V}_1^{\theta} + \widetilde{V}_1)]] - \frac{3}{2}[[\Lxio I_0\widetilde{V}_1^{\theta_2},I_0\widetilde{V}_1^{\theta_2}],I_0(\widetilde{V}_1^\theta + \widetilde{V}_1)] \bigg)\,d\theta_2\nonumber\\
+&\frac{1}{6}[I_0\widetilde{V}_1^{},[I_0\widetilde{V}_1^{},I_0\widetilde{V}_1^{\theta}]] - \frac{1}{6}[[I_0\widetilde{V}_1,I_0\widetilde{V}_1^\theta],I_0\widetilde{V}_1^\theta] \nonumber\\
=&\frac{1}{3}\int_0^\theta[[\Lxio I_0\widetilde{V}_1^{\theta_2},I_0\widetilde{V}_1^{\theta_2}],I_0\widetilde{V}_1^{\theta_2}]\,d\theta_2\nonumber\\
-&\frac{1}{4}\int_0^\theta[[\Lxio I_0\widetilde{V}_1^{\theta_2},I_0\widetilde{V}_1^{\theta_2}],I_0(\widetilde{V}_1^\theta + \widetilde{V}_1)] \,d\theta_2\nonumber\\
+& \frac{1}{12} [I_0(\widetilde{V}_1^{\theta}+\widetilde{V}_1),[I_0\widetilde{V}_1,I_0\widetilde{V}_1^{\theta}]],\label{triple_integral}
\end{align}
where we have used the identity
\begin{align}
[I_0\widetilde{V}_1^{\theta_2},[\Lxio I_0\widetilde{V}_1^{\theta_2},I_0(\widetilde{V}_1^{\theta} + \widetilde{V}_1)]]& = \frac{1}{2}\partial_{\theta_2}[I_0\widetilde{V}_1^{\theta_2},[I_0\widetilde{V}_1^{\theta_2},I_0(\widetilde{V}_1^{\theta} + \widetilde{V}_1)]]\nonumber\\
&-\frac{1}{2}[[\Lxio I_0\widetilde{V}_1^{\theta_2},I_0\widetilde{V}_1^{\theta_2}],I_0(\widetilde{V}_1^{\theta}+\widetilde{V}_1)].
\end{align}
The double integrals may be partially simplified in a similar manner upon making use of Eqs.\,\eqref{xi1_formula} and \eqref{xi2_formula} according to 
\begin{align}
 &\frac{1}{2}\int_0^\theta\int_0^{\theta_1}[\xi_1^{\theta_2},\xi_2^{\theta_1}]\,d\theta_2\,d\theta_1+  \frac{1}{2}\int_0^\theta\int_0^{\theta_1}[\xi_2^{\theta_2},\xi_1^{\theta_1}]\,d\theta_2\,d\theta_1\nonumber\\
 +& \frac{1}{2}\int_0^\theta[\xi_2^{\theta_2},I_0\widetilde{V}_1^{\theta}]\,d\theta_2 - \frac{1}{2}\int_0^\theta[\xi_2^{\theta_2},I_0\widetilde{V}_1^{\theta_2}]\,d\theta_2\nonumber\\
 =&\int_0^\theta[I_0\widetilde{V}_1^{\theta_1},\xi_2^{\theta_1}]\,d\theta_1+\frac{1}{2}\int_0^\theta[\xi_2^{\theta_1},I_0(\widetilde{V}_1^{}+\widetilde{V}_1^\theta)]\,d\theta_1\nonumber\\
 =&\frac{1}{2} \bigg[  I_0\widetilde{V}_2^\theta +   I_0[ I_0\widetilde{V}_1^\theta, \langle V_1\rangle] + \frac{1}{2}  I_0[I_0\widetilde{V}_1^\theta,\widetilde{V}_1^\theta]^{\text{osc}},I_0(\widetilde{V}_1 + \widetilde{V}_1^\theta)\bigg]+\frac{1}{2}[I_0\widetilde{V}_1^\theta,I_0\widetilde{V}_2^\theta]\nonumber\\
 -&\frac{1}{2} \bigg[  I_0\widetilde{V}_2 +   I_0[ I_0\widetilde{V}_1, \langle V_1\rangle] + \frac{1}{2}  I_0[I_0\widetilde{V}_1,\widetilde{V}_1]^{\text{osc}},I_0(\widetilde{V}_1 + \widetilde{V}_1^\theta)\bigg] - \frac{1}{2}[I_0\widetilde{V}_1,I_0\widetilde{V}_2]\nonumber\\
 +&\frac{1}{4}\int_0^\theta[[\Lxio I_0\widetilde{V}_1^{\theta_1}, I_0\widetilde{V}_1^{\theta_1}],I_0(\widetilde{V}_1+\widetilde{V}_1^\theta)]\,d\theta_1\nonumber\\
 +&\frac{1}{2}\int_0^\theta[I_0\widetilde{V}_1^{\theta_1},\Lxio I_0\widetilde{V}_2^{\theta_1}]\,d\theta_1+\frac{1}{2}\int_0^\theta[I_0\widetilde{V}_2^{\theta_1},\Lxio I_0\widetilde{V}_1^{\theta_1}]\,d\theta_1 \nonumber\\
 +&  \frac{1}{2}\int_0^\theta[I_0\widetilde{V}_1^{\theta_1},\langle [\Lxio I_0\widetilde{V}_1^{}, I_0\widetilde{V}_1^{}]\rangle]\,d\theta_1+\int_0^\theta[I_0\widetilde{V}_1^{\theta_1}, I_0[  \Lxio I_0\widetilde{V}_1^{\theta_1}, V^{\theta_1}]^{\text{osc}}]\,d\theta_1 .\label{double_integral}
\end{align}
Adding expressions \eqref{double_integral} and \eqref{triple_integral} to $\int_0^\theta\xi_3^{\theta_1}\,d\theta_1$ with $\xi_3$ given by \eqref{xi3_formula}, and accounting for the various fortuitous cancellations that occur, the net result for $Z_{3,\theta}$ is Eq.\,\eqref{Z3_thm}.

Finally, we can substitute Eqs.\,\eqref{Z1_thm},\eqref{Z2_thm}, and \eqref{Z3_thm} into the formula \eqref{mu_setup} in order to obtain explicit expressions for $\mu_0,\mu_1,\mu_2,$ and $\mu_3$.
The $O(1)$ terms in Eq.\,\eqref{mu_setup} give the result $\mu_0 = \iota_{\xi_0}\langle \vartheta_0\rangle$, which is consistent with Lemma \ref{energy_frequency} and Eq.\,\eqref{mu0_formula}. Note that Lemma \ref{energy_frequency} says $\mathbf{d}\mu_0 = \omega_0^{-1}\mathbf{d}H_0$, which generalizes the commonly-encountered expression that gives the adiabatic invariant as $(\text{energy})/(\text{frequency})$. The $O(\epsilon)$ terms in Eq.\,\eqref{mu_setup} give the result
\begin{align}
\mu_1 & = \iota_{\xi_0}\langle \vartheta_1\rangle  + \fint \mathbf{d}\vartheta_0(Z_{1,\theta},\xi_0)\,d\theta\nonumber\\
& =  \iota_{\xi_0}\langle \vartheta_1\rangle + \fint \mathcal{L}_{Z_{1,\theta}}\mu_0\,d\theta\nonumber\\
& = \iota_{\xi_0}\langle \vartheta_1\rangle - \mathcal{L}_{I_0\widetilde{V}_1}\mu_0,
\end{align}
which reproduces Eq.\,\eqref{mu1_formula}. The $O(\epsilon^2)$ terms of Eq.\,\eqref{mu_setup} give
\begin{align*}
\mu_2 &= \iota_{\xi_0}\langle \vartheta_2\rangle +\fint \mathbf{d}\vartheta_1^\theta(Z_{1,\theta},\xi_0)\,d\theta\nonumber\\
& + \fint \bigg( \frac{1}{2}\left[\mathcal{L}_{Z_{1,\theta}}\mathbf{d}\vartheta_0\right](Z_{1,\theta},\xi_0) + \frac{1}{2}\mathbf{d}\vartheta_0(Z_{1,\theta},\xi_1) + \mathbf{d}\vartheta_0(Z_{2,\theta},\xi_0)\bigg)\,d\theta,
\end{align*}
which may be simplified by making use of the identities
\begin{gather}
\mathbf{d}\vartheta_1^\theta + \mathcal{L}_{Z_{1,\theta}}\mathbf{d}\vartheta_0 = \mathbf{d}\vartheta_1\label{first_order_presymplectic_symmetry}\\
\iota_{\xi_1}\mathbf{d}\vartheta_0 + \iota_{\xi_0}\mathbf{d}\vartheta_1 = -\mathbf{d}\mu_1.\label{first_order_hamiltonian_mu}
\end{gather}
In particular,
\begin{align}
\mu_2 & =  \iota_{\xi_0}\langle \vartheta_2\rangle +\fint \mathbf{d}\vartheta_1^\theta(Z_{1,\theta},\xi_0)\,d\theta\nonumber\\
& + \fint \bigg( \frac{1}{2}\left[\mathbf{d}\vartheta_1 - \mathbf{d}\vartheta_1^\theta\right](Z_{1,\theta},\xi_0) + \frac{1}{2}(\mathbf{d}\mu_1 + \iota_{\xi_0}\mathbf{d}\vartheta_1)(Z_{1,\theta}) + \mathbf{d}\vartheta_0(Z_{2,\theta},\xi_0)\bigg)\,d\theta\nonumber\\
& =  \iota_{\xi_0}\langle \vartheta_2\rangle +\frac{1}{2}\fint \mathbf{d}\vartheta_1^\theta(Z_{1,\theta},\xi_0)\,d\theta+ \fint \bigg(   \frac{1}{2}\mathcal{L}_{Z_{1,\theta}}\mu_1 +\mathcal{L}_{Z_{2,\theta}}\mu_0\bigg)\,d\theta\nonumber\\
& =  \iota_{\xi_0}\langle \vartheta_2\rangle +\frac{1}{2}\fint \mathbf{d}\vartheta_0(\xi_1^\theta,Z_{1,\theta})\,d\theta+ \fint \bigg(   \frac{1}{2}\mathcal{L}_{Z_{1,\theta}}(\mu_1+\mu_1^\theta) +\mathcal{L}_{Z_{2,\theta}}\mu_0\bigg)\,d\theta\nonumber\\
& =  \iota_{\xi_0}\langle \vartheta_2\rangle +\frac{1}{2}\left\langle \mathbf{d}\vartheta_0(\mathcal{L}_{\xi_0}I_0\widetilde{V}_1,I_0\widetilde{V}_1)\right\rangle+ \fint \bigg(   \frac{1}{2}\mathcal{L}_{Z_{1,\theta}}(\mu_1+\mu_1^\theta) +\mathcal{L}_{Z_{2,\theta}}\mu_0\bigg)\,d\theta,
\end{align}
which reproduces Eq.\,\eqref{mu2_formula}.
Lastly, the $O(\epsilon^3)$ terms in Eq.\,\eqref{mu_setup} give
\begin{align}
\mu_3 & = \iota_{\xi_0}\langle\vartheta_3 \rangle  +\fint \mathbf{d}\vartheta_2^\theta (Z_{1,\theta},\xi_0)\,d\theta \nonumber\\
&+ \fint\bigg(\frac{1}{2}\left[\mathcal{L}_{Z_{1,\theta}}\mathbf{d}\vartheta_1^\theta\right] (Z_{1,\theta},\xi_0) + \frac{1}{2} \mathbf{d}\vartheta_1^\theta(Z_{1,\theta},\xi_1)+  \mathbf{d}\vartheta_1^\theta(Z_{2,\theta},\xi_0)\bigg)\,d\theta\nonumber\\
& + \fint\bigg(\frac{2}{3}  \left[\mathcal{L}_{Z_{1,\theta}}\mathbf{d}\vartheta_0\right] (Z_{2,\theta},\xi_0) + \frac{2}{3} \mathbf{d}\vartheta_0 (Z_{2,\theta},\xi_1) + \frac{1}{3} \left[\mathcal{L}_{Z_{1,\theta}}\mathbf{d}\vartheta_0\right](Z_{1,\theta},\xi_1)\nonumber\\
& \quad\quad \quad+ \frac{1}{3}\left[\mathcal{L}_{Z_{2,\theta}}\mathbf{d}\vartheta_0 + \frac{1}{2}\mathcal{L}_{Z_{1,\theta}}^2\mathbf{d}\vartheta_0\right](Z_{1,\theta},\xi_0) + \mathbf{d}\vartheta_0\left(Z_{3,\theta} + \frac{1}{6}[Z_{1,\theta},Z_{2,\theta}],\xi_0\right)\nonumber\\
& \quad \quad \quad + \frac{1}{3}\mathbf{d}\vartheta_0(Z_{1,\theta},\xi_2)\bigg)\,d\theta,
\end{align}
which can again be simplified using 
\begin{gather}
\mathbf{d}\vartheta_2^\theta + \mathcal{L}_{Z_{1,\theta}}\mathbf{d}\vartheta_1^\theta + \mathcal{L}_{Z_{2,\theta}}\mathbf{d}\vartheta_0 + \frac{1}{2}\mathcal{L}_{Z_{1,\theta}}^2\mathbf{d}\vartheta_0 = \mathbf{d}\vartheta_2\\
\iota_{\xi_2}\mathbf{d}\vartheta_0 + \iota_{\xi_1}\mathbf{d}\vartheta_1 + \iota_{\xi_0}\mathbf{d}\vartheta_2 = - \mathbf{d}\mu_2
\end{gather}
together with the identities \eqref{first_order_presymplectic_symmetry}-\eqref{first_order_hamiltonian_mu}, leading to 
\begin{align}
\mu_3  & = \iota_{\xi_0}\langle \vartheta_3\rangle + \frac{2}{3}\fint \mathbf{d}\vartheta_2(Z_{1,\theta},\xi_0)\,d\theta\nonumber\\
& + \frac{1}{3}\fint \mathbf{d}\vartheta_1^\theta(Z_{1,\theta},\xi_1^\theta)\,d\theta -\frac{1}{6}\fint \mathbf{d}\vartheta_1(Z_{1,\theta},\xi_1^\theta)\,d\theta\nonumber\\
& - \frac{1}{3}\fint \mathbf{d}\vartheta_0(Z_{2,\theta},\xi_1^\theta)\,d\theta -\frac{1}{6}\fint \mathbf{d}\vartheta_0(Z_{1,\theta},\mathcal{L}_{Z_{1,\theta}}\xi_1^\theta)\,d\theta\nonumber\\
&+\fint\bigg(\frac{2}{3}\mathcal{L}_{Z_{2,\theta}}\mu_1 + \frac{1}{3}\mathcal{L}_{Z_{2,\theta}}\mu_1^\theta + \frac{1}{6}\mathcal{L}^2_{Z_{1,\theta}}\mu_1^\theta + \frac{1}{3}\mathcal{L}_{Z_{1,\theta}}\mu_2\bigg)\,d\theta\nonumber\\
& + \fint \mathcal{L}_{Z_{3,\theta} + \frac{1}{6}[Z_{1,\theta},Z_{2,\theta}]}\mu_0\,d\theta.\label{mu3_Z}
\end{align}
Using the identity
\begin{align}
\fint \mathbf{d}\vartheta_0([I_0\widetilde{V}_1,I_0\widetilde{V}_1^\theta],\mathcal{L}_{\xi_0}I_0\widetilde{V}_1^\theta)\,d\theta=&-\frac{1}{2}\mathcal{L}_{I_0\widetilde{V}_1}\langle \mathbf{d}\vartheta_0(\mathcal{L}_{\xi_0}I_0\widetilde{V}_1,I_0\widetilde{V}_1)\rangle \nonumber\\
&- \frac{1}{2}\mathbf{d}\widetilde{\vartheta}_1\cdot\left\langle \mathcal{L}_{\xi_0}I_0\widetilde{V}_1\wedge I_0\widetilde{V}_1 \right\rangle,
\end{align}
together with Eqs.\,\eqref{xi2_formula} and \eqref{Z2_explicit}, the first five integrals in Eq.\,\eqref{mu3_Z} may be evaluated explicitly, resulting in the desired expression \eqref{mu3_formula}.

\end{proof}

\section{Example 1: charged particle in a magnetic field\label{charged_particle_example}}
As an example and verification test for the formulas provided by Theorem \ref{main_result}, we will now use Theorem \ref{main_result} to recover the first two terms of the well-known adiabatic invariant series for a charged particle in a magnetic field. The nearly-periodic Hamiltonian system that describes such charged particles is the ODE on $Q\times\mathbb{R}^3$ given by
\begin{align}
\dot{\bm{v}}& = \frac{1}{\epsilon}\bm{v}\times\bm{B}(\bm{x})\nonumber\\
\dot{\bm{x}}& = \bm{v},
\end{align}
where $Q\subset \mathbb{R}^3$ is an open subset representing the spatial domain, and $\bm{B} =\nabla\times\bm{A}$ is a magnetic field on $Q$. If $\bm{b} = \bm{B}/|\bm{B}|$ denotes the unit vector along the magnetic field then the limiting roto-rate vector is given by $\xi_0 = \bm{v}\times\bm{b}\cdot \partial_{\bm{v}}$, the frequency function $\omega_0 = |\bm{B}|$, and the Hamiltonian structure is specified by the one-form $\vartheta_\epsilon = \bm{A}\cdot d\bm{x} +\epsilon \bm{v}\cdot d\bm{x}$ and the Hamiltonian $H_\epsilon = \epsilon \frac{1}{2}|\bm{v}|^2$. The exponential of the limiting roto-rate vector is given by
\begin{align}
\exp(\theta\xi_0)(\bm{x},\bm{v}) = (\bm{x}, \bm{v}\cdot\bm{b}\,\bm{b} + \sin\theta\,\bm{v}\times\bm{b} + \cos\theta\,\bm{b}\times(\bm{v}\times\bm{b})),
\end{align}
where $\bm{b}$ should be evaluated at $\bm{x}$. 

Consider first $\mu_0$, which according to Theorem \ref{main_result} is given by Eq.\,\eqref{mu0_formula}. Because the flow of $\xi_0$ leaves $\bm{x}$ unchanged the average $\langle\vartheta_0\rangle = \vartheta_0 = \bm{A}\cdot d\bm{x}$. Therefore $\mu_0 = \iota_{\xi_0}\vartheta_0 = (\bm{A}\cdot d\bm{x})(\bm{v}\times\bm{b}\cdot\partial_{\bm{v}}) =0$. This says that the adiabatic invariant series for this system is degenerate to leading-order.

Next consider $\mu_1$, which according to Theorem \ref{main_result} is given by Eq.\,\eqref{mu1_formula}. Since $\mu_0$ vanishes, the general formula simplifies to $\mu_1 = \iota_{\xi_0}\langle \vartheta_1\rangle$, where $\vartheta_1 = \bm{v}\cdot d\bm{x}$. The average of this one-form is given by
\begin{align}
\langle \vartheta_1\rangle = \fint \exp(\theta\,\xi_0)^*\vartheta_1\,d\theta = \fint \bigg(\bm{v}_\theta\cdot d\bm{x}\bigg)\,d\theta = (\bm{v}\cdot\bm{b})\,\bm{b}\cdot d\bm{x},
\end{align}
where we have introduced the shorthand $\bm{v}_\theta =\bm{v}\cdot\bm{b}\,\bm{b} + \sin\theta\,\bm{v}\times\bm{b} + \cos\theta\,\bm{b}\times(\bm{v}\times\bm{b}) $. Therefore the first-order term in the adiabatic invariant series is $\mu_1 = ((\bm{v}\cdot\bm{b})\bm{b}\cdot d\bm{x})(\bm{v}\times\bm{b}\cdot\partial_{\bm{v}}) = 0$. A double degeneracy!

The calculation starts to get interesting with $\mu_2$. Due to the double degeneracy, and the fact that $\vartheta_2 = 0$, the general formula \eqref{mu2_formula} simplifies to $\mu_2 = \frac{1}{2}\left\langle \mathbf{d}\vartheta_0(\mathcal{L}_{\xi_0}I_0\widetilde{V}_1,I_0\widetilde{V}_1)\right\rangle$. For the sake of evaluating this expression it is useful to record the following formulas for $V_1^\theta$, $I_0\widetilde{V}_1^\theta$, and $\mathcal{L}_{\xi_0}I_0\widetilde{V}_1^\theta$,
\begin{align}
{V}_1^\theta & =( \bm{v}\cdot\bm{b}\bm{b})\cdot\partial_{\bm{x}} + \frac{1}{2}\left\{([\bm{v}\times\bm{b}]\cdot\nabla\bm{b})\times\bm{v} + 2(\bm{v}\cdot\bm{b})(\bm{b}\times\bm{\kappa})\times\bm{v}-(\bm{b}\times[\bm{v}_\perp\cdot\nabla\bm{b}])\times\bm{v}\right\}\cdot\partial_{\bm{v}}\nonumber\\
& + \cos\theta\left(\bm{v}_\perp\cdot\partial_{\bm{x}} + \left\{(\bm{b}\times [\bm{v}_\perp\cdot\nabla\bm{b}])\times\bm{v} - (\bm{v}\cdot\bm{b})(\bm{b}\times\bm{\kappa})\times\bm{v}\right\}\cdot\partial_{\bm{v}}\right)\nonumber\\
& + \sin\theta\left(\bm{v}\times\bm{b}\cdot\partial_{\bm{x}} + \left\{(\bm{v}\cdot\bm{b})\bm{\kappa}\times\bm{v} + (\bm{b}\times[[\bm{v}\times\bm{b}]\cdot\nabla\bm{b}])\times\bm{v}\right\}\cdot\partial_{\bm{v}}\right)\nonumber\\
&-\frac{1}{2}\cos2\theta\,\{([\bm{v}\times\bm{b}]\cdot\nabla\bm{b})\times\bm{v} + (\bm{b}\times[\bm{v}_\perp\cdot\nabla\bm{b}])\times\bm{v}\}\cdot\partial_{\bm{v}}\nonumber\\
&+\frac{1}{2}\sin2\theta\,\left\{(\bm{v}_\perp\cdot\nabla\bm{b})\times\bm{v} - (\bm{b}\times[[\bm{v}\times\bm{b}]\cdot\nabla\bm{b}])\times\bm{v}\right\}\cdot\partial_{\bm{v}}\label{gc_V1theta}\\
I_0\widetilde{V}_1^\theta & = -|\bm{B}|^{-1}[\bm{v}_\perp\cos\theta + \bm{v}\times\bm{b}\sin\theta]\cdot\nabla\text{ln}|\bm{B}| (\bm{v}\times\bm{b})\cdot\partial_{\bm{v}}\nonumber\\
& + |\bm{B}|^{-1}\sin\theta\left(\bm{v}_\perp\cdot\partial_{\bm{x}} + \left\{(\bm{b}\times [\bm{v}_\perp\cdot\nabla\bm{b}])\times\bm{v} - (\bm{v}\cdot\bm{b})(\bm{b}\times\bm{\kappa})\times\bm{v}\right\}\cdot\partial_{\bm{v}}\right)\nonumber\\
& -|\bm{B}|^{-1} \cos\theta\left(\bm{v}\times\bm{b}\cdot\partial_{\bm{x}} + \left\{(\bm{v}\cdot\bm{b})\bm{\kappa}\times\bm{v} + (\bm{b}\times[[\bm{v}\times\bm{b}]\cdot\nabla\bm{b}])\times\bm{v}\right\}\cdot\partial_{\bm{v}}\right)\nonumber\\
&-\frac{1}{4}|\bm{B}|^{-1}\sin2\theta\,\{([\bm{v}\times\bm{b}]\cdot\nabla\bm{b})\times\bm{v} + (\bm{b}\times[\bm{v}_\perp\cdot\nabla\bm{b}])\times\bm{v}\}\cdot\partial_{\bm{v}}\nonumber\\
&-\frac{1}{4}|\bm{B}|^{-1}\cos2\theta\,\left\{(\bm{v}_\perp\cdot\nabla\bm{b})\times\bm{v} - (\bm{b}\times[[\bm{v}\times\bm{b}]\cdot\nabla\bm{b}])\times\bm{v}\right\}\cdot\partial_{\bm{v}}\label{gc_IV1theta}\\
\mathcal{L}_{\xi_0}I_0\widetilde{V}_1^\theta & = -|\bm{B}|^{-1}[\bm{v}\times\bm{b}\cos\theta-\bm{v}_\perp\sin\theta ]\cdot\nabla\text{ln}|\bm{B}| (\bm{v}\times\bm{b})\cdot\partial_{\bm{v}}\nonumber\\
& + |\bm{B}|^{-1}\cos\theta\left(\bm{v}_\perp\cdot\partial_{\bm{x}} + \left\{(\bm{b}\times [\bm{v}_\perp\cdot\nabla\bm{b}])\times\bm{v} - (\bm{v}\cdot\bm{b})(\bm{b}\times\bm{\kappa})\times\bm{v}\right\}\cdot\partial_{\bm{v}}\right)\nonumber\\
& +|\bm{B}|^{-1} \sin\theta\left(\bm{v}\times\bm{b}\cdot\partial_{\bm{x}} + \left\{(\bm{v}\cdot\bm{b})\bm{\kappa}\times\bm{v} + (\bm{b}\times[[\bm{v}\times\bm{b}]\cdot\nabla\bm{b}])\times\bm{v}\right\}\cdot\partial_{\bm{v}}\right)\nonumber\\
&-\frac{1}{2}|\bm{B}|^{-1}\cos2\theta\,\{([\bm{v}\times\bm{b}]\cdot\nabla\bm{b})\times\bm{v} + (\bm{b}\times[\bm{v}_\perp\cdot\nabla\bm{b}])\times\bm{v}\}\cdot\partial_{\bm{v}}\nonumber\\
&+\frac{1}{2}|\bm{B}|^{-1}\sin2\theta\,\left\{(\bm{v}_\perp\cdot\nabla\bm{b})\times\bm{v} - (\bm{b}\times[[\bm{v}\times\bm{b}]\cdot\nabla\bm{b}])\times\bm{v}\right\}\cdot\partial_{\bm{v}}\label{gc_LIV1theta}.
\end{align}
where $\bm{\kappa} = \bm{b}\cdot\nabla\bm{b}$ is the field line curvature and $\bm{v}_\perp = \bm{b}\times(\bm{v}\times\bm{b})$ is the projection of the velocity into the plane perpendicular to the magnetic field.
If $U = U_{\bm{x}}\cdot\partial_{\bm{x}} + U_{\bm{v}}\cdot\partial_{\bm{v}}$ and $W = W_{\bm{x}}\cdot\partial_{\bm{x}} + W_{\bm{v}}\cdot\partial_{\bm{v}}$ are any two vector fields on $Q\times\mathbb{R}^3$ then $\mathbf{d}\vartheta_0(U,V) = \bm{B}\cdot U_{\bm{x}}\times W_{\bm{x}}$. In particular, when $U$ is given by Eq.\,\eqref{gc_LIV1theta} and $W$ is given by Eq.\,\eqref{gc_IV1theta} the expression becomes  $\mathbf{d}\vartheta_0(\mathcal{L}_{\xi_0}I_0\widetilde{V}_1^\theta,I_0\widetilde{V}_1^\theta) = |\bm{B}|^{-1}|\bm{v}\times\bm{b}|^2 $ for each $\theta\in S^1$. It follows that $\mu_2 = \frac{1}{2}|\bm{B}|^{-1}|\bm{v}\times\bm{b}|^2$, which is the familiar expression for the leading term in the magnetic moment series. Note that because $\mu_2$ is the first nontrivial term in the adiabatic invariant series for this system standard convention is to refer to this quantity as $\mu_0$ rather than $\mu_2$. We have not adopted this convention in this article because not all nearly-periodic Hamiltonian systems exhibit the double degeneracy $\mu_0 = \mu_1 = 0$.

Finally consider $\mu_3$, which should give the first correction to the magnetic moment. Because $\vartheta_2 = \vartheta_3 = 0$, $V_2 = 0$, $\mu_0 = 0$, and $\mu_1 = 0$ the general formula \ref{mu3_formula} reduces to
\begin{align}
\mu_3& = - \frac{1}{3}\langle \mathbf{d}\vartheta_1(\mathcal{L}_{\xi_0}I_0\widetilde{V}_1,I_0\widetilde{V}_1)\rangle\nonumber\\
& + \frac{1}{3}\left\langle \iota_{\mathcal{L}_{\xi_0}I_0\widetilde{V}_1}\mathbf{d}\vartheta_1\right\rangle(I_0\widetilde{V}_1) - \frac{1}{6}\langle \mathbf{d}\vartheta_0([\mathcal{L}_{\xi_0}I_0\widetilde{V}_1,I_0\widetilde{V}_1], I_0\widetilde{V}_1)\rangle \nonumber\\
&+ \frac{1}{6}\mathbf{d}\vartheta_0(\langle [\mathcal{L}_{\xi_0}I_0\widetilde{V}_1,I_0\widetilde{V}_1]\rangle,I_0\widetilde{V}_1) \nonumber\\
& + \frac{1}{3}\langle \mathbf{d}\vartheta_0(\mathcal{L}_{\xi_0}I_0\widetilde{V}_1, I_0[I_0\widetilde{V}_1,\langle V_1\rangle])\rangle+\frac{1}{6}\langle \mathbf{d}\vartheta_0(\mathcal{L}_{\xi_0}I_0\widetilde{V}_1,I_0[I_0\widetilde{V}_1,\widetilde{V}_1]^{\text{osc}})\rangle\nonumber\\
& + \frac{1}{6}\langle  \langle\mathbf{d}\vartheta_1 \rangle(\mathcal{L}_{\xi_0}I_0\widetilde{V}_1,I_0\widetilde{V}_1)\rangle - \frac{1}{3}\mathcal{L}_{I_0\widetilde{V}_1}\langle \mathbf{d}\vartheta_0(\mathcal{L}_{\xi_0}I_0\widetilde{V}_1,I_0\widetilde{V}_1) \rangle. \label{mu3_gc}
\end{align}
In order to eliminate the possibility of human errors in evaluating each of the terms in \eqref{mu3_gc} we used the vector calculus simplification tool \emph{VEST} to perform the calculation. \emph{VEST} was originally developed in \cite{Squire_2013} for the purpose of implementing the automatic calculation of the guiding center calculation in \cite{Burby_gc_2013}, and is therefore admirably suited to the present calculation. The final result is
\begin{align}
\mu_3 &= \mu_0\frac{(\bm{b}\times\bm{v})\cdot \nabla |\bm{B}|}{|\bm{B}|^2} + \frac{1}{4}\frac{(\bm{v}\cdot\bm{b})\,\bm{v}\cdot\nabla\bm{b}\cdot(\bm{v}\times\bm{b})}{|\bm{B}|^2} \nonumber\\
&-\frac{3}{4}\frac{(\bm{v}\cdot\bm{b})\, (\bm{v}\times\bm{b})\cdot\nabla\bm{b}\cdot\bm{v} }{|\bm{B}|^2} - \frac{5}{4}\frac{(\bm{v}\cdot\bm{v})^2\,\bm{\kappa}\cdot(\bm{v}\times\bm{b})}{|\bm{B}|^2},
\end{align}
which agrees with the formula from \cite{Weyssow_1986}.


\section{Example 2: an adiabatic invariant for nearly-periodic magnetic fields\label{field_line_invariant_sec}}
Kolmogorov-Arnold-Moser (KAM) theory reveals much about the structure of toroidal magnetic fields used for the purpose of magnetic confinement fusion. Perhaps most significantly it provides the following stability result. If the true magnetic field within a device is close to a fiducial field with nested toroidal flux surfaces, and the magnetic shear of the fiducial field is bounded away from zero, then the true field will have nearly the same measure of flux surfaces as the fiducial field. In the narrow gaps between the surviving flux surfaces, deterministic chaos reigns.

On the other hand, KAM theory says very little when the fiducial, unperturbed field has vanishing shear, particularly when the rotational transform is constant and rational. For example \cite{10.2307/j.ctt1bd6kg5} requires perturbations to be small relative to shear. Finite shear ensures that many of the unperturbed flux surfaces possess strongly non-resonant (i.e. strongly irrational) rotational transform. Such non-resonant tori survive perturbations with relative ease, and provide the true, perturbed magnetic field with its source of KAM tori. When all of the fiducial field lines are closed, however, this well of non-resonant unperturbed flux surfaces runs dry. Even though the unperturbed field contains many (non-unique) flux surfaces, each of these is strongly resonant. It would therefore seem that closed-line fields should easily be blown apart by most perturbations.

In order to critically examine the validity of this last statement, suppose that $\bm{B}_\epsilon = \nabla\times\bm{A}_\epsilon$ is a non-vanishing magnetic field for each $\epsilon\in\mathbb{R}$. The field $\bm{B}_\epsilon$ has \emph{nearly-closed field lines} if $\bm{A}_\epsilon$ depends smoothly on $\epsilon$ and each field line of $\bm{B}_0$ is closed. A remarkable property of such a magnetic field is that the associated field-line dynamical system $\dot{\bm{x}} = \epsilon^{-1}\bm{B}_\epsilon(\bm{x})$ comprises a nearly-periodic Hamiltonian system on $Z = Q$, the field-line container. The frequency function is given by
\begin{align}
\frac{1}{\omega_0(\bm{x})} = \frac{1}{2\pi}\oint_{\ell_0(\bm{x})}\frac{d\ell}{|\bm{B}|},
\end{align}
where $\ell_0(\bm{x})$ is the unique $\bm{B}_0$-line that contains the point $\bm{x}\in Q$; the limiting roto-rate vector is $\xi_0 = \bm{B}_0/\omega_0$; the presymplectic form is $-\mathbf{d}(\bm{A}_\epsilon\cdot d\bm{x}) = -\iota_{\bm{B}_\epsilon}d^3\bm{x}$; and the Hamiltonian is $H_\epsilon =0$. Therefore the general theory outlined in \cite{Kruskal_1962}, as well as the rest of this Article, guarantees the existence of a field-line adiabatic invariant $\mu_\epsilon$ for $\bm{B}_\epsilon$ with $\epsilon$ small but finite. Since such an adiabatic invariant defines approximate flux surfaces (surfaces that field lines traverse many times before possibly wandering away), magnetic fields with nearly-closed lines of force enjoy much more stability that KAM theory, and in particular that theory's assumption of non-vanshing shear suggests. Note in particular that existence of the field-line adiabatic invariant $\mu_\epsilon$ does \emph{not} require the perturbation $\bm{B}_\epsilon - \bm{B}_0$ to be non-resonant. 

The robustness of magnetic fields with nearly-closed field lines is consistent with previous experimental and theoretical analyses of magnetic fields and fluid flows with regions of low or sign-reversing shear. See for example \cite{Firpo_2011} for a fusion-oriented study and \cite{Negrete_1992} for an investigation of analogous ideas in the context of sheared fluid flow. In any low-shear region a rational number $q/p$ may be found that uniformly approximates the unperturbed field's rotational transform $\iota(\psi)$. By writing $\iota(\psi) = q/p + (\iota(\psi) - q/p)$ the unperturbed magnetic field may then be expressed as a field with closed lines plus a correction that is proportional to the shear. Because the shear is small by hypothesis it is therefore natural to lump the correction term $\delta\iota(\psi )  =\iota(\psi) - q/p$ together with any magnetic perturbations that may be present. In this manner the magnetic field within a region of small shear may be expressed as a magnetic field with nearly-closed field lines. The field line dynamics within a low-shear region therefore possess an adiabatic invariant $\mu_\epsilon$. Moreover approximate level sets of $\mu_\epsilon$ may be used to quickly predict the transport effects, deleterious or not, of the perturbation. This approach to understanding the impacts of perturbations on low-shear magnetic fields appears to have gone largely unnoticed; the preferred approach has been the more-cumbersome and obtuse action-angle formalism.


In order to gain insight into the significance of the adiabatic flux surfaces defined by $\mu_\epsilon$, consider the formulas for $\mu_\epsilon$ provided by Theorem \ref{main_result}. According to Eq.\,\eqref{mu0_formula} the coefficient $\mu_0=\text{const.}$ since $\mathbf{d}\mu_0 = -\iota_{\xi_0}\mathbf{d}\vartheta_0 = \omega_0^{-1}\iota_{\bm{B}_0}\iota_{\bm{B}_0}d^3\bm{x} = 0$. Therefore the first possibly non-trivial coefficient is $\mu_1 = \iota_{\xi_0}\langle \vartheta_1\rangle$. Apparently the value of $\mu_1$ at $\bm{x}\in Q$ may be written as the line integral $\mu_1(\bm{x}) = (2\pi)^{-1}\oint_{\ell_0(\bm{x})}\bm{A}_1\cdot d\bm{x} $, with $\ell_0(\bm{x})$ defined as above and oriented so that $\bm{B}(\bm{x})$ is a positive basis for $T_{\bm{x}}\ell_0(\bm{x})$. This quantity can be understood as a magnetic flux as follows. Suppose that $Q$ is path connected, fix $\bm{x}_0\in Q$, and define the constant $\mu^*_\epsilon = (2\pi)^{-1}\oint_{\ell_0(\bm{x}_0)}\bm{A}_\epsilon\cdot d\bm{x}$. The quantity $\overline{\mu}_\epsilon = \mu_\epsilon - \mu_\epsilon^*$ is an adiabatic invariant since it differs from $\mu_\epsilon$ by a constant. The first non-zero coefficient of $\overline{\mu}_\epsilon$ is $\overline{\mu}_1(\bm{x}) = \mu_1(\bm{x}) - \mu_1^*$, which is, up to a constant, the same as $\mu_1$. Let $\bm{x}(\lambda)$ be a curve in $Q$ with, $\partial_\lambda\bm{x}(\lambda)\neq 0$, $\bm{x}(0) = \bm{x}_0$, and $\bm{x}(1) = \bm{x}$. Set $R_0(\bm{x}) = \cup_{\lambda\in[0,1]}\ell_0(\bm{x}(\lambda))$. Note that $R_0(\bm{x})$ is a flux ribbon for $\bm{B}_0$ because it is a union of $\bm{B}_0$-lines. Now apply Stoke's theorem according to
\begin{align}
\overline{\mu}_1(\bm{x})& = \frac{1}{2\pi}\oint_{\ell_0(\bm{x})}\bm{A}_1\cdot d\bm{x} - \frac{1}{2\pi}\oint_{\ell_0(\bm{x}_0)}\bm{A}_1\cdot d\bm{x}\nonumber\\
 & = \frac{1}{2\pi}\int_{R_0(\bm{x})}\bm{B}_1\cdot d\bm{S},
\end{align}
where $R_0(\bm{x})$ is oriented so that $(\partial_\lambda\bm{x}(\lambda),\bm{B}(\bm{x}(\lambda))$ is a positive basis for $T_{\bm{x}(\lambda)}R_0(\bm{x})$. This shows that, up to an unimportant additive constant, $\mu_1(\bm{x})$ is equal to the (normalized) flux of $\bm{B}_1$ through any flux ribbon $R_0(\bm{x})$ whose boundary is $\partial R_0(\bm{x}) = \ell_0(\bm{x})\cup \ell_0(\bm{x}_0)$. If $\bm{B}_0$ contains a single magnetic axis $L$ then it is permissible to set $\ell_0(\bm{x}_0) = L$. In this special case $2\pi\overline{\mu}_1$ is a perturbed poloidal flux.

The approximate flux surfaces defined by the level sets of $\mu_1$ (or equivalent $\overline{\mu}_1$) determine how well field lines are confined within $Q$. The most favorable case for confinement occurs when the $\mu_1$-surfaces are nested tori contained in $Q$, but more exotic foliations may occur depending on the form of the perturbation $\bm{B}_1$. For example, let $\bm{B}_\epsilon = \bm{B}_0 + \epsilon \,\bm{B}_1$ with $\bm{B}_0 = B_0 (R_0/R)\bm{e}_{\phi}$ a tokamak vacuum field and $\bm{B}_1 = \alpha \nabla \psi\times\nabla\phi$. Here $(R,\phi,Z)$ are standard cylindrical coordinates, $\psi$ is an arbitrary function, and $\alpha$ is a constant. The function $\mu_1$ is then
\begin{align}
\mu_1(R,\phi,Z) = \alpha\frac{1}{2\pi}\int_0^{2\pi}\psi(R,\overline{\phi},Z)\,d\overline{\phi} = \alpha\langle \psi\rangle(R,Z),
\end{align}
where the angle brackets denote an azimuthal average. In this example adiabatic flux surfaces are surfaces of revolution with poloidal cross sections given by level sets of $\langle \psi\rangle(R,Z)$. If $\psi(R,Z) = (R - R_0)^2 + Z^2$ the poloidal cross sections are nested circles centered at $(R,Z) = (R_0,0)$, indicating confinement. However, if $\psi(R,Z) = (R-R_0)Z$ the cross sections are hyperbolas, indicating no such confinement.

\section{Conclusion}
In this Article we have succeeded in deriving and verifying general coordinate-independent expressions for the adiabatic invariant associated with a nearly-periodic Hamiltonian system. These formulas are summarized in Theorem \ref{main_result}. As a byproduct of our derivation we have also derived coordinate-independent expressions for the roto-rate vector associated with a possibly-non-Hamiltonian nearly-periodic system. These formulas are summarized in Theorem \ref{roto_formula_thm}. Using these formulas, adiabatic invariants may be computed more efficiently and directly than prior procedure-based methods.

A goal of future work will be to apply our results to infinite dimensional systems, especially systems with slow manifolds such as ideal magnetohydrodynamics, \cite{Burby_two_fluid_2017}, kinetic MHD, \cite{Burby_Sengupta_2017_pop}, and Lorentz loop dynamics, \cite{Burby_jmp_loops_2020}. See \cite{Burby_review_2020} for an in-depth discussion of the role of slow manifolds in plasma physics.  Just as \cite{Cotter_2004} shows that the long-time persistence of quasigeostrophic balance in non-dissipative geophysical fluid flows may be explained by finding an appropriate adiabatic invariant, adiabatic invariants in these plasma-dynamical systems may explain subtle notions such as the persistence timescale for gyrotropy in strongly-magnetized plasmas. The key concept underlying the results of \cite{Cotter_2004} is the identification of quasigeostrophic dynamics with motion on a slow manifold. We remark that the relationship between slow manifolds and quasigeostrophic balance was established previously in \cite{Lorenz_1986}, \cite{Lorenz_1987}, and \cite{Lorenz_1992}.

\section{Acknowledgements}
Research presented in this article was supported by the Los Alamos National Laboratory LDRD program under project number 20180756PRD4. Support for JS was provided by Rutherford Discovery Fellowship RDF-U001804 and Marsden Fund grant UOO1727, which are managed through the Royal Society Te Ap\=arangi. 

\appendix
\section{How-to guide for the new formulas\label{app:how_to}}
In this appendix we provide explicit details on how to practically compute the various terms in Eqs.\,\eqref{mu0_formula} - \eqref{mu3_formula}. 

As written, these formulas involve vector fields like $V$, one-forms like $\vartheta$, and two-forms like $\mathbf{d}\vartheta$. For those unfamiliar with exterior calculus, the index-notation equivalents of these objects are summarized as follows:
\begin{align}
V&\leftrightarrow V^i\quad\text{(vector fields)}\\
\vartheta &\leftrightarrow \vartheta_i\quad\text{(one-forms)}\\
\mathbf{d}\vartheta & \leftrightarrow (\mathbf{d}\vartheta)_{ij}\quad \text{(two-forms)}\\
\end{align}
The components of a two-form like $\mathbf{d}\vartheta$ are related to the components of $\vartheta$ according to
\begin{align}
(\mathbf{d}\vartheta)_{ij} &= \partial_i\vartheta_j - \partial_j\vartheta_i, \label{dext}
\end{align}
and are therefore anti-symmetric. Actually all two-forms are antisymmetric. A one-form $\vartheta$ can be contracted with a single vector field $V$ in order to produce a scalar field $\vartheta(V)$. A two-form $\mathbf{d}\vartheta$ can be contracted with two vector fields $V_1,V_2$ in order to produce a scalar field, $\mathbf{d}\vartheta(V_1,V_2)$. In index notation these contraction operations are summarized as follows,
\begin{align}
\vartheta(V)& = \vartheta_i\,V^i\\
\mathbf{d}\vartheta(V_1,V_2) & = (\mathbf{d}\vartheta)_{ij}\,V_1^i\,V_2^j.
\end{align}
A two-form $\mathbf{d}\vartheta$ may also be contracted on the left with a single vector field $V$ to obtain a one-form $\iota_V\mathbf{d}\vartheta$ given by
\begin{align}
(\iota_V\mathbf{d}\vartheta)_i = V^j(\mathbf{d}\vartheta)_{ji}.
\end{align}

As for calculus, there are two important operations that must be handled. The commutator of two vector fields $[V_1,V_2]$ is given by
\begin{align}
[V_1,V_2]^i & = V_1^j\partial_j V_2^i - V_2^j\partial_j V_1^i.
\end{align}
The Lie derivative of a scalar $\mu$ along a vector field $V$, $\mathcal{L}_V\mu$, is given by
\begin{align}
\mathcal{L}_V\mu = V^i\partial_i\mu.
\end{align}

One non-trivial operation that must be performed when evaluating the formulas \eqref{mu0_formula} - \eqref{mu3_formula} is the $U(1)$-average, denoted $\langle \cdot\rangle$. In order to carry out this operation analytically it is necessary to have an explicit expression for the phase-space mappings $\zeta_\theta:z\mapsto \exp(\theta\,\xi_0)(z)\equiv \zeta_\theta(z)$. Practically-speaking, the value of $\exp(\theta\,\xi_0)(z)$ is $z(\theta)$, where $z(\theta)$ is the unique solution of the ODE $\partial_\theta z(\theta) = \xi_0(z(\theta))$ with $z(0) = z$. Therefore knowledge of the mapping $\zeta_\theta$ is tantamount to knowledge of the general solution of the ODE defined by $\xi_0$. Notice that $\xi_0$-trajectories are just reparameterizations of the leading-order dynamical trajectories $\dot{z} = V_0(z)$. It may be helpful to find a coordinate system where $\xi_0$ is simple in order to find an explicit expression $\zeta_\theta$. Two examples of $ \zeta_\theta(z)$ were given in the text. Once $\zeta_\theta(z)$ is known, the $U(1)$-average can be applied to any tensor, in particular vector fields $V$ and one-forms $\vartheta$. In components, the relevant formulas are
\begin{align}
\langle V\rangle^i(z) & = \fint V^j(\zeta_\theta(z))\,\partial_j\zeta^i_{-\theta}(\zeta_\theta(z)) \,d\theta \\
\langle \vartheta\rangle_i(z) & =\fint \vartheta_j(\zeta_\theta(z))\,\partial_i\zeta_\theta^j(z)\,d\theta.\label{oneformaverage}
\end{align}
Because the $U(1)$-average commutes with the exterior derivative, $\mathbf{d}$, the average of a two-form like $\mathbf{d}\vartheta$ may be computed by first finding $\langle \vartheta\rangle$ using Eq.\,\eqref{oneformaverage} and then computing $\mathbf{d}\langle \vartheta\rangle = \langle \mathbf{d}\vartheta \rangle$ using Eq.\,\eqref{dext}.

The least non-trivial operation encountered in Eqs.\,\eqref{mu0_formula}-\eqref{mu3_formula} is the operator $I_0 = (\mathcal{L}_{V_0})^{-1}$. In a rough sense this operator ``integrates along unperturbed orbits." The easiest way to compute $I_0\widetilde{V}$ for a fluctuating vector field $\widetilde{V}$ is to use the following Fourier-series-based trick. First compute $\widetilde{V}^\theta = \zeta_\theta^*\widetilde{V}$, which in components is given by
\begin{align}
\widetilde{V}^{\theta i}(z) = \widetilde{V}^j(\zeta_\theta(z))\partial_j\zeta^i_{-\theta}(\zeta_\theta(z)).
\end{align}
Because $\zeta_\theta(z)$ is $2\pi$-periodic in $\theta$ the component $\widetilde{V}^{\theta i}(z)$ must have a Fourier series expansion
\begin{align}
\widetilde{V}^{\theta i}(z) = \sum_{n\in\mathbb{Z}} \widetilde{V}^i_n(z)\,e^{in\theta},
\end{align}
where the Fourier coefficients $\widetilde{V}^i_n(z)$ are complex-valued functions of $z$. We can use these Fourier coefficients to help solve the problem $\mathcal{L}_{V_0}\widetilde{U} = \widetilde{V}$ for $\widetilde{U}$ given $\widetilde{V}$. Note that finding $\widetilde{U}$ is equivalent to finding $I_0\widetilde{V}$. To see how, apply the pullback $\zeta_\theta^*$ to the equation $\mathcal{L}_{V_0}\widetilde{U} = \widetilde{V}$ to obtain $\mathcal{L}_{V_0}\widetilde{U}^\theta = \widetilde{V}^\theta$. The LHS of this equation simplifies considerably by noting $\mathcal{L}_{V_0}\widetilde{U}^\theta = -\mathcal{L}_{\widetilde{U}^\theta}V_0 =-\mathcal{L}_{\widetilde{U}^\theta}(\omega_0 \xi_0) = -\xi_0\,\mathcal{L}_{\widetilde{U}^\theta}\omega_0 +\omega_0\mathcal{L}_{\xi_0}\widetilde{U}^\theta = -\xi_0\,\mathcal{L}_{\widetilde{U}^\theta}\omega_0 +\omega_0\partial_\theta\widetilde{U}^\theta$. In components this identity may be written 
\begin{align}
(\mathcal{L}_{V_0}\widetilde{U}^\theta)^i & = -\widetilde{U}^{\theta j}\partial_j\omega_0\,\xi^i + \omega_0\,\partial_\theta\widetilde{U}^{\theta i}. 
\end{align}
Therefore the Fourier components of the equation $\mathcal{L}_{V_0}\widetilde{U}^\theta = \widetilde{V}^\theta$ read
\begin{align}
-\widetilde{U}^{ j}_n\partial_j\omega_0\,\xi^i + i\,n\,\omega_0\widetilde{U}^{ i}_n = \widetilde{V}^i_n.
\end{align}
This infinite sequence of algebraic equations may be solved by hand, leading to the following formula for $\widetilde{U}^i_n$,
\begin{align}
\widetilde{U}^i_n = \xi^i \, \frac{\widetilde{V}^j_n \partial_j\omega_0}{(in\omega_0)^2} + \frac{\widetilde{V}_n^i}{in\omega_0}.
\end{align}
Since $\widetilde{U}^\theta$ is equal to $\widetilde{U} = I_0\widetilde{V}$ when $\theta = 0$, it follows that $I_0$ is determined by the formula
\begin{align}
(I_0\widetilde{V})^i = \sum_{n\in\mathbb{Z}} \xi^i \, \frac{\widetilde{V}^j_n \partial_j\omega_0}{(in\omega_0)^2} +\sum_{n\in\mathbb{Z}} \frac{\widetilde{V}_n^i}{in\omega_0},
\end{align}
which can always be used to compute $I_0$.

\section{Derivation of the formula for $\xi_3$\label{appA}}
In the proof of Theorem \ref{roto_formula_thm} we derived the general formula \eqref{xi3_general_formula} for $\widetilde{\xi}_3$, but did not show how that formula can be manipulated in order to produce Eq.\,\eqref{xi3_formula} for $\xi_3$. In this Appendix we will complete the demonstration. The required manipulations are based on recursive applications of the Leibniz rule for the bracket of vector fields, i.e. the Jacobi identity. In spirit, such identities are similar to the well-known recursive Leibniz identity
\[
e^{x}\sin x = - \frac{d}{dx}(e^x \cos x) + e^x \cos x = -\frac{d}{dx} (e^x\cos x - e^x\sin x) - e^x\sin x.
\]

Starting from the general formula for $\widetilde{\xi}_3$ from Theorem \ref{roto_formula_thm},

\begin{align}
\widetilde{\xi}_3 
& = \Lxio I_0\widetilde{V}_3 + I_0[\Lxio I_0\widetilde{V}_1,V_2]^{\text{osc}}  + I_0[\Lxio I_0\widetilde{V}_2,V_1]^{\text{osc}} \nonumber\\
&+ I_0\bigg[ \Lxio I_0[ I_0\widetilde{V}_1, \langle V_1\rangle] + \frac{1}{2}\Lxio I_0[I_0\widetilde{V}_1,\widetilde{V}_1]^{\text{osc}} + \frac{1}{2}[\Lxio I_0\widetilde{V}_1, I_0\widetilde{V}_1],V_1\bigg]^{\text{osc}} \nonumber\\
& = \Lxio I_0\widetilde{V}_3 + \Lxio I_0[ I_0\widetilde{V}_1,\langle V_2\rangle]^{\text{osc}}  + \Lxio I_0[ I_0\widetilde{V}_2,\langle V_1\rangle]^{\text{osc}} \nonumber\\
&+  \frac{1}{2}\Lxio I_0[I_0\widetilde{V}_2,\widetilde{V}_1]^{\text{osc}} + \frac{1}{2} \Lxio  I_0[I_0\widetilde{V}_1,\widetilde{V}_2]^{\text{osc}}\nonumber\\
&+  \frac{1}{2} [\Lxio I_0 \widetilde{V}_1,I_0\widetilde{V}_2]^{\text{osc}}+  \frac{1}{2} [\Lxio I_0 \widetilde{V}_2,I_0\widetilde{V}_1]^{\text{osc}}\nonumber\\
&+ I_0\bigg[ \Lxio I_0[ I_0\widetilde{V}_1, \langle V_1\rangle] + \frac{1}{2}\Lxio I_0[I_0\widetilde{V}_1,\widetilde{V}_1]^{\text{osc}} + \frac{1}{2}[\Lxio I_0\widetilde{V}_1, I_0\widetilde{V}_1],V_1\bigg]^{\text{osc}}\nonumber\\
& = \Lxio \bigg(I_0\widetilde{V}_3 +  I_0[ I_0\widetilde{V}_1,\langle V_2\rangle]^{\text{osc}} +I_0[ I_0\widetilde{V}_2,\langle V_1\rangle]^{\text{osc}}+\frac{1}{2} I_0[I_0\widetilde{V}_2,\widetilde{V}_1]^{\text{osc}} + \frac{1}{2}I_0[I_0\widetilde{V}_1,\widetilde{V}_2]^{\text{osc}} \bigg)  \nonumber\\
& +  \frac{1}{2} [\Lxio I_0 \widetilde{V}_1,I_0\widetilde{V}_2]^{\text{osc}}+  \frac{1}{2} [\Lxio I_0 \widetilde{V}_2,I_0\widetilde{V}_1]^{\text{osc}}\nonumber\\
&+ I_0\bigg[ \Lxio I_0[ I_0\widetilde{V}_1, \langle V_1\rangle] + \frac{1}{2}\Lxio I_0[I_0\widetilde{V}_1,\widetilde{V}_1]^{\text{osc}} + \frac{1}{2}[\Lxio I_0\widetilde{V}_1, I_0\widetilde{V}_1],\langle V_1\rangle\bigg]^{\text{osc}}\nonumber\\
&+ I_0\bigg[ \Lxio I_0[ I_0\widetilde{V}_1, \langle V_1\rangle] + \frac{1}{2}\Lxio I_0[I_0\widetilde{V}_1,\widetilde{V}_1]^{\text{osc}} + \frac{1}{2}[\Lxio I_0\widetilde{V}_1, I_0\widetilde{V}_1],\widetilde{V}_1\bigg]^{\text{osc}}\nonumber\\
& = \Lxio \bigg(I_0\widetilde{V}_3 +  I_0[ I_0\widetilde{V}_1,\langle V_2\rangle]^{\text{osc}} +I_0[ I_0\widetilde{V}_2,\langle V_1\rangle]^{\text{osc}}+\frac{1}{2} I_0[I_0\widetilde{V}_2,\widetilde{V}_1]^{\text{osc}} + \frac{1}{2}I_0[I_0\widetilde{V}_1,\widetilde{V}_2]^{\text{osc}} \bigg)  \nonumber\\
&+ \Lxio I_0\bigg[  I_0[ I_0\widetilde{V}_1, \langle V_1\rangle] + \frac{1}{2} I_0[I_0\widetilde{V}_1,\widetilde{V}_1]^{\text{osc}} ,\langle V_1\rangle\bigg]\nonumber\\
& +  \frac{1}{2} [\Lxio I_0 \widetilde{V}_1,I_0\widetilde{V}_2]^{\text{osc}}+  \frac{1}{2} [\Lxio I_0 \widetilde{V}_2,I_0\widetilde{V}_1]^{\text{osc}}+  \frac{1}{2}I_0\bigg[[\Lxio I_0\widetilde{V}_1, I_0\widetilde{V}_1]^{\text{osc}},\langle V_1\rangle\bigg]\nonumber\\
&+ I_0\bigg[ \Lxio I_0[ I_0\widetilde{V}_1, \langle V_1\rangle] + \frac{1}{2}\Lxio I_0[I_0\widetilde{V}_1,\widetilde{V}_1]^{\text{osc}} + \frac{1}{2}[\Lxio I_0\widetilde{V}_1, I_0\widetilde{V}_1],\widetilde{V}_1\bigg]^{\text{osc}}\nonumber\\
& = \Lxio \bigg(I_0\widetilde{V}_3 +  I_0[ I_0\widetilde{V}_1,\langle V_2\rangle]^{\text{osc}} +I_0[ I_0\widetilde{V}_2,\langle V_1\rangle]^{\text{osc}}+\frac{1}{2} I_0[I_0\widetilde{V}_2,\widetilde{V}_1]^{\text{osc}} + \frac{1}{2}I_0[I_0\widetilde{V}_1,\widetilde{V}_2]^{\text{osc}} \bigg)  \nonumber\\
&+ \Lxio I_0\bigg[  I_0[ I_0\widetilde{V}_1, \langle V_1\rangle] + \frac{1}{2} I_0[I_0\widetilde{V}_1,\widetilde{V}_1]^{\text{osc}} ,\langle V_1\rangle\bigg]\nonumber\\
& +  \frac{1}{2} [\Lxio I_0 \widetilde{V}_1,I_0\widetilde{V}_2]^{\text{osc}}+  \frac{1}{2} [\Lxio I_0 \widetilde{V}_2,I_0\widetilde{V}_1]^{\text{osc}}+  \frac{1}{2}I_0\bigg[[\Lxio I_0\widetilde{V}_1, I_0\widetilde{V}_1]^{\text{osc}},\langle V_1\rangle\bigg]\nonumber\\
&+ I_0\bigg[ \Lxio I_0[ I_0\widetilde{V}_1, \langle V_1\rangle] + I_0[\Lxio I_0\widetilde{V}_1,\widetilde{V}_1]^{\text{osc}},\widetilde{V}_1\bigg]^{\text{osc}}\nonumber\\
& = \Lxio \bigg(I_0\widetilde{V}_3 +  I_0[ I_0\widetilde{V}_1,\langle V_2\rangle]^{\text{osc}} +I_0[ I_0\widetilde{V}_2,\langle V_1\rangle]^{\text{osc}}+\frac{1}{2} I_0[I_0\widetilde{V}_2,\widetilde{V}_1]^{\text{osc}} + \frac{1}{2}I_0[I_0\widetilde{V}_1,\widetilde{V}_2]^{\text{osc}} \bigg)  \nonumber\\
&+ \Lxio I_0\bigg[  I_0[ I_0\widetilde{V}_1, \langle V_1\rangle] + \frac{1}{2} I_0[I_0\widetilde{V}_1,\widetilde{V}_1]^{\text{osc}} ,\langle V_1\rangle\bigg]  +\frac{1}{2}\Lxio I_0[I_0\widetilde{V}_1,[I_0\widetilde{V}_1,\langle V_1\rangle]]^{\text{osc}}\nonumber\\
& +  \frac{1}{2} [\Lxio I_0 \widetilde{V}_1,I_0\widetilde{V}_2]^{\text{osc}}+  \frac{1}{2} [\Lxio I_0 \widetilde{V}_2,I_0\widetilde{V}_1]^{\text{osc}}+ [\Lxio I_0[I_0\widetilde{V}_1,\langle V_1\rangle ],I_0\widetilde{V}_1]^{\text{osc}}\nonumber\\
&+[I_0[\Lxio I_0\widetilde{V}_1,\widetilde{V}_1]^{\text{osc}},I_0\widetilde{V}_1]^{\text{osc}}+\frac{1}{3}[I_0\widetilde{V}_1,[\Lxio I_0\widetilde{V}_1,I_0\widetilde{V}_1]^{\text{osc}}]^{\text{osc}}\nonumber\\
& -\frac{1}{3}I_0[\langle [\Lxio I_0\widetilde{V}_1,I_0\widetilde{V}_1] \rangle,\widetilde{V}_1] + \frac{1}{3}I_0[\langle [\Lxio I_0\widetilde{V}_1,\widetilde{V}_1]\rangle, I_0\widetilde{V}_1]\nonumber\\
& + \frac{1}{3} \Lxio I_0[I_0\widetilde{V}_1,[I_0\widetilde{V}_1,\widetilde{V}_1]]^{\text{osc}}.
\end{align}

where we have used the identities
\begin{align}
I_0[\Lxio I_0\widetilde{V}_1,V_2]^{\text{osc}}  + I_0[\Lxio I_0\widetilde{V}_2,V_1]^{\text{osc}}& = \Lxio I_0[ I_0\widetilde{V}_1,\langle V_2\rangle]^{\text{osc}}  + \Lxio I_0[ I_0\widetilde{V}_2,\langle V_1\rangle]^{\text{osc}} \nonumber\\
&+  \frac{1}{2}\Lxio I_0[I_0\widetilde{V}_2,\widetilde{V}_1]^{\text{osc}} + \frac{1}{2} \Lxio  I_0[I_0\widetilde{V}_1,\widetilde{V}_2]^{\text{osc}}\nonumber\\
&+  \frac{1}{2} [\Lxio I_0 \widetilde{V}_1,I_0\widetilde{V}_2]^{\text{osc}}+  \frac{1}{2} [\Lxio I_0 \widetilde{V}_2,I_0\widetilde{V}_1]^{\text{osc}},
\end{align}
\begin{align}
\Lxio I_0[I_0\widetilde{V}_1,\widetilde{V}_1]^{\text{osc}} = 2I_0[\Lxio I_0\widetilde{V}_1,\widetilde{V}_1]^{\text{osc}} - [\Lxio I_0\widetilde{V}_1,I_0\widetilde{V}_1]^{\text{osc}},\label{double_id}
\end{align}
\begin{align}
I_0[\Lxio I_0[I_0\widetilde{V}_1,\langle V_1\rangle],\widetilde{V}_1]^{\text{osc}} & = \frac{1}{2}\Lxio I_0[I_0\widetilde{V}_1,[I_0\widetilde{V}_1,\langle V_1\rangle]]^{\text{osc}} + [\Lxio I_0[I_0\widetilde{V}_1,\langle V_1\rangle ],I_0\widetilde{V}_1]^{\text{osc}}\nonumber\\
& - \frac{1}{2}I_0[[\Lxio I_0\widetilde{V}_1,I_0\widetilde{V}_1]^{\text{osc}},\langle V_1\rangle],
\end{align}
and
\begin{align}
I_0[I_0[\Lxio I_0 \widetilde{V}_1,\widetilde{V}_1]^{\text{osc}},\widetilde{V}_1]^{\text{osc}} & = [I_0[\Lxio I_0\widetilde{V}_1,\widetilde{V}_1]^{\text{osc}},I_0\widetilde{V}_1]^{\text{osc}}+\frac{1}{3}[I_0\widetilde{V}_1,[\Lxio I_0\widetilde{V}_1,I_0\widetilde{V}_1]^{\text{osc}}]^{\text{osc}}\nonumber\\
& -\frac{1}{3}I_0[\langle [\Lxio I_0\widetilde{V}_1,I_0\widetilde{V}_1] \rangle,\widetilde{V}_1] + \frac{1}{3}I_0[\langle [\Lxio I_0\widetilde{V}_1,\widetilde{V}_1]\rangle, I_0\widetilde{V}_1]\nonumber\\
& + \frac{1}{3} \Lxio I_0[I_0\widetilde{V}_1,[I_0\widetilde{V}_1,\widetilde{V}_1]]^{\text{osc}}.
\end{align}
Continuing to group similar terms together,
\begin{align}
\widetilde{\xi}_3& = \Lxio \bigg(I_0\widetilde{V}_3 +  I_0[ I_0\widetilde{V}_1,\langle V_2\rangle]^{\text{osc}} +I_0[ I_0\widetilde{V}_2,\langle V_1\rangle]^{\text{osc}}+\frac{1}{2} I_0[I_0\widetilde{V}_2,\widetilde{V}_1]^{\text{osc}} + \frac{1}{2}I_0[I_0\widetilde{V}_1,\widetilde{V}_2]^{\text{osc}} \bigg)  \nonumber\\
&+ \Lxio I_0\bigg[  I_0[ I_0\widetilde{V}_1, \langle V_1\rangle] + \frac{1}{2} I_0[I_0\widetilde{V}_1,\widetilde{V}_1]^{\text{osc}} ,\langle V_1\rangle\bigg]  +\frac{1}{2}\Lxio I_0[I_0\widetilde{V}_1,[I_0\widetilde{V}_1,\langle V_1\rangle]]^{\text{osc}}\nonumber\\
& +  \frac{1}{2} [\Lxio I_0 \widetilde{V}_1,I_0\widetilde{V}_2]^{\text{osc}}+  \frac{1}{2} [\Lxio I_0 \widetilde{V}_2,I_0\widetilde{V}_1]^{\text{osc}}+ [\Lxio I_0[I_0\widetilde{V}_1,\langle V_1\rangle ],I_0\widetilde{V}_1]^{\text{osc}}\nonumber\\
&+[I_0[\Lxio I_0\widetilde{V}_1,\widetilde{V}_1]^{\text{osc}},I_0\widetilde{V}_1]^{\text{osc}}+\frac{1}{3}[I_0\widetilde{V}_1,[\Lxio I_0\widetilde{V}_1,I_0\widetilde{V}_1]^{\text{osc}}]^{\text{osc}}\nonumber\\
& -\frac{1}{3}I_0[\langle [\Lxio I_0\widetilde{V}_1,I_0\widetilde{V}_1] \rangle,\widetilde{V}_1] + \frac{1}{3}I_0[\langle [\Lxio I_0\widetilde{V}_1,\widetilde{V}_1]\rangle, I_0\widetilde{V}_1]\nonumber\\
& + \frac{1}{3} \Lxio I_0[I_0\widetilde{V}_1,[I_0\widetilde{V}_1,\widetilde{V}_1]]^{\text{osc}}\nonumber\\
& = \Lxio\bigg(I_0\widetilde{V}_3 +  I_0[ I_0\widetilde{V}_1,\langle V_2\rangle]^{\text{osc}} +I_0[ I_0\widetilde{V}_2,\langle V_1\rangle]^{\text{osc}}\nonumber\\
&\quad \hphantom{\Lxio}+\frac{1}{2} I_0[I_0\widetilde{V}_2,\widetilde{V}_1]^{\text{osc}}+ \frac{1}{2}I_0[I_0\widetilde{V}_1,\widetilde{V}_2]^{\text{osc}}+\frac{1}{3}I_0[I_0\widetilde{V}_1,[I_0\widetilde{V}_1,\widetilde{V}_1]]^{\text{osc}}\nonumber\\
&\quad\hphantom{\Lxio}+I_0[  I_0[ I_0\widetilde{V}_1, \langle V_1\rangle] ,\langle V_1\rangle]
+ \frac{1}{2}I_0[    I_0[I_0\widetilde{V}_1,\widetilde{V}_1]^{\text{osc}} ,\langle V_1\rangle]\nonumber\\
 &\quad\hphantom{\Lxio}+\frac{1}{2}I_0[I_0\widetilde{V}_1,[I_0\widetilde{V}_1,\langle V_1\rangle]]^{\text{osc}}\bigg)\nonumber\\
 & +  \frac{1}{2} [\Lxio I_0 \widetilde{V}_1,I_0\widetilde{V}_2]^{\text{osc}}+  \frac{1}{2} [\Lxio I_0 \widetilde{V}_2,I_0\widetilde{V}_1]^{\text{osc}}+ [\Lxio I_0[I_0\widetilde{V}_1,\langle V_1\rangle ],I_0\widetilde{V}_1]^{\text{osc}}\nonumber\\
 &+[I_0[\Lxio I_0\widetilde{V}_1,\widetilde{V}_1]^{\text{osc}},I_0\widetilde{V}_1]^{\text{osc}}+I_0[\langle [\Lxio I_0\widetilde{V}_1,\widetilde{V}_1]\rangle, I_0\widetilde{V}_1]\nonumber\\
 &+\frac{1}{3}[I_0\widetilde{V}_1,[\Lxio I_0\widetilde{V}_1,I_0\widetilde{V}_1]]^{\text{osc}}.
\end{align}
Combining this expression with Eq.\,\eqref{xi3_mean} for $\langle \xi_3\rangle$ and again using the identity \eqref{double_id} gives Eq.\,\eqref{xi3_formula}, as desired.



\bibliographystyle{jpp}

\bibliography{cumulative_bib_file}

\end{document}